\documentclass[table,svgnames,dvipsnames]{article}

\usepackage{soul}
\usepackage{url}
\usepackage[hidelinks]{hyperref} 
\usepackage[small]{caption}
\usepackage{graphicx}
\usepackage{booktabs}
\usepackage{amsmath,multirow}
\usepackage{amsthm,thm-restate,thmtools}
\usepackage{amssymb}
\usepackage{enumerate}
\usepackage{xcolor,xfrac}
\usepackage[english]{babel}
\usepackage{charter}
\usepackage{authblk}
\usepackage{framed}
\usepackage[normalem]{ulem}  
\usepackage{tikz}
\usetikzlibrary{shapes.arrows}
\usetikzlibrary{shapes.geometric} 
\usepackage{amsfonts} 
\usepackage[T1]{fontenc}
\usepackage[utf8]{inputenc}
\usepackage[top=1 in,bottom=1in, left=1 in, right=1 in]{geometry}  
\usepackage{enumitem}
\usepackage{tabularx}
\hypersetup{
     colorlinks   = true,
     linkcolor    = red, 
     urlcolor     = teal, 
	 citecolor    = blue 
}
\usepackage{mathtools}
\usepackage{nicefrac} 
\usepackage{pifont}
\usepackage{bm}
\usepackage{dsfont}
\usepackage{algorithm}
\usepackage{graphicx}
\usepackage[noend]{algorithmic}
\usepackage{subcaption}
\usepackage{gensymb}
\usepackage{pgfplots}
\usepackage{bbm}
\usepackage{comment}
\usepackage{xr}
\usepackage[sort,numbers]{natbib}

\usepackage{tkz-graph}

\usepackage{thm-restate}
\usepackage{todonotes}
\usepackage{enumerate}

\newcommand{\I}{\mathcal{I}}
\newcommand{\A}{\mathbf{a}}

\newtheorem{theorem}{Theorem}[section]
\newtheorem{proposition}[theorem]{Proposition}
\newtheorem{lemma}[theorem]{Lemma}

\theoremstyle{definition}
\newtheorem{definition}[theorem]{Definition}

\theoremstyle{remark}

\newtheorem{example}[theorem]{Example}

\DeclareMathOperator{\OPT}{\mathrm{OPT}}

\allowdisplaybreaks

\usepackage{newfloat}
\usepackage{listings}
\DeclareCaptionStyle{ruled}{labelfont=normalfont,labelsep=colon,strut=off} 
\floatstyle{ruled}
\newfloat{listing}{tb}{lst}{}
\floatname{listing}{Listing}

\usepackage{booktabs}

\title{Online House Allocation with Subsidy}

\author[1]{Nicholas Teh}
\author[1]{Saar Cohen}
\author[2]{Karen Frilya Celine}
\author[3]{Benjamin Yu}
\author[1]{\authorcr Michael J. Wooldridge}

\affil[1]{University of Oxford, United Kingdom}
\affil[2]{National University of Singapore, Singapore}
\affil[3]{Carleton University, Canada}
\date{\empty}

\begin{document}

\maketitle

\begin{abstract}
House allocation is a fundamental problem in which each agent is assigned exactly one house. While the classical model assumes that all houses are available before the allocation is computed, many practical settings require decisions to be made as houses become available over time. We introduce the \emph{online house allocation problem}, where houses arrive sequentially and the algorithm must maintain an allocation without knowledge of future arrivals. Unlike online fair division, the one-house-per-agent constraint makes \emph{recourse} an inherent part of the problem, as accepting a newly arrived house may require reassigning previously allocated houses. We study online house allocation under \emph{subsidy-based fairness}, where monetary subsidies eliminate envy among agents. We show that envy-freeability can always be maintained online using bounded recourse and that reassignment chains of length linear in the number of agents are unavoidable in the worst case. In contrast, minimizing the total subsidy is fundamentally harder: no deterministic online algorithm against an adaptive adversary, and no randomized online algorithm against a non-adaptive adversary, admits a bounded competitive ratio, even for two agents and four houses. We complement these impossibilities by showing that exact online subsidy minimization is possible whenever there is at most one extra house beyond the number of agents, and that this guarantee is best possible with respect to the number of extra houses. Finally, we develop learning-augmented algorithms that recover the offline optimum under accurate predictions while providing explicit robustness guarantees when predictions are inaccurate.
\end{abstract}

\section{Introduction}
In many housing systems, apartments do not become available all at once. New units become available over time as buildings are completed, existing tenants move out, or vacant apartments are refurbished. Meanwhile, households on the waiting list often need housing immediately, making it impractical to postpone allocation decisions until every future apartment is known. Housing authorities must therefore assign apartments as they become available, despite having no knowledge of future vacancies. Moreover, when a particularly desirable apartment later enters the system, it may be preferable to reassign it to a household that has already received housing, triggering a sequence of reallocations among previously assigned apartments. Similar challenges arise in university accommodation, refugee resettlement, and other settings where housing resources become available incrementally over time.

The classical abstraction of these applications is the \emph{house allocation problem}, a fundamental model in the economics of resource allocation~\citep{hylland1979positions,zhou1990house,abdulkadirouglu1998random}. In its traditional \emph{offline} form, a set of $m$ indivisible houses is allocated to $n\le m$ agents with subjective preferences, so that each agent receives exactly one house. A central assumption in this literature is that all houses and the agents' preferences are known before the allocation is computed, allowing the algorithm to optimize over the complete instance. In many practical settings, however, this assumption is unrealistic: houses become available sequentially, and allocation decisions often cannot be postponed until the entire housing stock is known. This motivates studying house allocation in an \emph{online} setting, where decisions must be made as new houses arrive.

In this paper, we therefore introduce the \emph{online house allocation problem}, in which houses arrive one at a time. Upon each arrival, the agents' values for the new house are revealed, and the algorithm must decide how to update the current allocation without knowing what houses will arrive in the future. As in the offline problem, each agent ultimately receives \textit{exactly one} house, so an allocation is naturally viewed as a \textit{matching} between agents and houses. Unlike offline house allocation, however, accepting a newly arrived house may require reassigning previously allocated houses, giving rise to a \textit{chain} of reallocations that ends when a house is either assigned to a previously unassigned agent or discarded permanently. Since discarded houses cannot later be recovered, these decisions must carefully balance the current allocation against an uncertain future. We measure the amount of recourse available to an online algorithm by its \emph{flexibility}, namely the maximum length of such a reassignment chain.

Although related to online fair division~\citep{AleksandrovWalsh2020Survey,AleksandrovEtAl2015}, online house allocation presents fundamentally different challenges. The key distinction is the one-house-per-agent constraint: unlike in online fair division, assigning a newly arrived house may require modifying the current matching rather than simply extending an agent's bundle. Consequently, algorithms must reason not only about the arriving house, but also about how to update the existing allocation.

A natural question is what guarantees an online house allocation should satisfy. A central fairness notion for house allocation is \emph{envy-freeness}: no agent should prefer the house assigned to another agent over her own~\citep{foley1967resource,varian1974equity}. Unfortunately, an envy-free allocation need not exist, even in the offline setting. A natural relaxation is to allow \emph{subsidies}, i.e., monetary transfers that compensate agents for any remaining envy~\citep{halpern2019fair,choo2024housesubsidy}. An allocation that can be made envy-free using subsidies is called \emph{envy-freeable}, and among all envy-freeable allocations, one naturally seeks those requiring the least total subsidy. Besides fairness, we also study standard notions of economic efficiency. A central question in this paper is therefore how much \emph{recourse} is needed to maintain fair and efficient allocations under the matching constraint while making irrevocable decisions without knowledge of future arrivals.

\subsection{Our Contributions}

We study online house allocation through two closely related questions:
how much \emph{recourse} is needed to update an allocation as houses
arrive, and how much \emph{subsidy} is needed to ensure fairness. Our
results characterize what is achievable online, identify fundamental
limitations, and show how predictions can overcome some of them.

\paragraph{Maintaining envy-freeability online.}
We show that envy-freeability can always be maintained despite the
online arrival of houses and the matching constraint. We design an
online algorithm that maintains a welfare-maximizing allocation after
every arrival while updating the allocation through a single
reassignment chain. Thus, fairness can be preserved online using only a
simple and natural form of recourse.

\paragraph{The cost of recourse.}
We identify the recourse required for online house allocation. We
show that maintaining envy-freeability requires reassignment chains
linear in the number of agents, and that this bound is asymptotically
tight. For welfare-maximizing allocations, we obtain a tight characterization:
in the worst case, all $n$ agents may need to change assignment. Thus, long reassignment chains are inherent to the
online problem rather than an artifact of our algorithm.

\paragraph{Limits of online subsidy minimization.}
Although envy-freeability can always be maintained online, minimizing
the required subsidy is considerably harder. We show that
welfare-maximizing allocations may require arbitrarily larger subsidies
than subsidy-minimizing ones, and prove that no online algorithm
achieves a bounded competitive ratio for the minimum subsidy, even with
only two agents and four houses. This impossibility persists even if
the algorithm may freely reoptimize the allocation using all currently
available houses, showing that multiplicative competitive analysis is
inadequate for online subsidy minimization.

\paragraph{Positive subsidy guarantees.}
We complement these impossibility results with positive algorithmic
guarantees. We show that exact online subsidy minimization is possible
whenever there is at most one more house than agents, and that this is best possible with respect to the number of extra houses.
Beyond this threshold, exact minimization becomes impossible, but we
obtain additive guarantees, including the optimal additive loss of
$1/2$ for two agents and an additive bound of $(n-1)^2/n$ for $n$ agents with
identical utilities.

\paragraph{Learning-augmented online house allocation.}
Finally, we investigate whether predictions can overcome the
limitations of purely online algorithms. Rather than predicting welfare
values, our prediction models capture the matching structure of house
allocation by identifying the houses that should remain available for
the final allocation. We design algorithms that recover the offline
optimum when these predictions are correct while providing robustness
guarantees when predictions are inaccurate. We complement these results
with lower bounds showing that our guarantees are
nearly optimal.

\subsection{Related Work}
\label{sec:related work}
Due to space limitations, we focus on the closely related work and defer a broader discussion to Appendix~\ref{supp:related}.

\paragraph{House allocation and subsidy-based fairness.}
House allocation is a fundamental problem in which a set of indivisible houses is assigned to agents. Envy-freeness has been studied from both structural and computational perspectives \citep{gan2019envy,kamiyama2021complexity}, while more recent work considers relaxations such as envy minimization \citep{MadathilMisraSethia2025,HosseiniRoySethia2026}. Our fairness notion is closest to subsidy-based fairness. \citet{halpern2019fair} introduced envy-freeability for indivisible goods, and \citet{choo2024housesubsidy} studied minimum-subsidy envy-free house allocation, showing NP-hardness in general together with tractable special cases. We build on this \textit{offline} framework, but present the \textit{online} setting where houses arrive sequentially and the algorithm must make allocation and recourse decisions without knowing future arrivals.

\paragraph{Online matching and recourse.}
Our model is also related to online bipartite matching. Classical online matching requires irrevocable decisions \citep{KarpVaziraniVazirani1990}, while later work allows limited recourse to maintain high-quality matchings \citep{BernsteinHolmRotenberg2019,ShinKimLeeAn2020}. Unlike these works, we study recourse as a means of preserving fairness instead of maximizing the size or weight of a matching.
Fairness has also been studied in online matching \citep{HosseiniHuangIgarashiShah2024,hajiaghayi2024fairness}. These works consider irrevocable allocation of arriving items under matching constraints and seek approximate fairness across agent classes. In contrast, we study individual subsidy-based fairness under cardinal utilities and characterize the recourse needed to maintain it. 

\paragraph{Dynamic house allocation and envy-free matching.}
Prior work also uses the term \emph{online house allocation} for dynamic exchange settings in which agents arrive and depart over time and may trade houses \citep{JalilzadehPlankenDeWeerdt2010,Lesca2025}. Our model instead has a fixed set of agents, exogenously arriving houses, and a subsidy-based fairness objective.
Our work is also related to envy-free matchings and their reconfiguration \citep{AignerHorevSegalHalevi2022,ItoEtAl2025}. Unlike these offline settings, we study online arrivals and characterize the recourse needed to maintain subsidy-based fairness.

\section{Preliminaries}
\label{sec:prelim}
For any positive integer $z$, let $[z] \coloneq \{1,\dots,z\}$, and let
$[0] \coloneq \varnothing$.
We consider an online variant of the \textit{house allocation problem}, in which houses are allocated to agents as they arrive, one at a time.
Formally, an instance consists of a fixed set $N=[n]$ of
\emph{agents}, a finite set $H=\{h_1,\dots,h_m\}$ of $m$  houses, presented in the arrival order $h_1,\dots,h_m$, where $m\ge n$,
and for each agent $i\in N$ a non-negative utility function $u_i:H\cup\{\varnothing\}\to\mathbb{R}_{\ge 0}$ where $u_i(\varnothing)=0$ for each agent $i \in N$. 
However, the value of~$m$, the future houses, and the agents' utilities for future houses are not known to the online algorithm; they are used only in the ex post description of the instance and in the analysis.
In particular, house $h_t$ arrives at round $t \in [m]$, and the utility $u_i(h_t)$ for any agent $i$ is only known from round $t$ onward.
At any round $t \in [m]$, let $H^{t} \coloneq \{ h_1, \dots, h_t \}$ denote the set of houses that have arrived by round $t$; moreover, $H^{0}=\varnothing$. In particular, $H^{m} = H$ is the full set once all houses have arrived. When all agents have identical valuations, we write $u$ for the common
valuation function.

A \emph{(partial) allocation} is a vector
$\A=(a_1,\dots,a_n)\in(H\cup\{\varnothing\})^N$.
If $a_i=h\in H$, then house $h$ is assigned to agent $i$;
if $a_i=\varnothing$, then agent $i$ receives no house.
An allocation $\A$ is \emph{feasible} if no house is assigned to
two distinct agents, i.e., for all distinct $i,j\in N$, if
$a_i\neq\varnothing$ and $a_j\neq\varnothing$, then $a_i\neq a_j$.
An allocation $\A$ is \emph{complete} if
$a_i\neq\varnothing$ for every $i\in N$.
An allocation $\A$ can equivalently be viewed as a matching between $N$ and $H$ given by the edge set $E(\A) \coloneq \{(i,a_i) : i \in N,\ a_i \neq \varnothing\}$.
By setting the weight of edge $(i, a_i)$ to be the utility $u_i(a_i)$, the weight of the matching $E(\A)$ equals the utilitarian welfare $W(\A)\coloneq \sum_{i \in N, a_i \neq \varnothing} u_i(a_i)$.
The \emph{support} of $\A$ is the set of houses assigned by $\A$, i.e., $\mathrm{supp}(\A) \coloneq \{a_i : i\in N,\ a_i\neq\varnothing\}$. 

\paragraph{Adversarial Models.}
As the number of houses $m$ and their arrival order are unknown, we can
view our problem as a game between an adversary and an online
algorithm. We consider the standard adaptive and non-adaptive
adversary models. An \emph{adaptive} adversary observes the algorithm's
past decisions and, at each round $t$, decides whether to terminate the
sequence (after at least $n$ houses have arrived) or introduce a new
house $h_t$ together with its utilities $\{u_i(h_t)\}_{i\in N}$. In
contrast, a \emph{non-adaptive} adversary commits to the entire input
sequence before any of the algorithm's decisions are made. While
adaptive adversaries capture worst-case behavior and often preclude
meaningful fairness guarantees, the non-adaptive model allows such
guarantees and is widely adopted when studying randomized algorithms.

\subsection{Recourse}
\label{sec:recourse}

We investigate how the power available to the online algorithm affects the resulting allocation. We model this power through restrictions on how the allocation may change from one round to the next. Formally, an online algorithm $A$ begins with an initial partial allocation $\A^{0} \coloneq (\varnothing, \dots, \varnothing)$. At each round $t \in [m]$, the algorithm updates the current allocation $\A^{t-1}\in(H^{t-1}\cup{\varnothing})^N$ in response to the arrival of $h_t$ to obtain a new feasible allocation $\A^{t}=(a^{t}_1,\dots,a^{t}_n)\in(H^{t}\cup{\varnothing})^N$.
We denote the final allocation by $\A^{m}=\A$.
Let $D^{t} \coloneq H^{t} \setminus \mathrm{supp}(\A^{t})$ be the set of houses discarded after round $t$.
Any discarded house cannot be allocated in a later round, i.e., $D^{t-1} \subseteq D^{t}$ for every $t \in [m]$.
An online algorithm is \emph{valid} if, for every input sequence with $m\ge n$, its final allocation $\A^{m}$ is complete.
For a randomized online algorithm, validity is required to hold
with probability one over the algorithm's internal randomness
for every such input sequence.
Since the algorithm does not know the stopping time, the input may
terminate after any round $t\ge n$. Hence, validity forces
$D^t=\varnothing$ for every $t\le n$, and it forces
$\A^t$ to be complete for every $t\ge n$. Equivalently, $|\mathrm{supp}(\A^t)|=\min\{t,n\}$ for every $t \in [m]$.

Furthermore, since discarded houses cannot be recovered, when house $h_t$ arrives the only houses that may participate in an update are $h_t$ itself and the houses currently assigned, namely those in $\mathrm{supp}(\A^{t-1})$. Thus, we now formalize the update operations available to the algorithm.

\begin{definition}[Chain update]\label{def:chain-update}
Fix the allocation $\A^{t-1}$ before the arrival of $h_t$. A non-rejection update from $\A^{t-1}$ to $\A^{t}$ is a chain update of length $\ell$ if there exist distinct agents $i_1,\dots,i_\ell\in N$, where $1\le \ell\le n$, such that $a^{t}_{i_1}=h_t$, $a^{t}_{i_r}=a^{t-1}_{i_{r-1}}$ for every $2 \le r \le \ell$, and $a^{t}_j=a^{t-1}_j$ for every $j\notin \{i_1,\dots,i_\ell\}$.
Moreover, $a^{t-1}_{i_r}\neq\varnothing$ for every $r<\ell$. If $a^{t-1}_{i_\ell}\neq\varnothing$, then $a^{t-1}_{i_\ell}$ is discarded; otherwise, the chain terminates at agent~$i_\ell$, who was previously unassigned.
\end{definition}

We then measure the amount of power available to an online algorithm via the notion of $k$-flexibility.

\begin{definition}[$k$-flexibility]\label{def:k-flexibility}
An online algorithm is $0$-flexible if, at every round $t$, it either rejects $h_t$ (i.e., $\A^{t} = \A^{t-1}$), or assigns $h_t$ to an agent $i$ with $a^{t-1}_i=\varnothing$ without changing any previously assigned house. For $k\ge 1$, an online algorithm is $k$-flexible if, at every round $t \in [m]$, it either rejects $h_t$ or performs a chain update of length at most $k$.
\end{definition}
The case $k=0$ is special: the algorithm may fill an unassigned agent's position but may not change any existing assignment. Once all agents are assigned, a $0$-flexible algorithm can only reject new houses.
Our recourse measure is the largest number of agents appearing
in one chain at a single round. It does not bound the total
number of assignment changes over all rounds or the number of
times a particular agent is reassigned. 

\subsection{Fairness Goal: Envy-Freeability via Subsidies}

Our fairness goal is to return a complete allocation that can be made envy-free using \emph{subsidies}. Exact envy-freeness may be unattainable even offline, so we study envy-freeability and the minimum total subsidy required to achieve it. Formally, a subsidy is represented by a non-negative vector $\mathbf{s}=(s_1,\dots,s_n)$, where $s_i$ denotes the payment given to agent $i$. We call a complete allocation $\A$ together with a subsidy vector $\mathbf{s}$ an \emph{outcome} $(\A,\mathbf{s})$. We then define envy-freeness for such outcomes as follows.

\begin{definition}[Envy-freeness with subsidy]\label{def:envy-freeness-with-subsidy}
Let $\A$ be a complete allocation and let $\mathbf{s}=(s_1,\dots,s_n)\in \mathbb R_{\ge 0}^N$ be a subsidy vector. 
The outcome $(\A,\mathbf{s})$ is envy-free if, for every pair of agents $i,j\in N$, $u_i(a_i)+s_i \ge u_i(a_j)+s_j$. 
We say that $\mathbf{s}$ is envy-eliminating for $\A$ if $(\A, \mathbf{s})$ is envy-free.
\end{definition}

Even in the offline setting, not every complete allocation admits an envy-eliminating subsidy vector \citep[Example 2.2]{choo2024housesubsidy}; for completeness, we provide the example in Appendix~\ref{app:example by choo et al}. This motivates the following notion of envy-freeability, introduced by \citet{halpern2019fair} in the context of fairly allocating indivisible goods.

\begin{definition}[Envy-freeability]
A complete allocation $\A$ is \emph{envy-freeable} if there exists a subsidy vector $\mathbf{s}$ such that the outcome $(\A, \mathbf{s})$ is envy-free.\footnote{The allocations $\A^0,\A^1,\dots,\A^m$ denote the sequence of allocations
maintained by the online algorithm and may be revised as houses arrive.
Since the process may terminate after any round, every complete
allocation $\A^t$ with $t\ge n$ is required to be envy-freeable.
Subsidies, however, are evaluated only for the final complete allocation
$\A^m$; no intermediate payments are made or revised.}
\end{definition}

Proposition \ref{prop:hs-characterization} gives a useful characterization of envy-freeable allocations proven by \citet{halpern2019fair}, which is based on the weighted envy graph that encodes the envy relations induced by an allocation.
\begin{definition}[Weighted envy graph]
Given a complete allocation $\A$, its weighted envy graph $G_\A$ is the complete directed graph on vertex set $N$, with one directed edge $i\to j$ for every ordered pair of distinct agents $i,j\in N$. The weight of edge $i\to j$ is $w_\A(i,j)\coloneq u_i(a_j)-u_i(a_i)$. 
The weight of a directed cycle $C=(i_1,i_2,\dots,i_\ell,i_1)$ is $w_\A(C)\coloneq\sum_{r=1}^{\ell} w_\A(i_r,i_{r+1})$, where $i_{\ell+1}=i_1$.
\end{definition}

\begin{proposition}[\protect{\citet[Theorem 1]{halpern2019fair}}] \label{prop:hs-characterization}
Let $\A$ be a complete allocation. The following are equivalent:
(1)~$\A$ is envy-freeable;
(2)~$G_\A$ has no positive-weight directed cycle; and
(3)~$\A$ maximizes the utilitarian welfare over all permutations of its assigned houses, i.e., $\sum_{i\in N} u_i(a_i) 
\ge \sum_{i\in N} u_i(a_{\pi(i)})$ for every permutation $\pi$ of $N$.
\end{proposition}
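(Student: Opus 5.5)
We prove the three conditions equivalent through the cycle $(1)\Rightarrow(2)\Rightarrow(3)\Rightarrow(1)$. Throughout we fix the complete allocation $\A$. Since $n\le m$ and $\A$ is feasible, the houses $a_1,\dots,a_n$ are distinct, so every permutation $\pi$ of $N$ yields another complete allocation $(a_{\pi(1)},\dots,a_{\pi(n)})$ on the same support.

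For $(1)\Rightarrow(2)$, let $\mathbf{s}$ be envy-eliminating for $\A$. Definition~\ref{def:envy-freeness-with-subsidy} gives $w_\A(i,j)=u_i(a_j)-u_i(a_i)\le s_i-s_j$ for every edge $i\to j$. Take any directed cycle $C=(i_1,\dots,i_\ell,i_1)$ and sum this inequality along its edges. The right-hand side telescopes to $0$, so $w_\A(C)\le 0$, and hence $G_\A$ has no positive-weight cycle.

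For $(2)\Rightarrow(3)$, fix a permutation $\pi$. The change in welfare is
\[
\sum_i u_i(a_{\pi(i)})-\sum_i u_i(a_i)=\sum_{i:\pi(i)\neq i} w_\A(i,\pi(i)).
\]
We decompose $\pi$ into disjoint cycles. Fixed points contribute $0$. Every other cycle $(i,\pi(i),\pi^2(i),\dots)$ is a directed cycle in the complete graph $G_\A$, and its summands add up exactly to that cycle's weight, which is $\le 0$ by (2). So the total change is $\le 0$, which is (3).

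For $(3)\Rightarrow(1)$, which is the main step, we construct the subsidies explicitly. Define the length of a path as the sum of its edge weights, and set $\ell_i$ to be the maximum weight of a directed path in $G_\A$ starting at $i$, where the trivial path has weight $0$. Take $s_i\coloneq \ell_i\ge 0$.

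The key obstacle is showing that $\ell_i$ is well defined, that is, that no cycle has positive weight, so that maximum-weight walks can be shortened to simple paths. We get this from (3): a positive cycle $C$ would define a cyclic permutation $\pi$ (each agent on $C$ takes its successor's house, all others keep theirs). By the displayed identity, $\pi$ would strictly increase welfare, which contradicts (3). Since there are no positive cycles, deleting any cycle from a walk never decreases its weight. So the supremum over walks equals a maximum over the finitely many simple paths, and $\ell_i$ is finite.

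Envy-freeness then follows from the longest-path triangle inequality. Prepending the edge $i\to j$ to a best path from $j$ gives a walk from $i$ of weight $w_\A(i,j)+\ell_j$. Hence $\ell_i\ge w_\A(i,j)+\ell_j$, which rearranges to $u_i(a_i)+s_i\ge u_i(a_j)+s_j$. Therefore $(\A,\mathbf{s})$ is envy-free, which closes the cycle of implications.
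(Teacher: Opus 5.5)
Your proof is correct and is essentially the standard argument of Halpern and Shah that the paper cites without reproducing. It telescopes the envy-free inequalities around cycles, decomposes permutations into cycles, and uses longest-path subsidies with the walk-to-simple-path reduction, which is the same construction the paper reuses in the proof of Lemma~\ref{lem:absolute-subsidy-bound}; the only difference, routing $(3)\Rightarrow(1)$ by first re-deriving the absence of positive cycles, is cosmetic.
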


For a complete allocation $\A$, we define the minimum total subsidy required to make $\A$ envy-free as $\mathrm{sub}(\A) \coloneq \min\{\sum_{i\in N}s_i : s_i\ge 0\ \text{for all }i\in N,\ (\A,\mathbf{s})\text{ is envy-free}\}$.
If no subsidy vector renders $\A$ envy-free, we set $\mathrm{sub}(\A)=\infty$. For an instance $\I$, we denote the minimum total subsidy required by any complete allocation as $\mathrm{OPT}(\I) \coloneq \min\{\mathrm{sub}(\A): \A \text{ is a complete allocation for } \I\}$.
Following standard conventions for online algorithms (see, e.g., \citep{fiat1998online}), we measure the performance of an online algorithm $A$ in terms of its \textit{competitive ratio}. For a deterministic online algorithm $A$, its competitive ratio
with respect to total subsidy is the smallest $\alpha\ge 1$ such that $\mathrm{sub}(A(\I))
\le \alpha\cdot \mathrm{OPT}(\I)$
for every instance $\I$ for which $\mathrm{OPT}(\I)>0$.
For randomized algorithms, the left-hand side is replaced by
$\mathbb E[\mathrm{sub}(A(\I))]$, where the expectation is
over the algorithm's internal randomness.

For any $n$-element set $S \subseteq H$, define $\mu(S) \coloneq \min\{\mathrm{sub}(\A) : \A \text{ is complete and } \mathrm{supp}(\A)=S\}$.
We will use the following fixed-support characterization, which follows from Proposition~\ref{prop:hs-characterization} of this work as well as Proposition~4.2 of \citet{choo2024housesubsidy}.

\begin{restatable}{lemma}{lemfixedsupport}
\label{lem:fixed-support}
Let $S \subseteq H$ be an $n$-element set of houses, and consider the complete bipartite graph $N \times S$ in which edge $(i,h)$ has weight $u_i(h)$.
An allocation $\A$ with support $S$ is envy-freeable if and
only if $E(\A)$ is a maximum-weight perfect matching between $N$ and $S$. Moreover, every maximum-weight perfect matching between $N$ and $S$, together with a minimum envy-eliminating subsidy vector, attains $\mu(S)$.
\end{restatable}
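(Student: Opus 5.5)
The first claim follows from Proposition~\ref{prop:hs-characterization}. Fix $S$ with $|S|=n$ and a complete allocation $\A$ with $\mathrm{supp}(\A)=S$. Every perfect matching between $N$ and $S$ is obtained from $E(\A)$ by permuting the houses among the agents, so it has the form $\{(i,a_{\pi(i)}): i\in N\}$ for some permutation $\pi$ of $N$. Conversely, each such permutation yields a perfect matching. Its weight is $\sum_i u_i(a_{\pi(i)})$. Thus $E(\A)$ is a maximum-weight perfect matching on $N\times S$ exactly when $\sum_i u_i(a_i)\ge\sum_i u_i(a_{\pi(i)})$ for all $\pi$. By the equivalence (1)$\Leftrightarrow$(3) of Proposition~\ref{prop:hs-characterization}, this holds exactly when $\A$ is envy-freeable.

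For the second claim, we must show that all maximum-weight perfect matchings on $N\times S$ have the same minimum subsidy, and that this common value equals $\mu(S)$. By the first part, every allocation with support $S$ that is not a maximum-weight matching has $\mathrm{sub}=\infty$. Hence $\mu(S)$ is the minimum of $\mathrm{sub}(\A)$ over the maximum-weight matchings $\A$, and it suffices to show that $\mathrm{sub}$ is constant across them. This is the content of Proposition~4.2 of \citet{choo2024housesubsidy}, which I would cite. For a self-contained argument I would proceed as follows.

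\textbf{Step 1.} Recall from \citet{halpern2019fair} that for an envy-freeable $\A$, the minimum envy-eliminating subsidy is $s_i=\ell_\A(i)$. Here $\ell_\A(i)$ is the maximum weight of a directed path starting at $i$ in $G_\A$. It is well defined because $G_\A$ has no positive cycles.

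\textbf{Step 2.} Take two maximum-weight matchings $\A,\A'$ with the same support $S$. Write $a'_i=a_{\sigma(i)}$ for a permutation $\sigma$. The key observation is that every cycle of $\sigma$ has zero weight in $G_\A$. Indeed, its weight is the welfare difference restricted to that cycle. This difference is $\le 0$ by optimality of $\A$, and each cycle's difference is also $\ge 0$, since otherwise $\A'$ could be improved on that cycle by reverting to $\A$.

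\textbf{Step 3.} Show that subsidies are preserved as a multiset, via a potential argument. Edge weights in $G_{\A'}$ satisfy $w_{\A'}(i,j)=u_i(a'_j)-u_i(a'_i)$. One relates path weights in $G_{\A'}$ starting from agent $i$ to path weights in $G_\A$ starting from $\sigma(i)$ or $i$, using the zero-weight cycle edges to move between the two graphs. The expected outcome is that $\ell_{\A'}(i)=\ell_\A(\sigma(i))$, or at least that the sums agree.

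The main obstacle is Step 3: making the correspondence between maximum path weights in the two envy graphs precise. My first attempt would be the cleaner LP-duality route. Minimum subsidies correspond to dual prices $p_h$ on houses, with $s_i=p_{a_i}-\min_h p_h$ or a similar shift. The set of optimal dual solutions of the assignment LP on $N\times S$ does not depend on which optimal primal matching is used, by complementary slackness. The minimum subsidy total is then determined by the house prices alone, and is therefore independent of the matching. This avoids the path bookkeeping entirely.
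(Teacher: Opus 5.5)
Your proposal is correct and follows the paper's proof: the first claim comes from the equivalence (1)$\Leftrightarrow$(3) in Proposition~\ref{prop:hs-characterization}, once perfect matchings on $N\times S$ are identified with permutations of the houses of $\A$. The second claim is obtained by citing Proposition~4.2 of \citet{choo2024housesubsidy}; your remark that non-maximum matchings have infinite subsidy just makes explicit why the common value is $\mu(S)$. Your optional LP-duality sketch is also sound and would remove the reliance on the citation, provided the shift is taken as $s_i=\max_{h\in S}p_h-p_{a_i}$ for dual variables $(y,p)$ with $y_i+p_h\ge u_i(h)$, rather than $p_{a_i}-\min_h p_h$.
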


\subsection{Efficiency Solution Concepts}
Proposition \ref{prop:hs-characterization} reveals a close connection between envy-freeability and economic efficiency: every utilitarian welfare-maximizing allocation is envy-freeable. This suggests that, in addition to minimizing the subsidy required to achieve envy-freeness, it is natural to seek allocations that satisfy standard efficiency properties. We will focus on the following notions.

\begin{definition}[Non-wastefulness]
A complete allocation $\A$ is \emph{non-wasteful} if there is no unassigned house
$h\in H\setminus \mathrm{supp}(\A)$ and agent $i\in N$ such that $u_i(h)>u_i(a_i)$.
\end{definition}

\begin{definition}[Pareto optimality]
A complete allocation $\A$ is \emph{Pareto-optimal} if there is no complete
allocation $\A'$ with $\mathrm{supp}(\A')\subseteq H$ such that $u_i(a'_i)\ge u_i(a_i)$ for every $i \in N$, and $u_j(a'_j)>u_j(a_j)$ for some $j \in N$.
\end{definition}

\section{Envy-Freeability} \label{sec:envy-freeability}
In this section, we show that envy-freeability can always be
maintained online. By Proposition~\ref{prop:hs-characterization}, it is sufficient
to maintain a welfare-maximizing allocation. The main difficulty comes from the matching constraint.
In the standard online fair division model, accepting a newly arriving item never forces an already assigned item to leave an agent's bundle.
In contrast, for house allocation, each agent can only hold at most one house, so accepting $h_t$ for an already assigned agent may displace a currently held house, which may in turn have to be reassigned or discarded. 
Our main observation is that this matching constraint can nevertheless be handled without knowing $m$, future houses, or future utilities: after each arrival, it suffices to maintain a welfare-maximizing allocation among the non-discarded houses, and the required recourse can always be chosen to be a single chain update.

\begin{algorithm}[t!]
\caption{Active max-weight chain algorithm}
\begin{algorithmic}[1]
\STATE Initialize $\A^{0} \gets (\varnothing,\dots,\varnothing)$ and $D^{0} \gets \varnothing$. \label{state:init}
\FOR{each arriving house $h_t$}
        \STATE Let $\mathcal{A}_t$ be the set of feasible partial allocations
    $\mathbf{b} \in (H^{t} \cup \{\varnothing\})^N$ such that $|\mathrm{supp}(\mathbf{b})| = \min\{t,n\}$ and $\mathrm{supp}(\mathbf{b}) \cap D^{t-1} = \varnothing$. \label{state:feasible}
    \STATE Choose $\A^{t} \in \arg\max_{\mathbf{b} \in \mathcal{A}_t} \sum_{i \in N} u_i(b_i)$,
    breaking ties by minimizing $|E(\mathbf{b}) \triangle E( \A^{t-1})|$. Any remaining ties are broken by a fixed deterministic order over feasible allocations. \label{state:select}
    \STATE Update $D^{t} \gets H^{t} \setminus \mathrm{supp}(\A^{t})$. \label{state:update}
\ENDFOR
\end{algorithmic}
\label{alg:active-max-weight-chain}
\end{algorithm}

Consider Algorithm~\ref{alg:active-max-weight-chain}. After each round $t$, it maintains a set $D^{t}$ of \emph{discarded houses}. Initially, $D^{0}=\varnothing$ (line~\ref{state:init}). 
When house $h_t$ arrives, in line \ref{state:feasible}, the algorithm considers the set $\mathcal A_t$ of all feasible partial allocations that (1) assign exactly $\min\{t,n\}$ houses, which is the maximum possible while ensuring validity, and (2) use only active houses, i.e., houses in $H^{t}\setminus D^{t-1}$.
Among all allocations in $\mathcal A_t$, the algorithm selects one maximizing the utilitarian welfare. Ties are broken in favor of an allocation closest to the current allocation (line~\ref{state:select}). Formally, for any partial allocation $\mathbf{b}$, let $E(\mathbf{b})\coloneq\{(i,b_i): i\in N,\ b_i\neq \varnothing\}$ denote its matching-edge set. The tie-breaking rule minimizes the symmetric-difference distance $|E(\mathbf{b}) \triangle E(\A^{t-1})|$. Finally, after the new allocation $\A^{t}$ is chosen, every house in $H^{t}\setminus \mathrm{supp}(\A^{t})$ is declared discarded and added to $D^{t}$ (line~\ref{state:update}).
Note that the set $\mathcal A_t$ is nonempty at every round. If $t\le n$, the previous allocation $\A^{t-1}$ can be extended by assigning $h_t$ to an unassigned agent; otherwise, $\A^{t-1} \in \mathcal A_t$. Hence, both the welfare maximization and the tie-breaking rule are well defined.

The key invariant is that, after the arrival of each house, at least one maximum-welfare allocation avoids all previously discarded houses. Lemma~\ref{lem:discarded-remain-discardable} formalizes this property. 

\begin{restatable}{lemma}{lemdiscardedremaindiscardable}
\label{lem:discarded-remain-discardable}    
For $S\subseteq H$, let $F(S)$ denote the maximum utilitarian welfare of a feasible allocation using houses from~$S$ and assigning exactly $\min\{|S|,n\}$ houses. Suppose $D\subseteq S$ and $F(S)=F(S\setminus D)$.
Then, $F(T)=F(T\setminus D)$ for every superset of houses $T \supseteq S$.
\end{restatable}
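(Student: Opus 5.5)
The plan is to reduce the statement to submodularity of the welfare function $F$ and then prove submodularity by an alternating-path exchange argument.

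\textbf{Step 1: a cleaner description of $F$.} Because utilities are non-negative and the bipartite graph $N\times S$ is complete, $F(S)$ equals the maximum weight of \emph{any} matching between $N$ and $S$, with no size restriction. Indeed, a matching with fewer than $\min\{|S|,n\}$ edges can be padded with further edges (both an unmatched agent and an unmatched house of $S$ exist) without lowering its weight. With this description, $F$ is monotone: $F(T\setminus D)\le F(T)$ holds trivially. It therefore remains to show $F(T)\le F(T\setminus D)$.

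\textbf{Step 2: reduction to submodularity.} I will show that $F$ is submodular on subsets of $H$. Apply submodularity to $X=S\setminus D\subseteq Y=T\setminus D$ and the added set $D$, which is disjoint from both. This gives
\[
F(T)-F(T\setminus D)=F(Y\cup D)-F(Y)\le F(X\cup D)-F(X)=F(S)-F(S\setminus D)=0 ,
\]
which is the claim. Submodularity follows in the standard way from the elementary diminishing-returns inequality
\[
F(X\cup\{a,b\})+F(X)\le F(X\cup\{a\})+F(X\cup\{b\})
\]
for houses $a,b\notin X$, by adding the elements of $D$ and of $Y\setminus X$ one at a time and telescoping. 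This is the same fact as the classical statement that assignment valuations are submodular (indeed gross substitutes).

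\textbf{Step 3: the exchange argument (main obstacle).} Let $M$ be a maximum-weight matching on $X\cup\{a,b\}$ and $M_0$ a maximum-weight matching on $X$. Every vertex has degree at most $2$ in $M\cup M_0$, so $M\triangle M_0$ decomposes into alternating paths and cycles. We may assume $a$ and $b$ are both covered by $M$, since otherwise the inequality is immediate. Houses $a$ and $b$ lie outside $M_0$, so each has degree $1$ and is an endpoint of a path.

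The key point is that $a$ and $b$ cannot be the two ends of the same path. Such a path would run from a house to a house, so it would have even length. Yet it would begin and end with an $M$-edge while alternating between $M$ and $M_0$, which forces odd length. Let $P$ be the component containing $a$. Then $M_1=M\triangle P$ and $M_2=M_0\triangle P$ are valid matchings. Here $M_1$ avoids $a$ and uses houses from $X\cup\{b\}$, and $M_2$ uses houses from $X\cup\{a\}$ and avoids $b$. Their combined edge multiset is that of $M\cup M_0$, so $w(M_1)+w(M_2)=F(X\cup\{a,b\})+F(X)$. Bounding $w(M_1)\le F(X\cup\{b\})$ and $w(M_2)\le F(X\cup\{a\})$ gives the inequality.

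The only delicate points are checking that the two swapped edge sets are matchings using the right house sets, and the parity argument. Step 1 is what makes the size constraint in the definition of $F$ harmless.
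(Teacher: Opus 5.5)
Your proof is correct and follows essentially the same route as the paper. Both use the padding argument to drop the size restriction in $F$, derive monotonicity, prove submodularity of $F$ by the same alternating-path parity exchange, and finish by telescoping over the houses in $D$. The only difference is cosmetic: the paper proves $F(Y)+F(X\cup\{h\})\ge F(Y\cup\{h\})+F(X)$ for arbitrary $X\subseteq Y$ directly, splitting every component of $M^+\triangle M^0$ between two new matchings, whereas you prove the two-house local inequality with a single path swap and lift it to general submodularity by the standard telescoping.
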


With this invariant in hand, Theorem~\ref{thm:active-chain} shows that the matching constraint does not prevent maintaining a welfare-maximizing allocation online. Despite the possibility that accepting a new house displaces an already assigned one, every update can be implemented as a single chain update while preserving welfare optimality and thus envy-freeability.

\begin{restatable}{theorem}{thmvalidflexmaxwelfare}
\label{thm:active-chain}
Algorithm~\ref{alg:active-max-weight-chain} is valid and $n$-flexible. For every input sequence, its final allocation maximizes the utilitarian welfare among all complete allocations; consequently, it is envy-freeable.
\end{restatable}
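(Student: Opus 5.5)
The proof has three parts: validity, optimality of the final allocation, and the chain structure of each update. Validity is immediate. At every round $t$, $\A^t\in\mathcal A_t$, so $|\mathrm{supp}(\A^t)|=\min\{t,n\}$. Every house outside the support is placed in $D^t$, and the support of $\A^t$ avoids $D^{t-1}$, so $D^{t-1}\subseteq D^t$. Hence no discarded house is ever reused, and $\A^t$ is complete for every $t\ge n$, in particular for $t=m$.

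For optimality, I will prove by induction on $t$ that $W(\A^t)=F(H^t)=F(H^t\setminus D^t)$, using the notation $F$ of Lemma~\ref{lem:discarded-remain-discardable}. For $t=0$ this is trivial. Assume it holds for $t-1$. Apply Lemma~\ref{lem:discarded-remain-discardable} with $S=H^{t-1}$, $D=D^{t-1}$ and $T=H^t$; this gives $F(H^t)=F(H^t\setminus D^{t-1})$. The algorithm maximizes welfare exactly over allocations that avoid $D^{t-1}$ and have support size $\min\{t,n\}$, so $W(\A^t)=F(H^t\setminus D^{t-1})=F(H^t)$. Moreover, $\A^t$ uses only houses in $H^t\setminus D^t$, so $F(H^t\setminus D^t)\ge W(\A^t)=F(H^t)$, and monotonicity gives equality. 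At $t=m$, the final allocation attains $F(H)$, the maximum welfare over all complete allocations. It then maximizes welfare in particular over all permutations of its own support, so Proposition~\ref{prop:hs-characterization} implies it is envy-freeable.

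The main obstacle is $n$-flexibility: I must show that $\A^{t-1}\to\A^t$ is either a rejection or a single chain update in the sense of Definition~\ref{def:chain-update}. Consider the symmetric difference $E(\A^t)\triangle E(\A^{t-1})$ in the bipartite graph on $N\cup(H^t\setminus D^{t-1})$. It decomposes into alternating paths and cycles, and the plan is an exchange argument on these components.

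First I will establish that $\A^{t-1}$ is itself welfare-optimal for $H^{t-1}\setminus D^{t-1}$ with $\min\{t-1,n\}$ houses, which follows from the induction above. Then take any component $P$ that avoids $h_t$. Flipping $P$ on $\A^t$ yields an allocation $\mathbf b$ that is still feasible for round $t$, provided $P$ preserves the support size. The welfare change is $-\delta$, where $\delta$ is the gain $P$ contributes to $\A^t$. Flipping $P$ on $\A^{t-1}$ instead produces a feasible allocation for round $t-1$, because it uses no new house, and its welfare change is $+\delta$. Optimality of $\A^{t-1}$ forces $\delta\le 0$, and optimality of $\A^t$ forces $\delta\ge0$. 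So $\delta=0$, and $\mathbf b$ ties $\A^t$ in welfare with a strictly smaller symmetric difference, contradicting the tie-breaking rule.

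Support-size bookkeeping needs care here. When $t\le n$ every component must keep the count balanced: paths ending at an unassigned agent can occur only in the component containing $h_t$. When $t>n$, a path that swaps an assigned house for another active old house changes the support but not the support size, so the argument still applies. Hence the symmetric difference is a single component containing $h_t$, or it is empty. Since $h_t$ has degree at most one in the difference, that component is a path starting at $h_t$. Reading it from $h_t$ gives $i_1$ receiving $h_t$, then $i_2$ receiving $a^{t-1}_{i_1}$, and so on. The path ends either at a previously unassigned agent or at an old house, which is discarded; this is exactly a chain update. Its agents are distinct, so its length is at most $n$.
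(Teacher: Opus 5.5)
Your proposal is correct and follows the paper's proof essentially step for step: the same induction via Lemma~\ref{lem:discarded-remain-discardable} for welfare optimality, and the same component-wise exchange argument combined with the symmetric-difference tie-breaking rule, showing that $E(\A^{t-1})\triangle E(\A^t)$ is either empty or a single alternating path starting at $h_t$. Two places need tightening. First, you skip the check $\min\{|H^t\setminus D^{t-1}|,n\}=\min\{t,n\}$, which is what makes $\mathcal A_t$ exactly the class defining $F(H^t\setminus D^{t-1})$. Second, your balance claim needs its actual justification: for $t\le n$, every house of $H^{t-1}$ is matched by both matchings, so a component avoiding $h_t$ has only agent endpoints; for $t>n$, every agent is matched by both, so such a component is a cycle.
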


\subsection{Recourse Lower Bounds}

The next proposition establishes a lower bound on the recourse required to guarantee an envy-freeable final allocation. In particular, for $n\in\{2,3\}$, the lower bound matches the $n$-flexible guarantee of Theorem~\ref{thm:active-chain}.

\begin{restatable}{proposition}{strongeroddlowerbound}
\label{prop:stronger-odd-lower-bound}
For every $n \ge 2$, define $\lambda_n \coloneq \left\lceil \frac{n}{2}\right\rceil + 1$.
Against an adaptive adversary, no deterministic valid
$k$-flexible algorithm with $k < \lambda_n$ can always
guarantee an envy-freeable final allocation, even when
$m=n$.
\end{restatable}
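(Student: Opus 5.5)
The plan is to exploit the special structure of the case $m=n$. Since validity forces $D^t=\varnothing$ for every $t\le n$, no house may ever be discarded. Every non-rejection update at round $t\le n$ is therefore a chain that starts with $h_t$ and ends at a previously unassigned agent. In particular, the last update, at round $n$, is a chain $i_1\to i_2\to\dots\to i_\ell=j$, where $j$ is the unique agent left unassigned by $\A^{n-1}$. Consequently $E(\A^{n})\triangle E(\A^{n-1})$ is a single alternating path with $\ell\le k$ agents. By Lemma~\ref{lem:fixed-support} with $S=H$, the final allocation is envy-freeable if and only if $E(\A^n)$ is a maximum-weight perfect matching between $N$ and $H$. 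The adversary should therefore arrange that the maximum-weight perfect matching $M^\ast$ is unique and that $M^\ast\triangle E(\A^{n-1})$ is either not a single path or a path through at least $\lambda_n$ agents.

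I would use a cyclic gadget for the first $n-1$ houses. Place the agents on a cycle $1,2,\dots,n$, and give each house $h_r$ ($r\le n-1$) large value to agents $r$ and $r+1$, small generic values (distinct tiny perturbations) to everyone else, and distinct weights so that all relevant optima are unique. The adaptive adversary watches the algorithm through round $n-1$. The intermediate allocations need not be envy-freeable, so the algorithm may do anything there. At round $n$ it reads off $A^{n-1}$ and the unassigned agent $j$. It then chooses the utilities for $h_n$: a very large value for one carefully chosen agent $p$, and negligible values for all others. This makes $M^\ast$ give $h_n$ to $p$ and rematch all remaining agents through the gadget edges.

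The core of the argument is choosing $p$ roughly antipodal to $j$ on the cycle, so that the path from $p$ to $j$ along the gadget uses at least $\lceil n/2\rceil+1$ agents in either direction. The generic perturbations must also be tuned so that $M^\ast$ is forced in one specific direction, namely the one matching how $A^{n-1}$ is oriented. I would split into two cases.

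\begin{enumerate}
\item If $A^{n-1}$ consists of gadget edges, it is a "shift" pattern around the cycle. Then $M^\ast\triangle E(\A^{n-1})$ is exactly the arc from $p$ to $j$, of length at least $\lambda_n$. This is what forces $k\ge\lambda_n$, and it also accounts for the $\lceil n/2\rceil$ term.
\item If $A^{n-1}$ uses some non-gadget edge, that edge is absent from $M^\ast$. A single chain ending at $j$ can only repair it if the chain passes through it. The adversary then picks $p$ so that the required repair path is again long, or so that the symmetric difference contains an extra cycle that no chain update can produce.
\end{enumerate}

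I expect the main obstacle to be this second case. The algorithm can anticipate the adversary and deliberately hold a non-gadget allocation that keeps every potential target within distance $k$. So I would first try to show that any $(n-1)$-matching has some agent $p$ whose forced repair is long. I would do this by counting: each agent's "reach" under chains of length $\le k$ covers fewer than $n$ targets when $k<\lambda_n$. If that fails, I would add a second adaptive round to break symmetry.
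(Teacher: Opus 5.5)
\paragraph{There is a genuine gap, in your Case~1 rather than Case~2.} With only $n-1$ houses before round $n$, your gadget is the path $1-h_1-2-\cdots-h_{n-1}-n$, not a cycle. For a hole at $j$, the unique gadget allocation is the shift: $i\mapsto h_i$ for $i<j$ and $i\mapsto h_{i-1}$ for $i>j$. Whatever values $h_n$ has, every optimum has the form ``$h_n$ to some $p$, plus the shift with hole at $p$''. That optimum differs from the round-$(n-1)$ shift by a chain through exactly $|p-j|+1$ agents. The adversary can therefore force at most $\max\{j,n-j+1\}$. This is $\ge\lambda_n$ for even $n$, but for odd $n$ with $j=(n+1)/2$ it equals only $\lceil n/2\rceil=\lambda_n-1$.

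The algorithm can reach exactly this configuration using length-$1$ chains: give $h_r$ to agent $r$ for $r<j$, and to agent $r+1$ for $r\ge j$. After that, a $\lceil n/2\rceil$-flexible algorithm reaches the optimum for every final house, so your adversary cannot beat it. Reading the gadget as a genuine cycle does not rescue the count, since on an odd cycle the farthest agent is also reached through only $\lceil n/2\rceil$ agents. For the same reason, your proposed ``reach'' counting fails: a central hole reaches all $n$ targets with chains of length $\lceil n/2\rceil$.

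\paragraph{The missing idea} is to use adaptivity \emph{before} the hole is determined, so that the central position is forbidden. You already stated the fact that makes this work: for $t<n$ every update must end at a previously unassigned agent, so anyone assigned stays assigned through round $n-1$. The paper proceeds as follows.
\begin{enumerate}
\item Release $h_1$ with the same value $B$ for all agents, and observe who receives it, say $p$.
\item Only then fix a linear order of the agents with $p$ at the middle rank $c=\lceil n/2\rceil$.
\item Give the remaining houses supermodular values $B+r(i)x_h$; these play the role of your path gadget.
\end{enumerate}
The hole $q$ after round $n-1$ then satisfies $q\neq c$, so $\max\{q,n-q+1\}\ge c+1=\lambda_n$, and the adversary places $h_n$ at whichever end is farther.

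\paragraph{Your Case~2 is actually the easy one.} If the round-$(n-1)$ allocation is not the shift with hole $j$, make $h_n$ best for $j$ itself. The unique optimum is then $h_n$ to $j$ plus that shift. But any chain that gives $h_n$ to the unassigned agent $j$ has length $1$ (Definition~\ref{def:chain-update}), so it cannot rearrange anyone else.
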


Together with Theorem~\ref{thm:active-chain}, Proposition~\ref{prop:stronger-odd-lower-bound} places the
worst-case chain length for deterministic envy-freeability
between $\left\lceil\frac n2\right\rceil+1$ and $n$ against an adaptive adversary. The two values coincide for
$n\in\{2,3\}$. We do not claim a randomized lower bound
or a bound on the total number of reassignments over all
rounds.

\begin{restatable}{proposition}{welfaremaxnlowerbound}
\label{prop:welfare-max-n-lower-bound}
For every $n\ge 2$ and every $k<n$, against an adaptive
adversary, no deterministic valid $k$-flexible algorithm can
always output a final allocation that maximizes utilitarian welfare
among all complete allocations.
\end{restatable}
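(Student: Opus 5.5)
The plan is a two-phase adaptive-adversary construction. First the adversary forces the algorithm into a unique welfare-maximizing allocation on $n$ houses. Then one extra house arrives whose unique welfare-maximizing completion reassigns \emph{every} agent. A chain update of length $\ell$ changes the house of exactly the $\ell$ agents $i_1,\dots,i_\ell$ and leaves everyone else untouched, so any $k$-flexible algorithm with $k<n$ cannot reach that completion in one round.

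\textbf{Phase 1 (rounds $1,\dots,n$).} The houses $g_1,\dots,g_n$ arrive, and let $M$ be a large constant. Fix $0<\delta<2/(n-1)$ and set the utilities as follows:
\begin{itemize}
\item $u_i(g_i)=2$ for every $i$;
\item $u_{i+1}(g_i)=2+\delta$ for $i<n$;
\item all other utilities for the $g$'s are $0$.
\end{itemize}
The positive-weight edges form the path $1-g_1-2-g_2-\cdots-n-g_n$. Since $u_1(g_n)=0$, the only perfect matching of positive weight using all $n$ agents and all $n$ houses is the identity $i\mapsto g_i$, with welfare $2n$. Every other perfect matching uses some zero-weight edge. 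On a path, that edge loses at least $2$ while the gain is at most $(n-1)\delta<2$. I would verify this carefully: a perfect matching on $N\times\{g_1,\dots,g_n\}$ that deviates from the identity must leave some agent with value $0$, and the strict inequality in the choice of $\delta$ then gives strictly smaller welfare. Hence the identity is the unique welfare-maximizing complete allocation for $m=n$.

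Validity forces the algorithm to hold a complete allocation after round $n$. If that allocation is not the identity, the adversary terminates with $m=n$, and the final allocation does not maximize welfare. So we may assume $\A^n$ is the identity.

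\textbf{Phase 2 (round $n+1$).} House $h$ arrives with $u_1(h)=M$ and $u_i(h)=0$ for $i\ne 1$, and the adversary then terminates. For $M$ large, any welfare-maximizing complete allocation assigns $h$ to agent~$1$.

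The remaining agents $2,\dots,n$ must be matched to $n-1$ of the $g$'s. Restricted to them, the positive edges form the path $g_1-2-g_2-3-\cdots-n-g_n$. The maximum-weight matching of agents $2,\dots,n$ saturating them is either the full shift $j\mapsto g_{j-1}$, with weight $(n-1)(2+\delta)$, or a matching in which some agent takes its own $g_j$. On a path, once agent $j$ takes $g_j$, every later agent $j'>j$ must take $g_{j'}$, unless a zero edge is used. So the alternatives have weight $2r+(2+\delta)(n-1-r)$ for some $r\ge 1$, or less. These are strictly smaller, and the full shift is uniquely optimal.

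The optimal final allocation is therefore $1\mapsto h$ and $j\mapsto g_{j-1}$ for $j\ge 2$, discarding $g_n$. It differs from $\A^n$ in all $n$ agents. A rejection changes nothing, and a chain update of length $\ell\le k<n$ fixes at least one agent's house. Hence $\A^{n+1}$ is not welfare-maximal.

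The main obstacle is the uniqueness bookkeeping in both phases, namely ensuring that no alternative matching ties. I would handle it with the path-structure argument above, choosing $\delta$ strictly below $2/(n-1)$ and $M$ larger than $2n+n\delta$ so that all comparisons are strict.
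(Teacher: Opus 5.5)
Your proof is correct and uses the same two-phase adaptive-adversary strategy as the paper. First you force the unique welfare-maximizing allocation on the first $n$ houses (stopping at $m=n$ otherwise). Then you release one house whose unique optimal completion changes every agent's house, which neither a rejection nor a chain of length $\ell<n$ can achieve.

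The paper instead uses the assortative utilities $u_i(h_t)=it$ and $u_i(h_{n+1})=i(n+1)$, and traces the forced chain order $n,n-1,\dots,1$. Your path gadget, with a house valued only by agent~$1$, lets you finish by simply counting changed agents. For $k=0$, state explicitly that once all agents are assigned, a $0$-flexible algorithm can only reject.
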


For maximum welfare, the value $n$ is exact: Algorithm~\ref{alg:active-max-weight-chain}
uses chains of length at most $n$, and Proposition~\ref{prop:welfare-max-n-lower-bound}
shows that no smaller worst-case chain limit suffices for
deterministic algorithms against an adaptive adversary.

\section{Limits of Online Subsidy Minimization} \label{sec:subsidy-limits}
Finding an allocation with minimum envy-eliminating subsidy is already computationally hard in \textit{offline} settings~\citep{choo2024housesubsidy}. Here, we show that even with unlimited computational power, the lack of knowledge about future arrivals fundamentally limits online subsidy minimization. This obstacle already appears in very simple instances with identical utilities, normalized values, and only a small number of extra houses.
In particular, we show that no online algorithm can achieve a bounded multiplicative ratio against the minimum subsidy $\mathrm{OPT}(\I)$, and that efficiency-based requirements do not avoid this difficulty.
Throughout this section, unless stated otherwise, utilities are normalized so that
$u_i(h)\in[0,1]$ for every $i\in N$ and $h\in H$. This is the standard
item-value normalization used for subsidy bounds. Under this
normalization, subsidy values are measured on a common scale. If every utility in an instance is multiplied by a common
factor $U>0$, then both $\mathrm{sub}(\A)$ and
$\mathrm{OPT}(\I)$ are multiplied by $U$. Thus, the
additive bounds stated for utilities in $[0,1]$ scale
linearly under uniform rescaling, while the multiplicative
lower bounds are unchanged.

\subsection{A Universal Subsidy Bound}
\label{subsec:universal-benchmark}

We begin with a universal subsidy bound for envy-freeable
allocations. The bound will be used repeatedly throughout
the remainder of the paper. It follows from the
longest-path subsidy construction of
\citet{halpern2019fair}, specialized to complete house
allocations with normalized utilities.

\begin{restatable}{lemma}{absolutesubsidybound}
\label{lem:absolute-subsidy-bound}
    Let $\A$ be a complete envy-freeable allocation. If $u_i(h)\in[0,1]$ for every $i\in N$ and $h\in H$,  $\mathrm{sub}(\A)\le n-1$.
\end{restatable}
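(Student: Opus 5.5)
We will use the longest-path subsidy construction of \citet{halpern2019fair}. Fix a complete envy-freeable allocation $\A$. By Proposition~\ref{prop:hs-characterization}, the weighted envy graph $G_\A$ has no positive-weight directed cycle. Hence, for every agent $i$, the maximum weight $\ell(i)$ of a directed path in $G_\A$ starting at $i$ is well defined. Here we allow the trivial path consisting of $i$ alone, so that $\ell(i)\ge 0$. A maximum-weight walk can always be shortened to a simple path without decreasing its weight, since every removed cycle has non-positive weight. We then set $s_i\coloneq \ell(i)$.

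\textbf{Step 1: the subsidy is envy-eliminating.} For any $i\neq j$, prepending the edge $i\to j$ to a best path from $j$ gives a walk from $i$. Shortcutting its cycles yields a simple path from $i$ of at least the same weight. Therefore $\ell(i)\ge w_\A(i,j)+\ell(j)$, that is, $u_i(a_i)+s_i\ge u_i(a_j)+s_j$. So $(\A,\mathbf s)$ is envy-free and $\mathrm{sub}(\A)\le\sum_i \ell(i)$. This is exactly \citet{halpern2019fair}'s argument, which we may cite directly.

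\textbf{Step 2: bound each $\ell(i)$ and find a zero.} Every edge satisfies $w_\A(i,j)=u_i(a_j)-u_i(a_i)\le 1$ under the normalization. A simple path has at most $n-1$ edges, so the first guess is $\ell(i)\le n-1$. That only gives a total of $n(n-1)$, which is too weak.

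To get the sharper bound, we would instead show that some agent has $\ell=0$ and that the values $\ell$ are controlled more carefully. Observe that if $\ell(i)>0$, then the optimal path from $i$ ends at some agent $k$ with $\ell(k)=0$. Otherwise we could extend the path, and there would be no maximal one. Hence the set $Z=\{i:\ell(i)=0\}$ is nonempty.

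\textbf{Step 3: sum the subsidies against the welfare of $\A$ (main obstacle).} The natural per-agent bound does not sum to $n-1$, so we need a telescoping argument. Take $i$ with its optimal simple path $i=i_1\to\cdots\to i_r$. Its weight is
\[
\sum_{q<r}\bigl(u_{i_q}(a_{i_{q+1}})-u_{i_q}(a_{i_q})\bigr)\le (r-1)-\sum_{q<r}u_{i_q}(a_{i_q}).
\]
This alone still sums badly.

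The approach we would try first is the known result of \citet{halpern2019fair} that the minimum subsidy satisfies $\sum_i s_i\le$ (maximum per-agent subsidy) $\cdot(n-1)$ once one agent receives zero. Combining this with the observation that the per-agent subsidy is at most $\max$ utility gap, each $s_i\le 1$ is what we need. So the core claim to establish is $\ell(i)\le 1$ for every $i$ in the house setting.

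To prove $\ell(i)\le 1$, apply Proposition~\ref{prop:hs-characterization}(3) to the cyclic permutation along the path closed back to $i$. Since $\A$ is welfare-maximizing over permutations of its houses, the cycle $i_1\to\cdots\to i_r\to i_1$ has non-positive weight. Thus
\[
\ell(i)=w(\text{path})\le -w_\A(i_r,i_1)=u_{i_r}(a_{i_r})-u_{i_r}(a_{i_1})\le 1.
\]
Together with $s_k=0$ for some $k\in Z$, this gives $\mathrm{sub}(\A)\le\sum_{i\neq k}\ell(i)\le n-1$.

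The step we expect to need care on is confirming the nonemptiness of $Z$ (Step 2). The cycle-closing bound in the final step is the key idea.
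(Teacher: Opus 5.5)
Your proof is correct and is essentially the paper's argument. Both use longest-walk subsidies $\ell(i)$, both get envy-elimination from $\ell(i)\ge w_\A(i,j)+\ell(j)$, and both bound $\ell(i)\le 1$ by closing an optimal path into a non-positive cycle and using $u_{i_r}(a_{i_1})\ge 0$ and $u_{i_r}(a_{i_r})\le 1$. The only difference is how a zero subsidy is obtained: the paper sets $s_i=\ell(i)-\min_r\ell(r)$, whereas you show directly that an optimal path from $i$ ends at some $k$ with $\ell(k)=0$ (valid, since concatenation and shortcutting give $\ell(i)\ge \ell(i)+\ell(k)$). The appeal to a ``known result'' in Step~3 is unnecessary.
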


By Theorem~\ref{thm:active-chain}, Algorithm~\ref{alg:active-max-weight-chain} returns an envy-freeable allocation, so Lemma~\ref{lem:absolute-subsidy-bound} implies a final total subsidy of at most $n-1$. 

\subsection{Incompatibility of High Welfare and Minimum Subsidy}
The universal bound of Lemma~\ref{lem:absolute-subsidy-bound}
does not compare an online algorithm with the instance-wise
optimum $\OPT(\I)$. We therefore ask whether stronger
guarantees are possible. Since
Algorithm~\ref{alg:active-max-weight-chain} maximizes
utilitarian welfare, we study the
relationship between welfare maximization and subsidy
minimization.
Recall that $W(\A)$ denotes the utilitarian welfare of $\A$.
For an instance $\I$, let $W^*(\I)\coloneq
\max\{W(\A):\A\text{ is a complete allocation for }\I\}$.
When $m=n$, any complete allocation uses all houses.
By Lemma~\ref{lem:fixed-support}, any maximum-weight perfect matching, together with a minimum envy-eliminating subsidy vector, is subsidy-optimal. Thus, Algorithm~\ref{alg:active-max-weight-chain} is subsidy-optimal when $m=n$. 

The next result shows that one extra house already separates welfare maximization from subsidy minimization, even when all agents have the same values.

\begin{restatable}{theorem}{efficiencysubsidyincompatible}
\label{thm:efficiency-subsidy-incompatible}
Fix $n \ge 2$, $\rho \in (0,1]$, and $R>0$. There
is an identical-utilities instance $\I$ with $m=n+1$, values in $[0,1]$, and $\OPT(\I)>0$, such that any feasible
complete allocation $\A$ with $W(\A) \ge \rho W^*(\I)$ satisfies $\mathrm{sub}(\A) > R \cdot \mathrm{OPT}(\I)$.
\end{restatable}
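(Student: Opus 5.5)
The plan is to give an explicit identical-utilities instance in which every complete allocation that keeps the single high-value house has large subsidy. Every other complete allocation will have subsidy equal to a tiny parameter $\delta$ but welfare below $\rho W^*(\I)$. The key simplification is that with identical utilities $u$, every complete allocation is envy-freeable, since all permutations of its houses give the same welfare, so Proposition~\ref{prop:hs-characterization}(3) holds. Moreover, the minimum envy-eliminating subsidy for support $S$ is
\[
\mathrm{sub}(\A)=\sum_{i\in N}\Bigl(\max_{h\in S}u(h)-u(a_i)\Bigr)=n\max_{h\in S}u(h)-\sum_{h\in S}u(h).
\]
This follows because envy-freeness forces $s_i-s_j\ge u(a_j)-u(a_i)$. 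The agent holding a house of maximum value can be given $s=0$, and every other agent then needs $s_i\ge \max_S u - u(a_i)$. Setting $s_i$ exactly to these values is envy-eliminating, since $u(a_i)+s_i=\max_S u$ is then the same for all $i$.

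\textbf{Construction.} Take $m=n+1$ houses: one house $h^\ast$ with $u(h^\ast)=1$, $n-1$ houses with value $\delta$, and one house with value $0$, where $\delta\in(0,1)$ will be chosen small. A complete allocation uses exactly $n$ of these houses, so it omits exactly one. There are two cases.
\begin{itemize}
\item \emph{Omit $h^\ast$.} The support consists of $n-1$ houses of value $\delta$ and one of value $0$. The subsidy is $n\delta-(n-1)\delta=\delta>0$, and the welfare is $(n-1)\delta$.
\item \emph{Keep $h^\ast$.} The subsidy is $n-\sum_S u\ge n-1-(n-2)\delta\ge (n-1)(1-\delta)$.
\end{itemize}
Hence $\OPT(\I)=\delta>0$ for $\delta$ small, since the second case then exceeds $\delta$. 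Also $W^*(\I)=1+(n-1)\delta$, attained by omitting the value-$0$ house.

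\textbf{Choosing $\delta$.} We need two conditions.
\begin{itemize}
\item $(n-1)\delta<\rho\bigl(1+(n-1)\delta\bigr)$. This rules out the first case under the welfare requirement $W(\A)\ge\rho W^*(\I)$. Since $\rho>0$ is fixed, it holds once $\delta<\rho/(n-1)$, because then the left side is below $\rho$.
\item $(n-1)(1-\delta)>R\delta$. Since the left side tends to $n-1\ge 1$ and the right side to $0$, it holds for all sufficiently small $\delta$.
\end{itemize}
Taking $\delta$ below both thresholds and below $1/2$ ensures that any complete allocation with $W(\A)\ge\rho W^*(\I)$ must contain $h^\ast$. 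Such an allocation therefore has $\mathrm{sub}(\A)\ge(n-1)(1-\delta)>R\delta=R\cdot\OPT(\I)$.

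The only step needing care is the closed-form subsidy formula for identical utilities. It is easy, but I would state it explicitly because it drives every estimate. The remaining work is choosing $\delta$ as a function of $n$, $\rho$ and $R$, for instance $\delta<\min\{1/2,\ \rho/(n-1),\ (n-1)/(2R)\}$, which gives $R\delta<(n-1)/2\le(n-1)(1-\delta)$.
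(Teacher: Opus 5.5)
Your proof is correct and follows the paper's argument essentially exactly: the same instance with values $1$, $n-1$ copies of $\delta$, and $0$; the same two conditions on $\delta$ (the welfare threshold forces keeping the value-$1$ house, and $(n-1)(1-\delta)>R\delta$); and the same identical-utilities subsidy formula $n\max_S u-\sum_S u$. One small slip: your intermediate bound $n-\sum_S u\ge n-1-(n-2)\delta$ is false for the support that drops the value-$0$ house, whose subsidy is exactly $(n-1)(1-\delta)$; since $\sum_S u\le 1+(n-1)\delta$ for every support containing $h^\ast$, you get the bound $(n-1)(1-\delta)$ that you actually use directly.
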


Taking $\rho=1$, Theorem~\ref{thm:efficiency-subsidy-incompatible}
applies to Algorithm~\ref{alg:active-max-weight-chain}. The hard
instances of Theorem~\ref{thm:efficiency-subsidy-incompatible} further
show that non-wastefulness and Pareto optimality fail to guarantee a
bounded subsidy ratio: any allocation satisfying either property has an arbitrarily large subsidy ratio.

\subsection{Unbounded Competitive Ratio}
\label{subsec:no-competitive-ratio}

Theorem~\ref{thm:efficiency-subsidy-incompatible} identifies an incompatibility between welfare-oriented objectives and subsidy minimization: high welfare can already force arbitrarily high subsidy relative to $\OPT(\I)$. This still leaves open the possibility that an online algorithm designed solely to minimize subsidy could achieve a bounded multiplicative guarantee. Our next theorem rules out this possibility, even for two agents with identical utilities.
For this theorem only, we prove the lower bound even under the following
permissive update rule. After $h_t$ arrives, the algorithm may choose
any feasible allocation $\A^t$ satisfying $\mathrm{supp}(\A^t) \subseteq \mathrm{supp}(\A^{t-1}) \cup \{h_t\}$.
As in Section~\ref{sec:recourse}, every house in
$H^t \setminus \mathrm{supp}(\A^t)$ is then permanently discarded.

\begin{restatable}{theorem}{noonlinecompetitiveratio}
\label{thm:no-online-competitive-ratio}
Even for $n=2$, $m=4$, and identical utilities in $[0,1]$,
no online algorithm admits a bounded competitive ratio for
total subsidy. This holds for deterministic algorithms
against adaptive adversaries and randomized algorithms
against non-adaptive adversaries, even under the more
permissive update rule above.
\end{restatable}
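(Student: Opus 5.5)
The plan is to exploit the fact that with $n=2$ and identical utilities $u$, every complete allocation is envy-freeable. Indeed, swapping the two houses does not change welfare, so condition (3) of Proposition~\ref{prop:hs-characterization} holds. The minimum subsidy of a support $\{x,y\}$ is then exactly $|u(x)-u(y)|$: the agent holding the less valuable house needs precisely the difference, and the other agent needs nothing. Hence $\mu(\{x,y\})=|u(x)-u(y)|$ and $\OPT(\I)$ is the smallest pairwise value gap among the houses. The instance is a four-house sequence in which the first three houses are fixed and the fourth is chosen to punish whichever pair the algorithm retained.

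Rounds $1$ and $2$ are forced by validity, since both houses must be kept. Let $u(h_1)=0$, $u(h_2)=1$, $u(h_3)=1/2$, and fix a small $\varepsilon\in(0,1/4)$. At round $3$, even under the permissive update rule, the algorithm must keep a complete allocation on two of the three houses, so its support is one of $\{0,1/2\}$, $\{1/2,1\}$ or $\{0,1\}$ (identifying houses with values), and the third house is discarded forever. At round $4$ the final support is a pair from the retained pair together with $h_4$. The three candidate values for $h_4$ are $\varepsilon$, $1-\varepsilon$ and $1/2+\varepsilon$, and each one defeats one retained pair:
\begin{itemize}
\item If the algorithm retained $\{0,1/2\}$ and $u(h_4)=1-\varepsilon$, the available values are $\{0,1/2,1-\varepsilon\}$, whose minimum gap is $1/2-\varepsilon$. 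Meanwhile $\OPT=\varepsilon$, attained by the pair $(1,1-\varepsilon)$.
\item If it retained $\{1/2,1\}$ and $u(h_4)=\varepsilon$, the available values $\{\varepsilon,1/2,1\}$ have minimum gap $1/2-\varepsilon$, while $\OPT=\varepsilon$ via $(0,\varepsilon)$.
\item If it retained $\{0,1\}$ and $u(h_4)=1/2+\varepsilon$, the available values $\{0,1/2+\varepsilon,1\}$ have minimum gap $1/2-\varepsilon$, while $\OPT=\varepsilon$ via $(1/2,1/2+\varepsilon)$.
\end{itemize}
In every one of the three instances all values are distinct and the smallest pairwise gap is exactly $\varepsilon$, so $\OPT(\I)=\varepsilon>0$, as the definition of competitive ratio requires.

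For deterministic algorithms against an adaptive adversary, the adversary observes the support kept at round $3$ and plays the matching $h_4$. The algorithm's subsidy is then at least $1/2-\varepsilon$, giving a ratio of at least $(1/2-\varepsilon)/\varepsilon$, which is unbounded as $\varepsilon\to 0$.

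For randomized algorithms against a non-adaptive adversary, the adversary commits in advance to one of the three sequences. Fix any randomized algorithm and let $p_S$ be the probability that it retains pair $S$ at round $3$. This distribution does not depend on $h_4$, because $h_4$ has not yet arrived. Since $\sum_S p_S=1$, some pair $S$ has $p_S\ge 1/3$, and the adversary commits to the $h_4$ that defeats it. The expected subsidy is then at least $\tfrac13(1/2-\varepsilon)$ against $\OPT=\varepsilon$, again unbounded. Equivalently, one can apply Yao's principle with $h_4$ drawn uniformly from the three values.

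The main step to verify is the round-$4$ case analysis: in each bad case, every pair the algorithm can still form must have gap at least $1/2-\varepsilon$, and $\OPT$ must equal exactly $\varepsilon$ in each of the three instances. The permissive rule adds nothing, because the discard made at round $3$ is unavoidable regardless of how the allocation is rearranged.
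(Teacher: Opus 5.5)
Your proposal is correct and follows essentially the same route as the paper. The paper uses the same three-house prefix with values $0,\tfrac12,1$, the same $\varepsilon$-perturbed fourth house matched to the discarded value (giving $\mathrm{OPT}(\I)=\varepsilon$ against a retained gap of at least $\tfrac12-\varepsilon$), and the same two-agent formula $\mathrm{sub}(\A)=|u(h)-u(g)|$. For the randomized non-adaptive bound it uses the same pigeonhole step: the discard distribution at round $3$ is independent of $h_4$, so some discard has probability at least $\tfrac13$.
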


Thus, multiplicative approximation guarantees are
fundamentally unsuitable for online subsidy minimization.
We therefore turn to additive loss and to
settings in which exact minimization remains possible.

\section{Positive Subsidy Guarantees}
\label{sec:positive-subsidy}
We now give positive results for online subsidy minimization. We first
show that, for arbitrary non-negative utilities, exact minimization is
possible whenever \(m\in\{n,n+1\}\). For identical utilities in
\([0,1]\) and any number of arrivals, we then obtain additive
guarantees: we exactly solve the two-agent case with optimal
deterministic additive loss \(1/2\), and we give a deterministic valid
\(1\)-flexible algorithm with final subsidy at most \((n-1)^2/n\) for
arbitrary \(n\).

\paragraph{Exact Minimization with at Most One Extra House.} The next result gives a deterministic online algorithm that is valid for every input sequence. 
Its exact optimality guarantee applies only when the sequence has length $n$ or $n+1$, without knowing in advance which case holds.

\begin{restatable}{theorem}{oneextrahouse}
\label{thm:one-extra-house}
    There is a deterministic polynomial-time $n$-flexible
    online algorithm that is valid on every input sequence and is
    subsidy-optimal whenever $m\in\{n,n+1\}$. In particular,
    for every such instance, $\mathrm{sub}(\A^m)=\mathrm{OPT}(\I)$.
    The algorithm does not need to know in advance whether the
    sequence ends after $n$ or $n+1$ houses.
\end{restatable}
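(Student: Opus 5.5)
The algorithm is natural. In rounds $t\le n$, place $h_t$ on an unassigned agent. At round $n$, choose a maximum-weight perfect matching between $N$ and $H^n$; this may require reassignments. At round $n+1$, compute $\mu(S)$ for each of the $n+1$ candidate supports $S=H^{n+1}\setminus\{h\}$ with $h\in H^{n+1}$, pick a minimizer $S^*$, and move to a maximum-weight perfect matching on $S^*$. For every later round $t>n+1$, run the max-weight step of Algorithm~\ref{alg:active-max-weight-chain} on the active houses. This keeps the algorithm valid and every complete allocation envy-freeable by Theorem~\ref{thm:active-chain}.

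Polynomial time follows because $\mu(S)$ for a fixed support is computable in polynomial time. By Lemma~\ref{lem:fixed-support}, any maximum-weight perfect matching on $S$ attains $\mu(S)$. Its minimum subsidy is obtained from longest-path weights in the envy graph, which has no positive cycles, following \citet{halpern2019fair}.

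Optimality is immediate from the construction. If $m=n$, the support must be $H$, and Lemma~\ref{lem:fixed-support} gives $\mathrm{sub}(\A^n)=\mu(H)=\mathrm{OPT}(\I)$. If $m=n+1$, every complete allocation has support equal to some $S=H\setminus\{h\}$. Hence $\mathrm{OPT}(\I)=\min_h\mu(H\setminus\{h\})=\mu(S^*)=\mathrm{sub}(\A^{n+1})$. No discarding happens before round $n+1$, so all candidate supports are still available. The algorithm never needs to know $m$: at round $n$ it is already optimal for the case $m=n$, and at round $n+1$ for the case $m=n+1$.

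The main obstacle is showing that each transition is a chain update of length at most $n$, as Definition~\ref{def:chain-update} requires.

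At round $t\le n$, we need some maximum-weight matching on $H^t$ whose symmetric difference with the previous one is a single alternating path starting at $h_t$. I would reuse the argument behind Theorem~\ref{thm:active-chain}: maintain max-weight partial matchings of size $t$ during rounds $t<n$ as well, breaking ties by distance to the previous matching. The symmetric difference of two max-weight matchings differing by one new house decomposes into alternating paths and cycles. Cycles can be removed without changing weight, so only the augmenting path through $h_t$ remains, and this is a chain.

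At round $n+1$, the new matching on $S^*$ need not relate to $\A^n$ by a single chain if we pick it arbitrarily. So among maximum-weight perfect matchings on $S^*$, pick the one closest to $\A^n$. Then $E(\A^n)\triangle E(\A^{n+1})$ consists of at most one alternating path plus cycles. The path runs from $h_{n+1}$ to the discarded house $H\setminus S^*$, or is empty if $h_{n+1}$ itself is discarded.

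Any alternating cycle $C$ in this difference lies inside $S^*\cap H^n$. Optimality of $\A^n$ on $H^n$ forces $w(C\cap\A^n)\ge w(C\cap\A^{n+1})$. Optimality of $\A^{n+1}$ on $S^*$ forces the reverse inequality. So the two sides have equal weight, and swapping $C$ back preserves maximum weight on $S^*$ while strictly reducing the distance, contradicting the choice. What remains is a single path, i.e., a chain of at most $n$ agents.

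If the standard path-swap argument fails because the path's two halves have unequal weights, I would fall back on the fact that every maximum-weight perfect matching on $S^*$ attains $\mu(S^*)$. This gives freedom to choose the tie-broken one, which is the point the argument relies on.
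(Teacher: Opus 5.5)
Your proposal is correct and follows essentially the paper's route. As in the paper, you run Algorithm~\ref{alg:active-max-weight-chain} for the first $n$ rounds, so $n$-flexibility there is just Theorem~\ref{thm:active-chain} on the length-$n$ prefix; your literal opening description of arbitrary placements before round $n$ would not give this. At round $n+1$ you choose a $\mu$-minimizing support and, among maximum-weight perfect matchings on it, the one closest to $\A^n$, then eliminate stray alternating cycles via the two optimality inequalities plus tie-breaking. The only real difference is after round $n+1$, where the paper simply rejects every later house, making validity and $n$-flexibility trivial. Your choice to keep running the max-weight step is not covered verbatim by Theorem~\ref{thm:active-chain}, since $\A^{n+1}$ need not maximize welfare over $H^{n+1}$. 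The same cycle and tie-breaking argument still goes through, because it only needs $\A^{t-1}$ to be a maximum-weight perfect matching on its own support.
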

This guarantee is the best possible with respect to the number of extra
houses. By Theorem~\ref{thm:no-online-competitive-ratio}, once two
extra houses may arrive (already for $n=2$ and $m=4$), no
deterministic valid online algorithm admits a bounded competitive ratio
against an adaptive adversary, even for identical normalized utilities.
The randomized lower bound gives the corresponding impossibility
against a non-adaptive adversary.

\paragraph{Identical Utilities.}
Once the model allows two extra houses, exact minimization cannot be
guaranteed uniformly over all numbers of agents, so we turn to additive
guarantees. Appendix~\ref{app:identical} completely resolves the
two-agent case with identical normalized utilities: the optimal
deterministic additive loss against an adaptive adversary is exactly
\(1/2\), attained by a valid \(1\)-flexible algorithm
(Theorem~\ref{thm:identical-two}).

For arbitrary \(n\), Theorem~\ref{thm:identical-general} gives a deterministic valid \(1\)-flexible algorithm that is subsidy-optimal when \(m=n\) and has final subsidy at most \((n-1)^2/n\) whenever \(m\ge n+1\). Consequently, its additive loss is at most
\((n-1)^2/n\). Proposition~\ref{prop:selection-bound-lower} shows that the absolute subsidy constant \((n-1)^2/n\) in the underlying one-step selection bound is best possible. This does not determine the optimal additive loss for \(n\ge3\), which remains open.

\section{Algorithms with Predictions} \label{sec:learning-augmented}

Sections~\ref{sec:subsidy-limits} and~\ref{sec:positive-subsidy}
show that purely online subsidy minimization has a sharp limitation:
exact minimization is possible with at most one extra house, but no
online algorithm admits a bounded competitive ratio once two or more
extra houses may arrive. We therefore ask whether predictions can
recover exact optimality when correct, while remaining robust to
prediction errors.
Unlike prior learning-augmented allocation work, which often
predicts welfare values or aggregate statistics
\citep{banerjee2022online,cohen2023general,neoh2026online,choo2026approximate},
our predictions are tailored to the matching structure of house
allocation: they identify the houses that should remain available for
the final allocation; see Appendix~\ref{sec:Semi-Online Fair Division}
for more details.

We study two prediction models: (1) score predictions, which rank
houses for arbitrary utilities, and (2) cutoff predictions, which
predict a single house-quality threshold for common-quality utilities.
In both models, correct predictions recover \(\mathrm{OPT}(\I)\), while
the bounds given below quantify the additional subsidy incurred from
prediction errors.

\subsection{Score Predictions}

A score prediction assigns a real number $p(h)$ to each house $h$ when it
arrives. We use a fixed deterministic tie-breaking rule that does not depend
on the round. Thus, for every prefix $H^t$, the scores together with the
tie-breaking rule induce a total order on the houses in $H^t$, where houses
with larger scores are ranked higher. Let $B_0\coloneq\varnothing$, and for every
$t\in[m]$, let $B_t$ denote the set consisting of the
$\min\{t,n\}$ highest-ranked houses in $H^t$ under this order. 
Intuitively, $B_t$ contains the houses that the prediction currently
identifies as the most promising ones to remain available for the
final matching.

At the final round, $B_m$ contains exactly $n$ houses.
We measure the quality of the final score-selected house set
by $\eta_{\mathrm{sc}}(p;\I) := \mu(B_m)-\mathrm{OPT}(\I)$,
i.e., the additional subsidy incurred by $B_m$ instead of a minimum-subsidy $n$-house set.
We call the scores \textit{correct} when $\eta_{\mathrm{sc}}(p;\I)=0$, i.e., when $B_m$ is a minimum-subsidy $n$-house set.

Given the predicted support $B_t$ at each round $t$, the natural online algorithm is the \textit{score-keeping algorithm}, which maintains, after each arrival, a maximum-welfare allocation among all allocations whose support is exactly $B_t$; it uses the same tie-breaking
rule as Algorithm~\ref{alg:active-max-weight-chain}. That is, the
algorithm behaves like Algorithm~\ref{alg:active-max-weight-chain}, but with
the support at round $t$ prescribed by the scores rather than determined
by the algorithm. The next theorem shows that if the prediction correctly
identifies the houses that should remain available for the final matching,
then the score-keeping algorithm recovers the offline optimum.

\begin{restatable}{theorem}{scoreprediction}
\label{thm:score-prediction}
For every score prediction, the score-keeping algorithm is
deterministic, valid, and $n$-flexible.  Its final allocation
$\A^m$ attains the minimum subsidy possible for the final
score-selected house set: $\mathrm{sub}(\A^m)=\mu(B_m)$.
Consequently, $\mathrm{sub}(\A^m)
=
\mathrm{OPT}(\I)+\eta_{\mathrm{sc}}(p;\I)$.
If $u_i(h)\in[0,1]$ for every $i\in N$ and $h\in H$, then $\mathrm{sub}(\A^m)
\le
\min\left\{
\mathrm{OPT}(\I)+\eta_{\mathrm{sc}}(p;\I),
\,n-1
\right\}$.
\end{restatable}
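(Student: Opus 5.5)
The proof splits into four parts: validity and flexibility, optimality on the final support, the identity with $\eta_{\mathrm{sc}}$, and the normalized bound.

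\textbf{Structure of the supports.} Since the ranking is a fixed total order on all houses, restricted to each prefix, the sets $B_t$ evolve very simply.
\begin{itemize}
\item For $t\le n$, $B_t=H^t$.
\item For $t>n$, $B_t$ is obtained from $B_{t-1}$ in one of two ways: either $h_t$ ranks below all of $B_{t-1}$, so $B_t=B_{t-1}$ and $h_t$ is rejected; or $h_t$ enters and the lowest-ranked house of $B_{t-1}$ is removed, so $B_t=B_{t-1}\cup\{h_t\}\setminus\{h'\}$.
\end{itemize}
In particular, a house that leaves $B_t$ never re-enters, because the houses ranked above it only accumulate. Hence $H^t\setminus B_t$ is monotone in $t$. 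This makes the discard rule consistent: defining $\A^t$ with $\mathrm{supp}(\A^t)=B_t$ never requires recovering a discarded house. Validity follows, since $|B_t|=\min\{t,n\}$ and so $\A^t$ is complete for $t\ge n$.

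\textbf{Flexibility.} When $B_t=B_{t-1}$, the tie-breaking rule keeps $\A^{t}=\A^{t-1}$, which is a rejection. Otherwise we compare $E(\A^{t-1})$, a max-weight matching on the support $B_{t-1}$, with a new max-weight matching $\A^t$ on $B_t$. I would reuse the argument behind Theorem~\ref{thm:active-chain}. The symmetric difference $E(\A^t)\triangle E(\A^{t-1})$ decomposes into alternating paths and cycles.
\begin{itemize}
\item Any alternating cycle only permutes houses within the common support. Since both matchings are maximum-weight for their own supports, the cycle has zero weight change. Removing it lowers the symmetric difference without reducing welfare, contradicting the tie-break, so no cycles occur.
\item Any path must end at $h_t$ on one side and at $h'$ (or at an unassigned agent when $t\le n$) on the other. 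Otherwise the path would lie within the common support, and the same exchange argument would eliminate it.
\end{itemize}
The edges of the remaining single path form exactly a chain update in the sense of Definition~\ref{def:chain-update}, so the chain has length at most $n$.

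\textbf{Subsidy identities.} At every complete round, $\A^t$ is a maximum-weight perfect matching between $N$ and $B_t$. By Lemma~\ref{lem:fixed-support}, taking $t=m$ gives $\mathrm{sub}(\A^m)=\mu(B_m)$. Then $\mathrm{sub}(\A^m)=\mathrm{OPT}(\I)+\eta_{\mathrm{sc}}(p;\I)$ holds by the definition of $\eta_{\mathrm{sc}}$. Finally, $\A^m$ is envy-freeable, again by Lemma~\ref{lem:fixed-support}, so Lemma~\ref{lem:absolute-subsidy-bound} yields the bound $n-1$ under normalized utilities, and the minimum of the two bounds follows.

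\textbf{Main obstacle.} The delicate step is showing that the tie-broken new optimum differs from the old one by a single path and no cycles. The point that needs the most care is that the old matching restricted to $B_{t-1}\setminus\{h'\}$, together with the edges of the path, gives a feasible competitor whose welfare is no smaller. I would handle this with the standard augmenting-path exchange argument: swapping a component of the symmetric difference between the two matchings cannot raise the welfare of either, because each is already optimal for its own support.
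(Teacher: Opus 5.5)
Your proposal is correct and follows essentially the same route as the paper. Both arguments use the same ingredients:
\begin{itemize}
\item monotonicity of the top-$n$ sets, which gives the discard rule and validity;
\item a symmetric-difference argument in which every component avoiding $h_t$ lies in the common support and is ruled out by optimality on each support together with the distance tie-break (the paper makes this explicit through its ``balanced component'' case split for $t\le n$ versus $t>n$);
\item Lemma~\ref{lem:fixed-support} and Lemma~\ref{lem:absolute-subsidy-bound} for the subsidy claims.
\end{itemize}
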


Theorem~\ref{thm:score-prediction} gives the learning-augmented guarantee for arbitrary utilities. It shows that correct scores give exact subsidy minimization, while incorrect scores incur additive loss exactly $\eta_{\mathrm{sc}}(p;\I)$, and under normalized utilities the subsidy is never above $n-1$.
The exact guarantee is also unchanged under sufficiently small perturbations of the score values.
Let $p^*$ be a correct score prediction, and let
$B_m^*$ be its final top $n$ houses. Define $\Gamma(p^*) := \min_{{h\in B_m^*,g\in H\setminus B_m^*}} \left( p^*(h)-p^*(g) \right)$,
with the convention that $\Gamma(p^*)=+\infty$ when $m=n$. If
$\Gamma(p^*)>0$, the final ranking is unchanged under sufficiently
small perturbations. In particular, if $\max_{h\in H}|p(h)-p^*(h)|<\frac{\Gamma(p^*)}{2}$,
then $p$ and $p^*$ induce the same final top $n$ houses, and
hence $\mathrm{sub}(\A^m)=\mathrm{OPT}(\I)$. No positive
perturbation guarantee is implied when $\Gamma(p^*)=0$.

\paragraph{Cutoff Predictions for Common-Quality Utilities.}
Score predictions are general, but they require a full ranking of the houses.
Suppose
$u_i(h)=\beta_i+\alpha_iq_h$, where the quality $q_h$ of an arriving house is observed when $h$
arrives and
$0\le\alpha_1\le\cdots\le\alpha_n$.
For this class, the offline support-selection problem is one-dimensional:
Appendix~\ref{app:cutoff} shows that a minimum-subsidy choice of
$n$ houses can always be taken as $n$ consecutive houses in quality
order. In particular, an optimal support is determined by its largest
quality, so a single predicted cutoff suffices. Given a predicted
cutoff $\tau$, the cutoff algorithm assigns each house the score
$p_\tau(h)=q_h$ if $q_h\le\tau$ and $-1$ otherwise, and then runs the
score-keeping algorithm with these scores. Let $\tau^*$ be a correct
cutoff, and let $\tau\ge\tau^*$ be the cutoff used by the algorithm.
Theorem~\ref{thm:predicted-cutoff} gives
$\mathrm{sub}(\A^m) \le \min\{ \mathrm{OPT}(\I) +(n-1)\alpha_{n-1}(\tau-\tau^*), \, n-1 \}$.
Thus, a correct cutoff gives $\OPT(\I)$. If
$\lvert\widehat{\tau}-\tau^*\rvert\le\varepsilon$, using
$\widehat{\tau}+\varepsilon$ gives an added term of at most
$2(n-1)\alpha_{n-1}\varepsilon$.
Appendix~\ref{app:cutoff} also shows that the linear term cannot be
reduced in general and that a small underestimate can cause additive
loss arbitrarily close to $n-1$.

\subsection{Limits of Prediction-Based Guarantees}
\label{subsec:prediction-limits}

The guarantees above are nearly best possible for deterministic algorithms. We show that achieving the score-prediction guarantee may require a full $n$-step update, and that exact optimality under correct predictions does not improve the worst-case loss under incorrect predictions. Appendix~\ref{sec:underestimate-costly} further shows that, for cutoff predictions, even a small underestimate can incur loss arbitrarily close to $n-1$.

\begin{restatable}{theorem}{predictedsupportrecoursetight}
\label{thm:predicted-support-recourse-tight}
Against an adaptive adversary, for every $n \ge 2$ and every $k<n$, no deterministic valid
$k$-flexible online algorithm, even when it is given the score
$p(h_t)$ at the arrival of each house $h_t$, can ensure $\mathrm{sub}(\A^m)=\mu(B_m)$ for every input sequence and every score prediction. This remains true when $m=n+1$.
\end{restatable}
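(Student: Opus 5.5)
The plan is an explicit adaptive-adversary construction with $m=n+1$. First, $n$ houses arrive and force a particular perfect matching at round $n$. Then one extra house arrives whose score pushes a specific old house out of $B_{n+1}$. On the resulting house set, the only subsidy-optimal allocation differs from the round-$n$ allocation by a cyclic shift through all $n$ agents, which no chain of length $k<n$ can produce. Since $n\le k$ fails for every $k<n$, one construction per $n$ suffices.

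\textbf{Rounds $1,\dots,n$.} Houses $g_1,\dots,g_n$ arrive with decreasing scores $p(g_1)>\cdots>p(g_n)$, so that $B_n=\{g_1,\dots,g_n\}$. The utilities are
\[
u_i(g_i)=2 \text{ for all } i, \qquad u_i(g_{i-1})=2-\delta \text{ for } 2\le i\le n,
\]
all other values on the $g$'s are $0$, and $\delta>0$ is small. The identity matching $i\mapsto g_i$ is then the unique maximum-weight perfect matching on $B_n$. Validity forces $\mathrm{supp}(\A^n)=B_n$. Because the input may stop at $m=n$, the guarantee $\mathrm{sub}(\A^n)=\mu(B_n)<\infty$ together with Lemma~\ref{lem:fixed-support} forces $\A^n$ to be exactly this identity matching.

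\textbf{Round $n+1$.} A house $h$ arrives with the top score, so $B_{n+1}=\{h,g_1,\dots,g_{n-1}\}$. Its utilities are $u_1(h)=2$ and $u_i(h)=0$ for $i\ge2$. On $B_{n+1}$ I compare:
\begin{itemize}
\item the shift $1\mapsto h$, $i\mapsto g_{i-1}$ for $i\ge 2$, with weight $2+(n-1)(2-\delta)$;
\item the near-identity $i\mapsto g_i$ for $i<n$, $n\mapsto h$, with weight $2(n-1)$.
\end{itemize}
The shift wins by $2-(n-1)\delta>0$ once $\delta<2/(n-1)$. I will check that it is the unique maximum-weight perfect matching on $N\times B_{n+1}$. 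The nonzero edges form a path-like structure, and any perfect matching other than the shift must give some agent a value-$0$ house or revert to the near-identity.

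By Lemma~\ref{lem:fixed-support}, any final allocation with support $B_{n+1}$ and $\mathrm{sub}(\A^{n+1})=\mu(B_{n+1})$ must therefore be the shift. Reaching the shift from the identity $\A^n$ moves every agent, so it needs a chain of length $n$, namely $i_1=1, i_2=2,\dots,i_n=n$ with $g_n$ discarded. A $k$-flexible algorithm with $k<n$ cannot perform it.

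\textbf{Other supports.} It remains to rule out final allocations whose support differs from $B_{n+1}$ but still satisfy $\mathrm{sub}(\A^{n+1})=\mu(B_{n+1})$ by coincidence. Such a support contains $g_n$: either the algorithm rejects $h$, or it runs a short chain discarding some $g_j$ with $j<n$. For each such support $S$ I will show $\mathrm{sub}\ge\mu(S)>\mu(B_{n+1})$. The idea is to compute $\mu(B_{n+1})$ via the longest-path subsidies of \citet{halpern2019fair}. In the shift allocation, agent $i\ge2$ envies only the holder of $g_i$, by $\delta$, which should give $\mu(B_{n+1})=O(n\delta)$. In every support containing $g_n$, I expect some agent to be left with a value-$0$ or clearly inferior house relative to a house held by another agent, forcing subsidy bounded away from $0$ independently of $\delta$.

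\textbf{Main obstacle.} This last comparison is the step I expect to be hardest. If the gap does not come out cleanly, I will add small perturbations to the remaining zero entries so that $\mu(S)>\mu(B_{n+1})$ strictly for all supports reachable with chains of length $<n$. Because the adversary is adaptive, it is enough to handle the finitely many $S$ obtainable from the forced $\A^n$.
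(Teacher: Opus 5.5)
Your skeleton matches the paper's: force the identity at round $n$, then make the only acceptable allocation on $B_{n+1}$ a full $n$-agent shift. But your instance breaks at exactly the step you flagged, and in the opposite direction from what you expect. Every support $S\neq B_{n+1}$ reachable by rejection or by a chain of length $<n$ has $\mu(S)\le\mu(B_{n+1})$. Two examples show this:
\begin{itemize}
\item If the algorithm rejects $h$, it keeps the identity on $B_n$. That allocation is already envy-free, since agent $i$ values $g_{i-1}$ at $2-\delta<2$ and everything else at $0$. So $\mu(B_n)=0$.
\item The length-$1$ chain that gives $h$ to agent $1$ and discards $g_1$ also yields an envy-free allocation, so $\mu(\{h,g_2,\dots,g_n\})=0$.
\end{itemize}
By contrast, in the shift on $B_{n+1}$ the nonnegative envy edges form the path $1\to2\to\cdots\to n$ with weights $0,\delta,\dots,\delta$. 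This gives $\mu(B_{n+1})=\delta(n-2)(n+1)/2$.

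For $n=2$ this value is $0$: agent $1$ is indifferent between $h$ and $g_1$, and agent $2$ values $h$ at $0$. So the algorithm that simply rejects $h$ gets $\mathrm{sub}(\A^3)=0=\mu(B_3)$ and defeats your adversary, even though $n=2$, $k=1$ is within the theorem. Small perturbations of the zero entries do not help, because $\mu(B_2)$ and $\mu(B_3)$ both remain $0$. For $n\ge3$ you could rescue the argument only by using that the guarantee is an equality, so that a strictly smaller subsidy also violates it. That is the reverse of the inequality $\mu(S)>\mu(B_{n+1})$ you set out to prove. Your expectation of a gap bounded away from $0$ is false in every case.

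The missing idea is to make $B_{n+1}$ the unique $\mu$-minimizer by a wide margin, while keeping its unique max-weight matching a full shift. The paper does this with common-quality utilities $u_i(h_t)=i\,x_t$, where $x_1=0$ and $x_t=1+(t-2)\delta$ for $t\ge 2$. House $h_1$ gets the lowest score, and the new house gets both the top score and the highest quality. Then $B_{n+1}=H^{n+1}\setminus\{h_1\}$ has $\mu=\delta\sum_{r=1}^{n-1}r^2<1$. Every other support contains $h_1$, so by Lemma~\ref{lem:common-quality-support} it has $\mu\ge 1$. The unique max-weight matching on $B_{n+1}$ is $i\mapsto h_{i+1}$, which requires a chain of length $n$.

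Separately, you force the identity at round $n$ by letting the input stop at $m=n$. That does not establish the clause ``this remains true when $m=n+1$'', because an algorithm that deviates at round $n$ is then caught only on a length-$n$ input. The paper instead uses an alternative $(n+1)$-st house with score $-1$ and utilities $iM$ for very large $M$. If the round-$n$ allocation is not the identity, rejecting this house leaves a non-envy-freeable allocation. Accepting it gives a support with $\mu>\mu(H^n)=\mu(B_{n+1})$.
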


The second limitation concerns incorrect predictions. For cutoff predictions, we say that a deterministic algorithm is exact on correct cutoffs if, whenever the predicted cutoff is correct and $C(\tau)$ is the unique $n$-house set minimizing $\mu$, the algorithm returns a final allocation $\A^m$ with $\mathrm{sub}(\A^m)=\mathrm{OPT}(\I)$.
For score predictions, define exactness on correct scores analogously: when the final top $n$ houses form the unique $n$-house set minimizing $\mu$, the algorithm must return an allocation with subsidy $\mathrm{OPT}(\I)$. Now, we turn to the cost of incorrect predictions.

\begin{restatable}{theorem}{wrongpredictionloss}
\label{thm:wrong-prediction-loss}
Against an adaptive adversary, fix $n\ge 2$. Let $A$ be a deterministic valid prediction-augmented algorithm that is exact on correct predictions, either in the cutoff model or in the score model. For every $\gamma>0$, there is an identical-utilities instance $\I$ with values in $[0,1]$ and an incorrect prediction $\pi$ in the same model such that $\mathrm{sub}(A(\I,\pi))-\mathrm{OPT}(\I)\ge n-1-\gamma$.
\end{restatable}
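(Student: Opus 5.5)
The plan is an indistinguishability argument. I will build two input sequences that agree through round $n+1$ and come with the same prediction. On the first, the prediction is correct, so exactness forces the algorithm to keep a high-value house at round $n+1$. On the second, the sequence stops at round $n+1$, the same prediction is incorrect, and keeping that house is very costly. Throughout I use that for identical utilities $u$ and a complete allocation with values $v_1\ge\cdots\ge v_n$, every allocation is envy-freeable. The minimum subsidy is then $\sum_i (v_1-v_i)$, since each agent must be raised to the top value. So $\mu(S)$ depends only on the spread of the values in $S$.

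\emph{Construction.} Fix a small $\delta>0$, to be chosen in terms of $\gamma$ and $n$.
\begin{itemize}
\item Rounds $1,\dots,n$: houses with pairwise distinct values in $[0,\delta]$.
\item Round $n+1$: a house $h^\star$ with $u(h^\star)=1$.
\item Continuation (a): $n-1$ further houses with pairwise distinct values in $[1-\delta,1)$.
\item Continuation (b): the sequence terminates at $m=n+1$.
\end{itemize}
In the cutoff model, with qualities equal to values (so $\beta_i=0$ and $\alpha_i=1$), the prediction is $\tau=1$. In the score model, each house receives its own value as score, so the scores revealed through round $n+1$ are identical in (a) and (b).

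\emph{Continuation (a).} The $n$-house set consisting of $h^\star$ and the $n-1$ late houses has spread at most $\delta$, so $\mu\le (n-1)\delta$. Any other $n$-set must contain a house of value at most $\delta$ together with one of value at least $1-\delta$, or else consist of the $n$ low houses. I will choose the low and high values generically so that this high set is the unique minimizer of $\mu$. It coincides with $C(\tau)$ for $\tau=1$, and also with the top-$n$ scored houses, so the prediction is correct in both models. Since $A$ is exact on correct predictions and deterministic, it must end with support exactly this set. Because discarded houses are never recovered, $h^\star$ must be in $\mathrm{supp}(\A^{n+1})$.

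\emph{Continuation (b).} The algorithm has received identical input and prediction through round $n+1$, so its state $\A^{n+1}$ is the same as in (a). It therefore contains $h^\star$ and $n-1$ houses of value at most $\delta$, giving subsidy at least $(n-1)(1-\delta)$. Meanwhile $\mathrm{OPT}(\I)\le (n-1)\delta$, witnessed by the $n$ low houses. Hence the loss is at least $(n-1)(1-2\delta)$, which is at least $n-1-\gamma$ once $\delta\le \gamma/(2(n-1))$.

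It remains to check that the prediction is incorrect in (b). In the cutoff model the correct cutoff is about $\delta$, not $1$. In the score model the top-$n$ houses include $h^\star$ and so have $\mu\approx n-1>\mathrm{OPT}$.

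The step I expect to need the most care is the uniqueness requirement in (a), which the exactness definition needs. I must check that no mixed $n$-set ties with the high set, which holds for $\delta<1/3$ with distinct values. I must also check that in the cutoff model $C(1)$ is precisely the high set. This may force taking $\tau$ equal to the largest late value rather than $1$, or shifting values slightly, without affecting the bound.
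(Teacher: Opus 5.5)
Your argument is correct and is essentially the paper's indistinguishability proof with the two clusters' roles mirrored. The paper uses $\tau=\varepsilon$ to force retention of the $n-1$ zero-quality houses through a prefix of $n$ high houses, and its stopped prediction fails because $C(\tau)$ is undefined; you use $\tau=1$ to force retention of the single top house $h^\star$, and your stopped prediction fails because $\mu(C(1))>\mathrm{OPT}$. Two details: the late cluster must be chosen strictly tighter than the low one so that the high set beats the all-low set (distinctness alone does not decide this), and your concern about $\tau$ is unnecessary, since $u(h^\star)=1$ is already the largest quality in $C(1)$.
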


Thus, Theorems~\ref{thm:score-prediction}--\ref{thm:wrong-prediction-loss}
characterize the power and limitations of predictions for online subsidy minimization.
For arbitrary utilities, score predictions recover the offline optimum under accurate predictions and incur additive loss proportional to the prediction error. For common-quality utilities, a single cutoff prediction suffices, with guarantees degrading gracefully as the prediction error increases.

The lower bounds show that these guarantees are near-best possible for deterministic algorithms. Matching the score prediction exactly may require $n$-flexibility, and exactness on correct predictions cannot improve the worst-case $n-1$ additive bound when predictions are arbitrary. Thus, predictions are most useful when they identify the right houses to keep, while the normalized $n-1$ subsidy bound remains the right guarantee when they do not.

\section{Conclusion and Future Work}

We introduced \emph{online house allocation}, where houses arrive sequentially and the algorithm must maintain an allocation without knowledge of future arrivals. Our results show that the matching constraint makes recourse an intrinsic part of the problem: envy-freeability can always be maintained, but long reassignment chains are unavoidable in the worst case. At the same time, exact subsidy minimization is possible with at most one extra house, yet no online algorithm admits a bounded competitive ratio once two or more extra houses may arrive. Beyond this threshold, we showed that additive guarantees remain possible for identical utilities, and that predictions tailored to the matching structure of house allocation recover the offline optimum when correct while remaining robust to inaccurate predictions.

Several directions remain open. The most immediate is to determine the optimal deterministic additive guarantee for identical utilities when \(n\ge3\). It would also be interesting to sharpen the randomized lower bounds for additive subsidy minimization and to explore alternative prediction models. Finally, it would be valuable to study other fairness notions and alternative recourse measures, such as bounds on total reassignment or on the number of times an individual agent may be moved.

\bibliographystyle{plainnat} 
\bibliography{abb,bib}

\clearpage
\appendix

\section{Additional Related Work}
\label{supp:related}

\subsection{Online Fair Division}

Online fair division studies allocation problems in which goods or
agents are not all available at once; see the survey of
\citet{AleksandrovWalsh2020Survey}. A canonical online model,
motivated by applications such as food banks, was introduced by
\citet{AleksandrovEtAl2015}, where items arrive sequentially and must
be allocated without knowledge of future arrivals.

Most closely related to our work is the recent paper of
\citet{KulkarniMehtaNarayanPonitka2025}, who study online fair division
with subsidy. In their model, indivisible items arrive online and are
immediately and irrevocably allocated to offline agents; the goal is
to maintain envy-freeability while minimizing the required subsidy.
Their results show that envy-freeability can be maintained for additive
and several more general valuation classes, although the required
subsidy may be much larger online than offline. Our setting differs
fundamentally: each agent ultimately receives a single house rather
than a bundle. Consequently, accepting a newly arriving house may
require replacing or reassigning previously allocated houses, making
recourse an intrinsic aspect of the problem.

Earlier work on online fair division has studied binary additive
valuations and applications to food banks
\citep{AleksandrovEtAl2015}, vanishing envy under general additive
valuations \citep{benade2018make}, the compatibility of fairness and
Pareto efficiency under increasingly powerful adversaries
\citep{zeng2020fairness,benade2025dynamic}, and egalitarian welfare
maximization \citep{springer2022online}. More recent work obtains stronger guarantees by imposing structure on valuations or by relaxing immediate allocation. Amanatidis et al.~\cite{amanatidis2025onlineFD-2valued} study personalized two-value instances under immediate and irrevocable allocation, and also show how limited lookahead enables matching-based fairness guarantees. In a different direction, Amanatidis et al.~\cite{amanatidis2026buffers} allow a bounded reordering buffer, which delays the allocation of a small number of items and gives us strong envy-freeness guarantees for personalized $k$-value instances.
These models all allocate arriving items irrevocably into bundles and study fairness or
efficiency of the resulting bundle allocation. Their additional power comes from restrictions on valuations, advance information, or delayed allocation. In contrast, our model is a one-house-per-agent matching problem with arbitrary nonnegative cardinal utilities. The matching constraint fundamentally changes the online problem by making reassignment unavoidable and motivating our study of recourse alongside subsidy-based fairness.

\subsection{Semi-Online Fair Division} 
\label{sec:Semi-Online Fair Division}
There is also a growing body of work on semi-online fair division, also known as online fair division with predictions, where algorithms have access to a priori (side) information about future items that is, e.g., learned from historical data. \citet{banerjee2022online} study online Nash social welfare for
divisible goods.  Without additional information, their
worst-case guarantee scales linearly with the number of
agents.  They then assume predictions of each agent's total
value for all arriving goods.  With correct predictions,
their algorithm obtains logarithmic guarantees in the number
of agents and in the number of arrivals, and they prove
nearly matching logarithmic lower bounds.  Their predictions
concern aggregate future value, whereas our predictions
identify houses that should remain available for a final
one-house-per-agent matching.

Building on their work, other studies consider \textit{normalized} valuations (i.e., each agent's valuation of the entire set of goods is equal to one); specifically, they note that considering normalized valuations is equivalent to assuming access to the sum of agents’ values over all future items. \citet{barman2022universal} explore online resource allocations with the goal of maximizing the generalized means, covering a spectrum of welfare functions, like average utilitarian welfare, egalitarian social welfare and Nash social welfare. They provide tight (up to poly-logarithmic factors) approximation guarantees, which have been recently improved by \citet{huang2025long}. Online fair division with predictions of \textit{divisible} goods has also been studied in other works (e.g., \citep{cohen2023general,an2024best}). 

In contrast, \citet{zhou2023multi} study the allocation of \textit{indivisible} goods and chores among agents with {normalized} valuations, which induce positive and negative utilities, respectively. Given $n$ agents, they analyze the fairness notion of maximin fair share (MMS), which requires that each agent receives a bundle of items she values at least as much as she would have obtained if she were allowed to partition all items into $n$ bundles and then get the least valuable bundle (according to her own valuation). \citet{neoh2026online} build on the work by \citet{zhou2023multi}, investigating how partial knowledge of future items can facilitate the design of online algorithms that satisfy certain fairness notions, including MMS and EF1. They examine two information models: (1) algorithms know the total sum of valuations (effectively assuming normalized utilities); and (2) algorithms have access to frequency predictions, i.e., for each agent and each value, a predictor specifies the frequency of this value among the agents’ valuations for arriving items. Several other works consider the model with indivisible items \citep{spaeh2023online,balkanski2023strategyproof,cohen2024plant}. 

In a related work, \citet{choo2026approximate} study approximate proportionality up to one good (PROP1) in online fair division of indivisible goods. They show that three natural greedy algorithms fail to guarantee any positive approximation to PROP1 against adaptive adversaries. Given this hardness result, they study non-adaptive adversaries and the use of side-information, in the spirit of learning-augmented algorithms. In particular, against non-adaptive adversaries, \citet{choo2026approximate} prove that the uniformly random allocation rule achieves an $\Omega(1/\log(n/\delta))$-approximation to PROP1 with probability at least $1-\delta$, for any number of agents $n\ge 2$ and failure probability $\delta>0$. Afterwards, rather than assuming access to normalization information as done by \citet{barman2022universal}, \citet{huang2025long} and \citet{neoh2026online}, \citet{choo2026approximate} assume only access to maximum item value (MIV) predictions, i.e., for each agent, we are given their maximum valuation over any single item. Given such predictions, they obtain a $1/n$-PROP1 guarantee against adaptive adversaries. \citet{choo2026approximate} also show that stronger fairness notions such as EF1, MMS, and PROPX remain inapproximable even with perfect MIV predictions.
In contrast, our model is not a bundle allocation problem. Each agent can hold at most one house, and a newly arrived house may force the algorithm to replace, discard, or reassign previously allocated houses. Thus, our guarantees concern envy-freeability via subsidies under an online matching constraint, rather than PROP1, EF1, MMS, or generalized-mean welfare guarantees for bundles.

Across these approaches, side information typically summarizes future values through totals, normalization, frequencies, or maxima. Our predictions are structural instead: they identify the houses that should remain available for the final matching, rather than estimating the minimum subsidy $\mathrm{OPT}(\I)$. This distinction is specific to unit demand. Since an agent cannot retain an old house when receiving a new one, the prediction guides which houses remain active, not merely how an expanding collection of items should be distributed among bundles.

\citet{wang2026online} study semi-online allocation of
indivisible goods and chores under a different information
model.  Some of their guarantees for two agents assume that
the complete collection of possible utility values is known
before arrivals begin.  In our model, only the values of the
current house are revealed, and our deterministic guarantees
are stated against an adaptive adversary.  The two models
therefore provide different forms of future information and
are not directly comparable.

\subsection{Temporal Fair Division} 
\citet{elkind2025temporal}, \citet{cookson2025temporal}, and \citet{choi2026tfdmm} study temporal fair division, where allocations are evaluated across multiple prefixes or time periods and information about future items is available in advance. Their focus is on fairness guarantees that hold repeatedly over time. Goldberg et al.~\cite{Goldberg2026} study the complementary objective of minimizing cumulative envy over a sequence of allocations, thereby measuring the total fairness loss accrued across time.

Our model also produces a sequence of allocations, but the role of time is different. Future houses and utilities are unknown, and each update may revise the current matching through a single reassignment chain while permanently discarding houses that leave the active set. Theorem~\ref{thm:active-chain} gives a welfare guarantee after every arrival, whereas subsidies are assessed only for the final complete allocation. Thus, our information model, recourse constraint, and terminal subsidy objective differ from the prefix-wise or cumulative fairness objectives studied in temporal fair division.

\subsection{Online Matching} 
In the classical online bipartite matching problem, one side of the graph is fixed while vertices on the other side arrive online, and the algorithm must irrevocably match or reject arriving vertices. \citet{KarpVaziraniVazirani1990} gave the celebrated Ranking algorithm and its optimal competitive guarantee for adversarial arrivals.  Fully online matching instead allows all
vertices to arrive over time.  \citet{huang2018match} show that
Ranking is $0.5211$-competitive for general graphs in that
model and that the classical $1-1/e$ value cannot be
attained.

Our setting keeps the agents fixed and lets houses arrive,
but differs from standard online matching in two ways.
First, an accepted house may trigger a bounded reassignment
chain.  Second, our objectives are envy-freeability and
subsidy rather than matching cardinality.  Work on matching
with replacements, including \citet{BernsteinHolmRotenberg2019} and \citet{ShinKimLeeAn2020}, is therefore the
closest algorithmic comparison.

\subsection{Fairness in Online Matching}
Fairness has also been studied directly in online matching. \citet{HosseiniHuangIgarashiShah2024} introduce class fairness in online matching, where online-arriving items are assigned irrevocably to agents partitioned into classes, and fairness is required across classes. Their model shares the matching constraint that each agent can receive at most one item, but fairness is evaluated across classes rather than individuals. To this end, they adapt several standard fairness notions to the class level, including class envy-freeness (CEF), class EF1 (CEF1), class proportionality (CPROP), and class maximin share fairness (CMMS). They design online algorithms that achieve approximate fairness and efficiency guarantees, and also provide upper bounds on the approximations that can be achieved by any online algorithm. Their algorithms satisfy an efficiency notion called non-wastefulness, which implies $1/2$-approximation of the optimal utilitarian social welfare (USW), i.e., the sum of agent utilities, which is effectively the size of the matching. Subsequently, \citet{hajiaghayi2024fairness} resolve an open problem posed by \citet{HosseiniHuangIgarashiShah2024} by providing the first non-wasteful randomized algorithm that simultaneously achieves constant-factor guarantees for both CEF and CPROP in expectation. They further complement this positive result with impossibility bounds on the achievable approximation of CEF, for both indivisible and divisible matching, and quantify the trade-off between fairness and utilitarian social welfare through a price-of-fairness analysis.
More broadly, neighboring work in online matching considers weighted objectives~\cite{feldman2009online}, fully online arrival where both sides of the graph evolve~\cite{wang2015two,huang2018match}, repeated matching~\cite{caragiannis2023repeatedmatching,micheel2024repeated,Lim2026repeated,gollapudi2020almost}, and temporal slot assignment~\cite{elkind2022temporalslot}. These strands provide useful algorithmic comparisons, but the cited models do not combine individual subsidy-based fairness with bounded reassignment chains.

\subsection{Additional Related Problems}
Our work is also adjacent to fair division under feasibility constraints, including cardinality, matroid, and budget constraints \cite{suksompong2021constraints,biswas2018cardinality,dror2023matroid,barman2023budget,Elkind2024}. Most closely, Cohen et al.~\cite{cohen2026house} consider an online goods model in which each arriving item must be assigned irrevocably to a budget-feasible agent or to charity. In contrast, our feasibility constraint is unit demand: each agent receives exactly one house, and retaining a newly arriving house may require a chain of reassignments followed by an irreversible discard. Moreover, we seek envy-freeability through subsidies rather than bundle-based guarantees such as EF1 \cite{lipton2004approximately} or EFX \cite{caragiannis2019unreasonable,neoh2025efx}, so feasibility constrains both the final matching and the recourse required to maintain fairness online.

Other sequential partitioning problems share our emphasis on decisions made under uncertainty. Work on online clustering and coalition formation~\cite{cohen2024onlinefriends,cohen2024online,cohen2025online,cohen2025decentralized,cohen2026delayed,cohen2023online,cohen2025fair} forms or learns partitions of agents as information is revealed. The evolving state in those models is a coalition or cluster structure, whereas our state is a matching between a fixed set of agents and exogenously arriving houses. Consequently, their objectives and update operations are defined over coalition utilities, stability, or learning performance rather than subsidy-based envy-freeness and reassignment chains.

Temporal voting ~\cite{alouf2022better,ElkindObraztsovaTeh2024TemporalFairness,teh2026price,phillips2026strengthening,elkind2025not,elkind2025verifying,elkind2024temporal,zech2024multiwinner} likewise studies repeated decisions and notions of representation or fairness across time. There, each period selects a collective alternative or committee, and performance is evaluated over the resulting sequence of collective outcomes. In our model, each arrival changes the resource set and may force a revision of the current matching. The one house per agent constraint, the permanent loss of discarded houses, and the final subsidy objective therefore have no direct counterpart in temporal voting.

\section{Example 2.2 by \citet{choo2024housesubsidy}}
\label{app:example by choo et al}

\begin{example}[\citet{choo2024housesubsidy}]
\label{ex:envy-freeable}
    Suppose there are $m = 2$ houses and $n = 2$ agents whose utilities are given by $u_1(h_1)= u_1(h_2)=u_2(h_1)=200$ and $u_2(h_2)=100$.
    We claim that there does not exist an envy-eliminating subsidy vector for the allocation $\A = (h_1,h_2)$.
    To see this, let $\mathbf{s} = (s_1,s_2)$ be any subsidy vector. In order for the outcome $(\A,\mathbf{s})$ to be envy-free, the following must hold:
    \begin{align*}
        200 + s_1 = u_1(h_1) + s_1 & \geq u_1(h_2) + s_2 = 200 + s_2, \text{ and}\\
        100 + s_2 = u_2(h_2) + s_2 & \geq u_2(h_1) + s_1 = 200 + s_1.
    \end{align*}
    This implies that $s_1 \geq s_2 \geq 100+ s_1$, which is impossible.
\end{example}

\section{Omitted Proofs from Section~\ref{sec:prelim}}

\subsection{Proof of Lemma~\ref{lem:fixed-support}}

\lemfixedsupport*

\begin{proof}
The first statement follows directly from Proposition~\ref{prop:hs-characterization}. Indeed, an allocation with support $S$ is envy-freeable if and only if it maximizes utilitarian welfare over all permutations of the houses in $S$, which is equivalent to saying that its edge set is a maximum-weight perfect matching between $N$ and $S$.

For the second statement, we use Proposition 4.2 of \citet{choo2024housesubsidy}: when the number of houses equals the number of agents, a minimum-subsidy envy-free outcome is obtained by taking a maximum-weight perfect matching and computing minimum envy-eliminating subsidies for that matching; the value obtained is independent of the chosen maximum-weight perfect matching. Applying this result to the house set $S$ gives that every maximum-weight perfect matching between $N$ and $S$ attains the same subsidy value, namely $\mu(S)$.
\end{proof}

\section{Omitted Proofs from Section~\ref{sec:envy-freeability}}

\subsection{Proof of Lemma~\ref{lem:discarded-remain-discardable}}

\lemdiscardedremaindiscardable*

\begin{proof}
The proof proceeds in three steps. First, we reformulate $F$ as the maximum weight of a matching in an appropriate bipartite graph. This reformulation then gives us that $F$ is monotone. Second, we show that $F$ is submodular\footnote{The function $F$ is the standard assignment valuation (also called an OXS valuation): for each set of houses, it returns the maximum weight of a matching between those houses and the agents. Assignment valuations are a classical subclass of submodular set functions \citep{shapley1974cores,lehmann2006combinatorial}. We nevertheless give a self-contained proof of the properties we need.}. The key idea is to compare two optimal matchings through their symmetric difference, which decomposes into alternating paths and cycles. By exchanging edges component-by-component, we construct two new matchings whose total weight matches that of the original optimal matchings, yielding the required diminishing-returns inequality. Finally, we combine monotonicity and submodularity to show that if removing a set of houses does not decrease the value of $F$ for a set $S$, then removing the same houses from any superset of $S$ also leaves the value of $F$ unchanged.

\paragraph{Matching formulation of $F$.} Recall that each allocation $\A$ can be represented by its matching $E(\A)$ in the complete bipartite graph with vertex sets $N$ and $H$, where each edge $(i,h)\in N\times H$ has weight $u_i(h)$. For every $S\subseteq H$, we now show that
\begin{equation}
\label{eqn:discard_matching_formulation}
F(S)=\max\left\{w(M) : M \text{ is a matching in } N\times S\right\},    
\end{equation}
where $w(M) \coloneq \sum_{(i,h)\in M} u_i(h)$ is the weight of matching~$M$.
Indeed, every feasible allocation using houses from $S$ and assigning exactly $\min\{|S|,n\}$ houses corresponds to a matching of cardinality $\min\{|S|,n\}$ in $N\times S$, so the right-hand side of \eqref{eqn:discard_matching_formulation} is at least $F(S)$.

Conversely, let $M$ be any matching in $N\times S$ with $|M|<\min\{|S|,n\}$. Then, there exists at least one unmatched agent and at least one unmatched house in $S$. Since all edge weights are non-negative and the graph is complete bipartite, we can add an edge between an unmatched agent and an unmatched house without decreasing the weight. Repeating this until the matching has cardinality $\min\{|S|,n\}$ gives us a matching $M'$ of weight $w(M') \geq w(M)$.
Moreover, the matching $M'$ corresponds to a feasible allocation counted by $F(S)$.
Hence, we have $F(S) \geq w(M') \geq w(M)$.
Since this holds for every matching $M$ in $N\times S$, the
right-hand side of \eqref{eqn:discard_matching_formulation}
is at most $F(S)$, proving the equality.

\paragraph{Monotonicity of $F$.} We next show that $F$ is monotone, i.e., for any subsets $R \subseteq S \subseteq H$, we have $F(R) \leq F(S)$. Indeed, let $M$ be a maximum-weight matching between $N$ and $R$ with cardinality $\min\{|R|,n\}$. Since $R \subseteq S$, the matching $M$ is also a matching between $N$ and $S$.
By the previous paragraph, $F(S)$ is the maximum-weight among all matchings between $N$ and $S$.
Hence, $F(S) \geq w(M) = F(R)$, and therefore $F$ is monotone, i.e., $F(R)\le F(S)$.

\paragraph{Submodularity of $F$.} Subsequently, we prove that $F$ is also submodular, i.e., for any $h \in H$ and $X \subseteq Y \subseteq H \setminus \{h\}$, we have 
\begin{equation}
\label{eqn:discard_submodular}
    F(X \cup \{h\}) - F(X) \ge F(Y \cup \{h\}) - F(Y),
\end{equation}
or equivalently,
\begin{equation}
\label{eqn:discard_submodular_alternate}
    F(Y)+F(X\cup\{h\})
    \ge F(Y\cup\{h\})+F(X).
\end{equation}
To prove \eqref{eqn:discard_submodular_alternate}, let $M^+$ (resp. $M^0$) be a maximum-weight matching corresponding to $F(Y \cup \{h\})$ (resp. $F(X)$).
We next construct a matching $M_Y$ (resp. $M_{X, h}$) using only the houses in $Y$ (resp. $X \cup \{h\}$), such that the sum of the weights of $M_Y$ and $M_{X, h}$ is equal to the sum of the weights of $M^+$ and $M^0$.

Consider any connected component $C$ in $M^+\triangle M^0$.
We now show that $C$ cannot contain both $h$ and a house in $Y \setminus X$.
First, observe that since there are no incident edges in a matching, $C$ is either a path or a cycle, and the edges in $C$ must alternate between $M^+$ and $M^0$; we say that $C$ is an alternating path or an alternating cycle.
Furthermore, since $M^0$ uses only houses in $X$, any house in $(Y \cup \{h\}) \setminus X$ is incident to no edge of $M^0$.
Hence, whenever such a house appears in a component of
$M^+\triangle M^0$, it is a degree-one endpoint whose unique
incident edge belongs to $M^+$.
Therefore, such a house cannot lie on an alternating cycle.
Moreover, an alternating path cannot have two such endpoints: a path whose
endpoints are both house vertices has even length, so its first and last
edges belong to different matchings.
On the other hand, a path containing two houses in $(Y \setminus X) \cup \{h\}$ must have two endpoint edges belonging to $M^+$.
Therefore, it follows that $C$ cannot contain both $h$ and a house in $Y \setminus X$.

We now continue with the construction of $M_Y$ and $M_{X, h}$.
From the previous paragraph, we have established that there are three disjoint cases:
\begin{description}
    \item[Case 1:] $C$ contains the house~$h$.
    So, $C$ does not contain any house in $Y \setminus X$, and we can put $C \cap M^+$ into $M_{X, h}$.
    Furthermore, note that $M^0$ does not contain house $h$, so we can put $C \cap M^0$ into $M_Y$.
    \item[Case 2:] $C$ contains some house in $Y \setminus X$. 
    Then, $C$ does not contain house $h$, so we can put $C \cap M^+$ into $M_Y$.
    Moreover, since $M^0$ only uses the houses in $X$, we can put $C \cap M^0$ into $M_{X, h}$.
    \item[Case 3:] $C$ contains neither $h$ nor a house in $Y \setminus X$. That is, all houses in $C$ belong to $X$. 
    In this case, we can assign $C \cap M^+$ to one of $M_Y$ and $M_{X,h}$ and $C \cap M^0$ to the other, arbitrarily.
\end{description}

By our construction, we have established that $M_Y$ uses only houses in $Y$, while $M_{X,h}$ uses only houses in $X\cup\{h\}$.

We next show that $M_Y$ and $M_{X,h}$ are matchings, i.e., they do not contain any incident edges. Observe that
$M^+\cap M^0$ is vertex-disjoint from $M^+\triangle M^0$, and that
distinct connected components of $M^+\triangle M^0$ are vertex-disjoint.
Moreover, the sets $M^+\cap M^0$, $C\cap M^+$, and $C\cap M^0$ contain
no incident edges for every connected component $C$ of
$M^+\triangle M^0$. By construction, each of $M_Y$ and $M_{X,h}$
contains all edges of $M^+\cap M^0$, and for every connected component
$C$ of $M^+\triangle M^0$, exactly one of the sets
$C\cap M^+$ and $C\cap M^0$ is included in $M_Y$, while the other is included in $M_{X,h}$. Therefore, $M_Y$ and $M_{X,h}$ are both
matchings.

Therefore, it follows that
$M_Y$ is a matching in $N\times Y$ and $M_{X,h}$ is a matching in
$N\times (X\cup\{h\})$. Thus, $F(Y)$ is at least the weight of $M_Y$, i.e., $F(Y)\ge \sum_{(i,g)\in M_Y}u_i(g)$, and $F(X \cup \{h\})$ is at least the weight of $M_{X, h}$, i.e.,
\[
F(X\cup\{h\})
\ge
\sum_{(i,g)\in M_{X,h}}u_i(g).
\]

Lastly, note that $M_Y \cup M_{X, h} = M^+ \cup M^0$, since $M_Y$ and $M_{X,h}$ use only edges from $M^+\cup M^0$, and every
edge of $M^+\cup M^0$ is placed in at least one of $M_Y$ and $M_{X,h}$.
Furthermore, $M_Y \cap M_{X, h} = M^+ \cap M^0$, since every edge of $M^+\cap M^0$ is placed in both $M_Y$ and $M_{X,h}$,
whereas every edge of $M^+\triangle M^0$ is placed in exactly one of
them. Therefore, we have
\begin{align*}
F(Y) + F(X \cup \{h\}) 
&\quad\geq \sum_{(i,g)\in M_Y}u_i(g)+\sum_{(i,g)\in M_{X,h}}u_i(g) \\
&\quad= \sum_{(i,g)\in M_Y \cup M_{X, h}}u_i(g) + \sum_{(i,g)\in M_Y \cap M_{X, h}}u_i(g) \\
&\quad= \sum_{(i,g)\in M^+ \cup M^0}u_i(g) + \sum_{(i,g)\in M^+ \cap M^0}u_i(g) \\
&\quad= \sum_{(i,g)\in M^+}u_i(g)+\sum_{(i,g)\in M^0}u_i(g) \\
&\quad= F(Y \cup \{h\}) + F(X).
\end{align*}

This establishes \eqref{eqn:discard_submodular_alternate} and hence
\eqref{eqn:discard_submodular}, giving us that $F$ is submodular. 

\paragraph{Completing the proof of Lemma~\ref{lem:discarded-remain-discardable}.} 

If $D$ is empty, then the lemma holds trivially.
Hence, we can assume that $D$ is nonempty and
let $D=\{h_1,\dots,h_q\}$ for some $q \geq 1$.
For each $r\in\{0,\dots,q\}$, define
\[
S_r \coloneq S\setminus\{h_1,\dots,h_r\}
\quad\text{and}\quad
T_r \coloneq T\setminus\{h_1,\dots,h_r\}.
\]
Since $S_q\subseteq S_{q-1}\subseteq\cdots\subseteq S_0$ and $F$ is monotone, we have $F(S_q)\le F(S_{q-1})\le\cdots\le F(S_0)$.
Together with $F(S_0)=F(S_q)$, it follows that
\[
F(S_{r-1})=F(S_r)
\quad\text{for every }r\in[q].
\]
Now, fix any $r\in[q]$.
We next prove that $F(T_{r-1}) = F(T_r)$.
Since $S_r = S_{r-1} \setminus \{h_r\}$, $T_r = T_{r-1} \setminus \{h_r\}$, and $S_r\subseteq T_r$, applying \eqref{eqn:discard_submodular} with
$X=S_r$, $Y=T_r$ and $h=h_r$ gives us
\[
F(T_{r-1})-F(T_r)
\le
F(S_{r-1})-F(S_r)
=
0.
\]
On the other hand, monotonicity of $F$ implies $F(T_{r-1})-F(T_r) \ge 0$.
Therefore, $F(T_{r-1})=F(T_r)$.

Since this holds for every $r\in[q]$, repeated application gives us
\[
F(T)=F(T_0)=F(T_q)=F(T\setminus D). \qedhere
\]
\end{proof}

\subsection{Proof of Theorem~\ref{thm:active-chain}}

\thmvalidflexmaxwelfare*
\begin{proof}

The proof consists of three parts. We first show that the allocation maintained after each round maximizes the utilitarian welfare among all feasible allocations using the currently active houses and assigning the maximum possible number of houses. This establishes the welfare guarantee. We then verify that the algorithm is valid. Finally, we prove that Algorithm~\ref{alg:active-max-weight-chain} is $n$-flexible. To do so, we analyze the symmetric difference between the matchings corresponding to two consecutive allocations. Using the welfare-optimality of both allocations together with the tie-breaking rule, we show that this symmetric difference contains at most one nonempty connected component. This component is an alternating path containing the newly arrived house $h_t$, which directly gives us a chain update of length at most $n$.

Throughout the proof, let $F$ be the set function defined in Lemma~\ref{lem:discarded-remain-discardable}. In particular, the proof of Lemma~\ref{lem:discarded-remain-discardable} implies that $F$ is monotone and that if $D\subseteq S$ and $F(S)=F(S\setminus D)$, then
\[
F(T)=F(T\setminus D)
\quad\text{for every }T\supseteq S.
\]

\paragraph{Welfare Maximization.} 
Although the theorem only concerns the final allocation $\A^m$, we prove the stronger statement that $\A^t$ is welfare-maximizing after every round $t$. This invariant will be used later in the proof of $n$-flexibility.
To this end, we prove by induction on $t$ that $\A^{t}$ is a maximum-welfare feasible allocation using houses from $H^{t}$ and assigning exactly $\min\{t,n\}$ houses.
The claim is immediate for $t=0$. Suppose it holds after round $t-1$. 
We first show that $\A^{t-1}$ is among the allocations considered in the definition of
$F(H^{t-1}\setminus D^{t-1})$. Since $D^{t-1}=H^{t-1}\setminus \mathrm{supp}(\A^{t-1})$,
we have $\mathrm{supp}(\A^{t-1})
=
H^{t-1}\setminus D^{t-1}$.
Moreover,
\[
|\mathrm{supp}(\A^{t-1})|
=
\min\{t-1,n\}
=
\min\{|H^{t-1}\setminus D^{t-1}|,n\}
\]
Thus, $\A^{t-1}$ uses only houses from
$H^{t-1}\setminus D^{t-1}$ and assigns exactly the number of houses required in the definition of
$F(H^{t-1}\setminus D^{t-1})$. Since $\A^{t-1}$ has utilitarian welfare $F(H^{t-1})$ by the induction hypothesis, it follows that $F(H^{t-1}\setminus D^{t-1}) \ge F(H^{t-1})$.
The reverse inequality follows from monotonicity, since
$H^{t-1}\setminus D^{t-1} \subseteq H^{t-1}$. Therefore, $F(H^{t-1}) = F(H^{t-1}\setminus D^{t-1})$.
Applying Lemma~\ref{lem:discarded-remain-discardable} with $S=H^{t-1}$, $D=D^{t-1}$, $ T=H^{t}$, we obtain
\[
F(H^{t})=F(H^{t}\setminus D^{t-1}).
\]
This shows that discarding the houses in $D^{t-1}$ does not reduce the maximum welfare achievable after round $t$.

To complete the proof of the inductive step, we show that $\A^t$ attains utilitarian welfare $F(H^t \setminus D^{t-1}) = F(H^t)$.
Since $h_t \notin H^{t-1}$ and
$H^{t-1}\setminus D^{t-1}
=\mathrm{supp}(\A^{t-1})$, we have the disjoint union $H^{t}\setminus D^{t-1}
= \mathrm{supp}(\A^{t-1}) \uplus \{h_t\}$.
Hence, $|H^{t}\setminus D^{t-1}|
=
|\mathrm{supp}(\A^{t-1})|+1 =
\min\{t-1,n\}+1$
and so,
\begin{equation}
    \label{eq:h minus d}
    \min\{|H^{t}\setminus D^{t-1}|,n\}
=
\min\{t,n\}.
\end{equation}
Therefore, by the definition of $\mathcal{A}_t$ in line~\ref{state:feasible} of Algorithm~\ref{alg:active-max-weight-chain}, $\mathcal{A}_t$ is exactly the set of feasible allocations using houses from $H^{t}\setminus D^{t-1}$ and assigning exactly $\min\{|H^{t}\setminus D^{t-1}|,n\}$ houses. By the equality in \eqref{eq:h minus d}, this is precisely the class of allocations considered in the definition of $F(H^{t}\setminus D^{t-1})$.
Since Algorithm~\ref{alg:active-max-weight-chain} chooses a maximum-welfare allocation in $\mathcal A_t$, it follows that $\A^t$ attains utilitarian welfare $F(H^t\setminus D^{t-1})=F(H^t)$.
Therefore, $\A^t$ is a maximum-welfare feasible allocation using houses from $H^t$ and assigning exactly $\min\{t,n\}$ houses. This completes the induction.

\paragraph{Validity.} By construction, $|\mathrm{supp}(\A^{t})|=\min\{t,n\}$ for every $t$. Since $m\geq n$, the final allocation $\A^{m}$ is complete. Thus, the algorithm is valid.

\paragraph{$n$-Flexibility.} It remains to show that each transition can be implemented as a chain update of length at most $n$. Let $M_{t-1}\coloneq E(\A^{t-1})$ and $M_t\coloneq E(\A^{t})$.
View $M_{t-1}$ and $M_t$ as matchings in the bipartite graph with agent side $N$ and house side $H^{t}$. Since $\A^{t}$ avoids $D^{t-1}$, every house incident to an edge of $M_t$ is either $h_t$ or belongs to $\mathrm{supp}(\A^{t-1})$. Since $M_{t-1}$ and $M_t$ are matchings, every connected component of $M_{t-1}\triangle M_t$ is an alternating path or an alternating cycle. Moreover, at most one connected component of $M_{t-1}\triangle M_t$ contains $h_t$.

We claim that every nonempty connected component of $M_{t-1}\triangle M_t$ contains the newly arrived house $h_t$.
Suppose, for contradiction, that $C$ is a nonempty component of $M_{t-1}\triangle M_t$ that avoids $h_t$. We first show that $C$ is balanced, i.e., $|C\cap M_{t-1}| = |C\cap M_t|$. This property will allow us to switch along $C$ without changing the number of assigned houses.
We distinguish between two cases:
\begin{description}
    \item[Case 1:] If $t\le n$, then $M_{t-1}$ matches every house in $H^{t-1}$, while
    $M_t$ matches every house in $H^{t}$. Since $C$ avoids $h_t$, every
    house vertex in $C$ has degree two in $M_{t-1}\triangle M_t$. Thus, $C$ is
    either an alternating cycle, or an alternating path whose two endpoints are
    agents. In the latter case, the path has endpoints on the same side of the
    bipartition, and hence contains equally many $M_{t-1}$-edges and $M_t$-edges.
    Thus, $C$ is balanced.

    \item[Case 2:] Now suppose $t>n$. Thus, both $M_{t-1}$ and $M_t$ match every agent. If
    $h_t\notin \mathrm{supp}(\A^{t})$, then $M_{t-1}$ and $M_t$ use the
    same set of house vertices, namely $\mathrm{supp}(\A^{t-1})$. 
    Hence, each vertex that appears in $M_{t-1}\triangle M_t$ has degree two, and every nonempty component is an alternating cycle.
    
    Otherwise, if $h_t\in \mathrm{supp}(\A^{t})$, then $M_t$ uses $h_t$ and has the same cardinality as $M_{t-1}$. Since every other house used by $M_t$ belongs to $\mathrm{supp}(\A^{t-1})$, exactly one house $g\in \mathrm{supp}(\A^{t-1})$ is not used by $M_t$. 
    In $M_{t-1}\triangle M_t$, the only degree-one house vertices are $h_t$ and
    $g$, while every agent has degree zero or two. 
    Therefore, $h_t$ and $g$ are the endpoints of the unique alternating path component, and every component avoiding $h_t$ is an alternating cycle. 
\end{description}

Therefore, $C$ is balanced in all cases. 

We now compare the welfare contribution of the two matchings on the component $C$. Specifically, let
\[
    \Delta(C)
    \coloneq \sum_{(i,h)\in C\cap M_t} u_i(h)
    - \sum_{(i,h)\in C\cap M_{t-1}} u_i(h).
\]
Since $C$ is a connected component of $M_{t-1}\triangle M_t$,
switching along $C$ preserves the matching property: vertices
outside $C$ are unchanged, and the alternating structure of $C$
ensures that no vertex of $C$ is incident to more than one edge
after the switch. Since $C$ is balanced, this switch also preserves
the number of assigned houses.

Suppose, for contradiction, that $\Delta(C)\ne 0$. We distinguish between two cases:
\begin{description}
    \item[Case 1:] If $\Delta(C)>0$, then replacing the $M_{t-1}$-edges of $C$ by
    the $M_t$-edges of $C$ in $M_{t-1}$ gives a feasible allocation
    using only houses from $H^{t-1}$: indeed, $C$ avoids $h_t$,
    and every other house in $C$ belongs to $H^{t-1}$. The switched allocation assigns exactly $\min\{t-1,n\}$ houses and has larger welfare than $\A^{t-1}$, contradicting the fact that $\A^{t-1}$ is a maximum-welfare feasible allocation using houses from $H^{t-1}$.
    
    \item[Case 2:] If $\Delta(C)<0$, then replacing the $M_t$-edges of $C$ by the
    $M_{t-1}$-edges of $C$ in $M_t$ gives a feasible allocation in
    $\mathcal{A}_t$: it still uses only houses in $H^{t}\setminus D^{t-1}$,
    because the $M_{t-1}$-edges use houses in
    $\mathrm{supp}(\A^{t-1})$. It assigns the same number of houses as $\A^t$ and has strictly larger
    welfare, contradicting the fact that $\A^t$ is a maximum-welfare
    allocation in $\mathcal A_t$.
\end{description}

Therefore, $\Delta(C)=0$. We now show that the existence of such a component $C$ contradicts the tie-breaking rule. Let $M' \coloneq (M_t\setminus (C\cap M_t))\cup (C\cap M_{t-1})$.
By the same switching argument, $M'$ corresponds to an allocation
$\mathbf{b} \in \mathcal{A}_t$. Since $\Delta(C)=0$, this allocation has the same welfare as $\A^{t}$, and hence is also a maximum-welfare allocation in $\mathcal A_t$. 
Moreover,
\[
M'\triangle M_{t-1}
=
(M_t\triangle M_{t-1})
\setminus ((C\cap M_t)\cup(C\cap M_{t-1})),
\]
because the switch replaces the $M_t$-edges of $C$ by the
corresponding $M_{t-1}$-edges. Thus, every edge of $C$ disappears
from the symmetric difference, while all edges outside $C$ remain
unchanged. Since $C$ is nonempty, the set $(C\cap M_t)\cup(C\cap M_{t-1})$
is nonempty. Therefore, $|E(\mathbf b)\triangle E(\A^{t-1})| < |E(\A^t)\triangle E(\A^{t-1})|$.
This contradicts the tie-breaking rule in Algorithm~\ref{alg:active-max-weight-chain}.
Hence, no nonempty component of $M_{t-1}\triangle M_t$ avoids $h_t$.

It follows that $M_{t-1}\triangle M_t$ has at most one nonempty component.
If $M_{t-1}\triangle M_t=\varnothing$, then $E(\A^{t})=E(\A^{t-1})$, and hence
$\A^{t}=\A^{t-1}$. Since the newly arrived house $h_t$ does not belong to $ \mathrm{supp}(\A^{t-1})$,
the algorithm rejects $h_t$.

Otherwise, there is a unique nonempty component, and it contains $h_t$. Since $h_t$ is
incident to no edge of $M_{t-1}$, this component cannot be an alternating
cycle; it is an alternating path with $h_t$ as an endpoint. The transition from $\A^{t-1}$ to $\A^t$ is completely determined by this path: every assignment change occurs along the path, while all agents outside the path keep their previous houses. Read this path
starting from $h_t$, and let $i_1,\dots,i_\ell$ be the agents encountered
along it. The agents $i_1,\dots,i_\ell$ are distinct, and $\ell\le n$.
The first edge of the path is $(i_1,h_t)\in M_t$, so $a^{t}_{i_1}=h_t$.

For every $r=2,\dots,\ell$, the path enters $i_r$ through the house
previously held by $i_{r-1}$, so $a^{t}_{i_r}=a^{t-1}_{i_{r-1}}$.

Every agent outside $\{i_1,\dots,i_\ell\}$ is incident to no edge of
$M_{t-1}\triangle M_t$, and therefore $a^{t}_j=a^{t-1}_j$ for every $j\notin\{i_1,\dots,i_\ell\}$.
Furthermore, for every $r<\ell$, the path continues from $i_r$ through its
old house, so $a^{t-1}_{i_r}\neq \varnothing$. If the path ends at the agent $i_\ell$, then $i_\ell$ is not incident
to an $M_{t-1}$-edge, so $a^{t-1}_{i_\ell}=\varnothing$. Otherwise,
the path ends at the house $a^{t-1}_{i_\ell}$; this house is not
incident to any edge of $M_t$, so it belongs to $H^{t}\setminus \mathrm{supp}(\A^{t}) = D^{t}$
and is discarded. 
Thus, the transition from $\A^{t-1}$ to $\A^{t}$ is a chain update of length $\ell\le n$. Therefore, the algorithm is $n$-flexible.

\paragraph{Completing the proof of Theorem~\ref{thm:active-chain}.} 
We have therefore proved the welfare statement for every round
\(t\in\{0\}\cup[m]\), as well as validity and \(n\)-flexibility.
For every \(t\ge n\), the allocation \(\A^t\) is complete and
welfare-maximizing over the prefix \(H^t\); hence it is envy-freeable by
Proposition~\ref{prop:hs-characterization}. In particular, this holds
for the final allocation \(\A^m\).
\end{proof}

\subsection{Proof of Proposition~\ref{prop:stronger-odd-lower-bound}}

\strongeroddlowerbound*

\begin{proof}
Fix $n \ge 2$, and let $A$ be any deterministic valid
$k$-flexible algorithm. Choose a constant $B>n^2$.

The adversary first releases a house $h_1$ with $u_i(h_1)=B$ for every $i \in N$.
Now, since discarded houses cannot be recovered, validity implies
\[
|\mathrm{supp}(\A^{(t)})|=\min\{t,n\}
\quad
\text{for every } t\in\{0,1,\dots,m\}.
\]
Indeed, suppose that for some $t<n$, $|\mathrm{supp}(\A^{(t)})|<t$.
Then at least one of the first $t$ houses has already been
discarded. Since discarded houses cannot be recovered, even
if every future arrival is assigned, after the first $n$
arrivals the algorithm can assign at most
\[
|\mathrm{supp}(\A^{(t)})|+(n-t)
<
t+(n-t)
=
n
\]
houses. Hence, on the instance obtained by continuing the
sequence until $m=n$, the final allocation cannot be
complete, contradicting validity.

In particular, $|\mathrm{supp}(\A^{(1)})|=1$,
so $A$ assigns $h_1$ to some agent. Let this agent be
$p$.

Moreover, for every $t<n$, the update at round $t$ increases the
number of assigned houses from $t-1$ to $t$. Otherwise, after round $t$ fewer
than $t$ houses would be assigned. Even if each of the remaining
$n-t$ rounds increased the number of assigned houses by one, the final
allocation would contain fewer than $n$ houses, contradicting validity.

We next show that any update that increases the number of assigned houses must preserve the assignment status of every previously assigned agent. If $k=0$, then by definition, $h_t$ must be assigned to an unassigned agent, with no previously assigned house changed. If $k\ge1$, then the chain must terminate at an agent who was
unassigned before round $t$; otherwise every agent appearing
in the chain was already assigned before the update, so the
number of assigned houses would remain unchanged. Therefore, every agent who is assigned before round $t<n$ remains assigned after that round. In particular, $p$ remains assigned after every round $t<n$.

Now fix an ordering of the agents $\sigma(1),\sigma(2),\dots,\sigma(n)$
such that $p=\sigma(c)$, where $c\coloneq\left\lceil \frac{n}{2}\right\rceil$.
For an agent $i$, denote $r(i)$ as its rank in this ordering;
that is, $r(\sigma(j))=j$.

For each $t=2,\dots,n-1$, the adversary releases $h_t$
with $u_i(h_t)=B+r(i)(t-1)$ for every $i \in N$.
Thus, the old houses $h_1,\dots,h_{n-1}$ have scores $0,1,\dots,n-2$, respectively, up to the common additive constant $B$. The
constant $B$ does not affect welfare comparisons between
assignments using the same number of houses.

After round $n-1$, validity implies that all houses
$h_1,\dots,h_{n-1}$ are assigned. Let $\mathbf{b} \coloneq \A^{(n-1)}$.
Let $q$ be the unique rank such that agent $\sigma(q)$ is
unassigned in $\mathbf{b}$. Since $p=\sigma(c)$ remains assigned,
we have $q \neq c$.

We use the following observation. Consider houses whose utilities
have the form $u_i(h)=B+r(i) \cdot x_h$,
where the numbers $x_h$ are pairwise distinct. For any fixed
set of agents and any equally sized set of houses whose $x_h$-values are distinct, the unique
welfare-maximizing assignment matches smaller $x_h$'s to
smaller ranks. Indeed, if $r(i)<r(j)$, agent $i$ receives
a house $h$, agent $j$ receives a house $g$, and
$x_h>x_g$, then swapping $h$ and $g$ changes welfare by
\begin{equation*}
    (B+r(i) \cdot x_g+B+r(j) \cdot x_h) - (B+r(i) \cdot x_h+B+r(j) \cdot x_g)  = (r(j)-r(i))(x_h-x_g)>0.
\end{equation*}
Thus, any inversion is suboptimal, and uniqueness follows from
the fact that the $x_h$'s are distinct.

For each $q\in[n]$, define the partial allocation
$\widehat{\A}^q$ of the first $n-1$ houses by
\[
    \widehat a^q_{\sigma(j)} =
    \begin{cases}
        h_j, & \text{if } j<q,\\
        \varnothing, & \text{if } j=q,\\
        h_{j-1}, & \text{if } j>q.
    \end{cases}
\]
By the observation above, $\widehat{\A}^q$ is the unique
welfare-maximizing assignment of $h_1,\dots,h_{n-1}$ to
the agents $N\setminus\{\sigma(q)\}$.

We distinguish between two cases.

\paragraph{Case 1: $\mathbf{b}\neq \widehat \A^q$.}
The adversary releases one final house $h_n$. Choose a number
$x_n$ so that $h_n$ is inserted at rank $q$ in the order of
house scores:
\[
    x_n \coloneq
    \begin{cases}
        -1, & \text{if } q=1,\\
        q-\frac{3}{2}, & \text{if } 2\le q\le n-1,\\
        n-1, & \text{if } q=n.
    \end{cases}
\]
Then set $u_i(h_n)=B+r(i)x_n$ for every $i\in N$.
Since $B>n^2$, all utilities are non-negative.

By the observation above, the unique welfare-maximizing complete
allocation is $\mathbf{c}^q$, where
\[
    c^q_{\sigma(j)} =
    \begin{cases}
        h_j, & \text{if } j<q,\\
        h_n, & \text{if } j=q,\\
        h_{j-1}, & \text{if } j>q.
    \end{cases}
\]
Since $m=n$, every complete allocation assigns all houses in $H$.
Thus, for any complete allocation $\A$, the allocations obtained by permuting the houses in $\mathrm{supp}(\A)$ are exactly the complete allocations of this instance.
By Proposition~\ref{prop:hs-characterization}, any envy-freeable complete allocation must therefore maximize
utilitarian welfare among all complete allocations. Since $\mathbf{c}^q$ is the
unique welfare-maximizing complete allocation, any envy-freeable final
allocation must be $\mathbf{c}^q$.

However, $c^q_{\sigma(q)}=h_n$, while $\sigma(q)$ is unassigned in $\mathbf b$. Thus, any update reaching $\mathbf c^q$ must assign $h_n$ to $\sigma(q)$.
If $k=0$, then by definition, there must be no change to any previously assigned house in such an update. If $k\ge1$, let $(i_1,\dots,i_\ell)$ be the chain update reaching $\mathbf c^q$. Since $h_n$ is assigned to $\sigma(q)$, we must have $i_1=\sigma(q)$. Since $a^{(n-1)}_{\sigma(q)}=\varnothing$, the condition $a^{(n-1)}_{i_1}\neq\varnothing$ required for chains of length at least two fails. Thus, the chain has length $\ell=1$.
Hence, the assignments of $h_1,\dots,h_{n-1}$ remain as in $\mathbf b$. Since
$\mathbf b\neq \widehat{\A}^q$, the resulting allocation cannot be
$\mathbf c^q$, contradicting the fact that every envy-freeable final
allocation must be $\mathbf c^q$.
Thus, $A$ cannot reach an envy-freeable final allocation. 

\paragraph{Case 2: $\mathbf{b}=\widehat{\A}^q$.}
The adversary chooses the final house to be either smaller than all
old houses or larger than all old houses.

First, suppose the adversary sets $u_i(h_n)=B-r(i)$ for every $i \in N$.
This corresponds to $x_n=-1$, so $h_n$ is the smallest house.
The unique welfare-maximizing complete allocation is
$\mathbf{c}^{\min}$, where
\[
    c^{\min}_{\sigma(1)}=h_n
    \quad\text{and}\quad
    c^{\min}_{\sigma(j)}=h_{j-1}
    \quad\text{for } j=2,\dots,n.
\]
Starting from $\widehat{\A}^q$, reaching $\mathbf{c}^{\min}$ requires the update to pass through the agents $\sigma(1),\dots,\sigma(q)$. 
Indeed, since $c^{\min}_{\sigma(1)}=h_n$, the first agent in the chain must be $\sigma(1)$. For each $r=2,\dots,q$, the house assigned to $\sigma(r)$ in
$\mathbf{c}^{\min}$ is $h_{r-1}$, which is the house assigned
to $\sigma(r-1)$ in $\widehat{\A}^q$. Since a chain update
moves each house only from an agent to the next agent in the
chain, the house $h_{r-1}$ cannot reach $\sigma(r)$ unless
$\sigma(r-1)$ appears earlier than $\sigma(r)$ in the chain. Thus, the first $q$ agents in the chain must be $\sigma(1),\sigma(2),\dots,\sigma(q)$.
Since $\sigma(q)$ is unassigned in $\widehat{\A}^q$, it cannot appear
before the last position of a chain update. 
Therefore, the chain must terminate at $\sigma(q)$. 
Thus, reaching $\mathbf{c}^{\min}$ requires a chain of length $q$.

Second, suppose the adversary sets $u_i(h_n)=B+r(i) \cdot (n-1)$ for every $i \in N$.
This corresponds to $x_n=n-1$, so $h_n$ is the largest house.
The unique welfare-maximizing complete allocation is
$\mathbf{c}^{\max}$, where
\[
    c^{\max}_{\sigma(j)}=h_j
    \quad\text{for } j=1,\dots,n-1,
    \quad
    c^{\max}_{\sigma(n)}=h_n.
\]
Similarly, starting from $\widehat{\A}^q$, reaching $\mathbf{c}^{\max}$ requires the update to pass through the agents $\sigma(n),\sigma(n-1),\dots,\sigma(q)$. 
Indeed, since $c^{\max}_{\sigma(n)}=h_n$, the first agent in the chain must be $\sigma(n)$. If $q=n$, this already gives a chain of length $1=n-q+1$. If $q<n$, then $c^{\max}_{\sigma(n-1)}=h_{n-1}$, which is the house assigned to $\sigma(n)$ in $\widehat{\A}^q$. Therefore, since $h_{n-1}$ is initially assigned to $\sigma(n)$ and finally assigned to $\sigma(n-1)$, the chain must pass through $\sigma(n)$ before it can assign the correct house to $\sigma(n-1)$. Continuing in the same way, the chain must contain $\sigma(n),\sigma(n-1),\dots,\sigma(q)$ in this order.
Since $\sigma(q)$ is unassigned in $\widehat{\A}^q$, it cannot appear
before the last position of a chain update. 
Therefore, the chain must terminate at $\sigma(q)$. Thus, reaching $\mathbf{c}^{\max}$ requires a chain of length $n-q+1$.

The adversary sees $q$ before round $n$, so it chooses the smaller final house if $q\ge n-q+1$, and the larger final house otherwise. For the resulting instance, reaching the unique welfare-maximizing complete allocation requires a chain of length $\max\{q,n-q+1\}$.
Since $q\neq c$ and $c=\lceil n/2\rceil$, either $q\ge c+1$,
in which case $\max\{q,n-q+1\}\ge c+1=\lambda_n$,  or $q\le c-1$, in which case $n-q+1\ge n-c+2\ge c+1=\lambda_n$.
Therefore, $\max\{q,n-q+1\}\ge \lambda_n$.
Thus, if $k<\lambda_n$, $A$ cannot reach the unique welfare-maximizing complete allocation. As above, since $m=n$, every complete allocation uses the same support $H$. Hence, Proposition~\ref{prop:hs-characterization} implies that any envy-freeable final allocation must be the unique welfare-maximizing
complete allocation. Hence, the final allocation is not envy-freeable.

In both cases, the adaptive adversary constructs an instance on
which $A$ fails. This proves the proposition.

\end{proof}

\subsection{Proof of Proposition~\ref{prop:welfare-max-n-lower-bound}}

\welfaremaxnlowerbound*

\begin{proof}
Let $A$ be any deterministic valid $k$-flexible algorithm with
$k<n$. 
In the first $n$ rounds, the adversary releases houses $h_1,\dots,h_n$ with $u_i(h_t)=it$ for every $i \in N$ and $t\in[n]$.
For these $n$ houses, the unique welfare-maximizing complete
allocation assigns $h_i$ to agent $i$ for every $i\in N$. This
follows from the observation in the proof of Proposition~\ref{prop:stronger-odd-lower-bound}, applied
with $r(i)=i$ and $x_{h_t}=t$.

If $A$ does not hold this allocation after round $n$, then the adversary stops with $m=n$, and $A$'s final allocation is not welfare-maximizing. Hence, any algorithm that guarantees final welfare maximization must hold $a^{n}_i=h_i$ for every $i \in N$ after round $n$.

The adversary now releases one final house $h_{n+1}$ with $u_i(h_{n+1})=i(n+1)$ for every $i \in N$.

Among the $n+1$ houses, the unique welfare-maximizing complete allocation is $c_i=h_{i+1}$ for every $i \in N$.
To see this, fix any $n$-element set of assigned houses and write it as $h_{t_1},h_{t_2},\dots,h_{t_n}$ where $t_1<t_2<\cdots<t_n$.

For this fixed set of houses, the exchange argument matches
$h_{t_i}$ to agent $i$ for every $i\in N$. The resulting welfare is $\sum_{i=1}^n i t_i$.
Since the selected indices are distinct and satisfy $t_1<t_2<\cdots<t_n$,
there are exactly $n-i$ selected indices larger than $t_i$.
All of these indices belong to $\{t_i+1,\dots,n+1\}$, which contains $n+1-t_i$ elements. Therefore $n+1-t_i\ge n-i$, so $t_i\le i+1$ for every $i\in N$. Therefore
\[
\sum_{i=1}^n i t_i \le \sum_{i=1}^n i(i+1).
\]
Since every coefficient $i$ is positive, equality holds only if
$t_i=i+1$ for every $i\in N$. Hence, the unique welfare-maximizing
complete allocation discards $h_1$ and assigns $h_{i+1}$ to agent
$i$ for every $i\in N$.

At round $n+1$, we can only have a rejection or a single
chain update of length at most $k$. A rejection does not produce
$\mathbf{c}$. If $k=0$, then after round $n$ no agent is unassigned, so $h_{n+1}$ must be rejected. Now suppose $k\ge 1$. 
Starting from $\A^{n}$, any chain update that reaches $\mathbf{c}$ must have length $n$. Indeed, we must have the chain as follows. 
Since $c_n=h_{n+1}$, the first agent in the chain must be
$n$. Moreover, $c_{n-1}=h_n=a_n^{n}$.
Thus, the house $h_n$, which is initially assigned to agent
$n$, must be transferred to agent $n-1$. In a chain
update, a house can only move from an agent to the next agent
in the chain. Therefore, agent $n$ must appear before agent $n-1$ in
the chain. Since agent $n$ is already the first agent in
the chain, agent $n-1$ must be the second. Continuing inductively, for every
$r=2,\dots,n$,
\[
c_{n-r+1}=h_{n-r+2}
=
a^{n}_{n-r+2},
\]
so agent $n-r+2$ must appear before agent $n-r+1$ in the
chain.
Thus, the chain contains all agents $n,n-1,\dots,1$, and $h_1$ is discarded at the end. Hence, reaching $\mathbf{c}$ requires a chain of length $n$. Since $k<n$, $A$ cannot reach $\mathbf{c}$ when $h_{n+1}$ arrives. The adversary then terminates the sequence, so $A$'s final allocation is not welfare-maximizing.
\end{proof}

\section{Omitted Proofs from Section~\ref{sec:subsidy-limits}}

\subsection{Proof of Lemma~\ref{lem:absolute-subsidy-bound}}

\absolutesubsidybound*
\begin{proof}
    If $n=1$, then the single agent envies no one, so $\mathrm{sub}(\A)=0$. 
    Assume henceforth that $n\ge 2$.
    Since $\A$ is envy-freeable, Proposition~\ref{prop:hs-characterization} implies that $G_\A$ has no positive-weight directed cycle.
    
    For a directed walk $Q=(i_0,i_1,\dots,i_q)$ in $G_\A$, let
    \[
    w_\A(Q)\coloneq \sum_{r=0}^{q-1} w_\A(i_r,i_{r+1}),
    \]
    with $w_\A(Q)=0$ when $q=0$. For each agent $i\in N$, let $L_i\coloneq \max\{w_\A(Q): Q \text{ is a directed walk in } G_\A \text{ starting at } i\}$.
    This maximum is well defined. Indeed, consider any directed walk $Q$ starting at $i$. If some vertex appears twice in $Q$, then the segment between two consecutive occurrences of that vertex is a closed directed walk. Since $G_\A$ has no positive-weight directed cycle, every closed directed walk has weight at most $0$, as it can be decomposed into directed cycles. Deleting this segment therefore cannot decrease the weight of $Q$. Repeating this operation produces a simple directed path starting at $i$ with weight at least $w_\A(Q)$. Hence, the maximum over all directed walks starting at $i$ is attained by one of the finitely many simple directed paths starting at $i$.
    
    We first show that every simple directed path in $G_\A$ has weight at most $1$. A simple directed path of length zero has weight $0$.
    Now let $P$ be a nonempty simple directed path from agent $i$ to agent $j$, where $i\ne j$. Adding the edge $j\to i$ to $P$ gives a directed cycle. Since $G_\A$ has no positive-weight directed cycle, $w_\A(P)+w_\A(j,i)\le 0$.
    
    Moreover, since $\A$ is complete, $a_i,a_j\in H$. Thus, by the normalization $u_j(h)\in[0,1]$ for every $h\in H$, $w_\A(j,i)=u_j(a_i)-u_j(a_j)\ge -1$.
    Therefore, $w_\A(P)\le -w_\A(j,i)\le 1$.
    Hence, $0\le L_i\le 1$ for every $i\in N$.

    Now let $L_{\min}\coloneq \min_{r\in N} L_r$, and define $s_i\coloneq L_i-L_{\min}$ for every $i \in N$.
    Then $s_i\ge 0$ for every $i$, and at least one agent receives
    subsidy $0$. The envy-free inequality is trivial when $i=j$. Now fix distinct agents $i,j\in N$. The edge $i\to j$ followed by a maximum-weight walk starting at $j$ is a directed walk in $G_\A$ starting at $i$. Hence $L_i \ge w_\A(i,j)+L_j$,     and therefore
    \[
    s_i-s_j=L_i-L_j\ge w_\A(i,j)=u_i(a_j)-u_i(a_i).
    \]
    Equivalently,
    \[
    u_i(a_i)+s_i\ge u_i(a_j)+s_j.
    \]
    Thus, $\mathbf{s}$ is envy-eliminating for $\A$. Since $L_{\min}\ge 0$, we have $s_i=L_i-L_{\min}\le L_i\le 1$
    for every $i\in N$. Moreover, at least one agent receives subsidy $0$. Therefore
    \[
    \mathrm{sub}(\A) \le \sum_{i\in N} s_i \le n-1. \qedhere
    \]
\end{proof}

\subsection{Proof of Theorem~\ref{thm:efficiency-subsidy-incompatible}}

For later use, suppose that $S$ contains $n$ houses with common
values $y_1\le y_2\le\cdots\le y_n$.
Then
\begin{equation}
\mu(S)=\sum_{r=1}^{n}(y_n-y_r).
\label{eqn:mu_1}
\end{equation}
Indeed, paying $y_n-y_r$ to an agent who receives value $y_r$
makes all agents equally well off, and any smaller total payment
leaves envy toward an agent receiving value $y_n$.

\efficiencysubsidyincompatible*

\begin{proof}
Choose $\delta>0$ so small that $0<\delta<\frac12$, $(n-1)\delta < \rho (1+(n-1)\delta )$, and $\frac{(n-1)(1-\delta)}{\delta} > R$.
Such a choice is possible: as $\delta \rightarrow 0$, the left-hand side of the second inequality tends to $0$ while its right-hand side tends to $\rho>0$, and the left-hand side of the third inequality tends to $\infty$.

Consider an identical-utilities instance $\I$ with an arbitrary arrival order over $n+1$ houses whose common values are
\[
1,\underbrace{\delta,\dots,\delta}_{n-1\text{ houses}},0.
\]
Since $0<\delta<1/2$, all values lie in $[0,1]$.

Since the utilities are identical, writing $u$ for the common valuation
as in Section~\ref{sec:prelim}, every feasible complete allocation $\A$ satisfies
\[
W(\A)=\sum_{i\in N} u(a_i)
     =\sum_{h\in \mathrm{supp}(\A)} u(h).
\]
Thus, $W(\A)$ depends only on $\mathrm{supp}(\A)$. Since $m=n+1$,
each complete allocation selects exactly $n$ of the $n+1$ houses.
Therefore, the maximum utilitarian welfare is obtained by omitting the
value-$0$ house, i.e., by using the value-$1$ house and all $n-1$
value-$\delta$ houses. Hence $W^*(\I)=1+(n-1)\delta$.

Any feasible complete allocation whose support does not contain the
value-$1$ house must use all remaining $n$ houses, namely the $n-1$
value-$\delta$ houses and the value-$0$ house. Its welfare is therefore
$(n-1)\delta$, which is strictly smaller than $\rho W^*(\I)$ since $(n-1)\delta < \rho (1+(n-1)\delta )$ by the
choice of $\delta$. Thus, every feasible complete allocation $\A$ with
$W(\A)\ge \rho W^*(\I)$ must contain the value-$1$ house in its support.

We next show that every subsidy-optimal support omits the value-$1$
house. To this end, we compare the subsidy values of the possible
$n$-element supports. By \eqref{eqn:mu_1}, if an $n$-element support $S$ has common values $y_1\le \cdots \le y_n$, then $\mu(S)=\sum_{r=1}^n (y_n-y_r)$.
Moreover, since every feasible complete allocation has an
$n$-element support, the definitions of $\mu$ and $\mathrm{OPT}$
give
\[
\mathrm{OPT}(\I)=
\min\{\mu(S): S\subseteq H,\ |S|=n\}.
\]

Among the $n$-element supports containing the value-$1$ house,
the minimum subsidy is obtained by omitting the value-$0$ house.
Indeed, every such support has maximum value $1$, so by
\eqref{eqn:mu_1} its subsidy equals the sum of the gaps between
$1$ and the remaining $n-1$ values. This quantity is minimized
when those remaining values are as large as possible, namely when
all $n-1$ value-$\delta$ houses are included. The resulting
subsidy is
\[
\mu(S)
=
\sum_{r=1}^{n-1}(1-\delta)
=
(n-1)(1-\delta).
\]

Now consider a support that contains both the value-$1$ house and
the value-$0$ house. Since the support contains exactly $n$
houses out of the $n+1$ available houses, it must omit one of the
$n-1$ value-$\delta$ houses. Its values are therefore
\[
0,\underbrace{\delta,\dots,\delta}_{n-2\text{ houses}},1.
\]
Applying \eqref{eqn:mu_1} gives $\mu(S)
=
1+(n-2)(1-\delta)
=
(n-1)(1-\delta)+\delta$,
which is strictly larger than $(n-1)(1-\delta)$.

Therefore, every feasible complete allocation $\A$ whose support
contains the value-$1$ house satisfies
\[
\mathrm{sub}(\A)
\ge
\mu(\mathrm{supp}(\A))
\ge
(n-1)(1-\delta),
\]
where the first inequality follows from the definition of $\mu$.

It remains to consider supports that avoid the value-$1$ house.
There is only one such $n$-element support: it consists of the
$n-1$ value-$\delta$ houses together with the value-$0$ house.
Its values are
\[
0,\underbrace{\delta,\dots,\delta}_{n-1\text{ houses}},
\]
so \eqref{eqn:mu_1} gives us $\mu(S) = \delta$.

Since $\delta<1/2$ and $n\ge2$, $\delta<(n-1)(1-\delta)$.
Hence, this support uniquely minimizes subsidy, and therefore $\mathrm{OPT}(\I)=\delta>0$.

Combining the welfare argument with the subsidy comparison, every
feasible complete allocation $\A$ with
$W(\A)\ge \rho W^*(\I)$ satisfies
\[
\mathrm{sub}(\A)
\ge
(n-1)(1-\delta)
>
R\delta
=
R\cdot \mathrm{OPT}(\I).
\]
This proves the theorem. 
\end{proof}

\subsection{Proof of Theorem~\ref{thm:no-online-competitive-ratio}}

\noonlinecompetitiveratio*

\begin{proof}
    
We use the identical-utilities formula in \eqref{eqn:mu_1}. Moreover, since $n=2$, for every
complete allocation $\A$ with $\mathrm{supp}(\A)=\{h,g\}$, we have $\mathrm{sub}(\A)=|u(h)-u(g)|$, i.e., for two agents with a common utility function $u$, the subsidy of a complete
allocation is exactly the absolute difference between the two common
values in its support.
Indeed, assume without loss of generality that $u(h)\le u(g)$.
Let $i$ be the agent with $a_i=h$, and let $j$ be the agent
with $a_j=g$. For any envy-eliminating subsidy vector $s$,
Definition~\ref{def:envy-freeness-with-subsidy} applied to agent $i$'s envy toward agent $j$ gives us $u(h)+s_i \ge u(g)+s_j$,
and hence $s_i-s_j \ge u(g)-u(h)$.
Since $s_j\ge 0$, this implies $s_i+s_j \ge u(g)-u(h)$.
Thus, every envy-eliminating subsidy vector has total subsidy at
least $u(g)-u(h)$. Conversely, the subsidy vector with
$s_i=u(g)-u(h)$ and $s_j=0$ is envy-eliminating. Hence, the
minimum total subsidy is $u(g)-u(h)=|u(h)-u(g)|$.

\paragraph{Deterministic Adaptive Lower Bound.} We first prove the deterministic adaptive lower bound. Fix a deterministic
valid online algorithm $A$, and fix $\varepsilon \in (0,1/4)$. The
adversary first releases three houses $h_1,h_2,h_3$ with common values $u(h_1)=0$, $u(h_2)=\frac12$, $u(h_3)=1$.

Since $A$ is valid and does not know whether the sequence stops after this prefix, we have that $|\mathrm{supp}(\A^3)|=\min\{3,2\}=2$.
Hence, exactly one of the three prefix houses is not in
$\mathrm{supp}(\A^3)$. This house is permanently discarded. Let
$d\in\{0,\frac12,1\}$ be its common value.

The adversary then releases a fourth house $h_4$ with common value
\begin{equation}
    \label{eq:fourth house}
    u(h_4)=d' \coloneq 
\begin{cases}
\varepsilon, & d=0,\\[2mm]
\frac12+\varepsilon, & d=\frac12,\\[2mm]
1-\varepsilon, & d=1.
\end{cases}
\end{equation}
All four common values lie in $[0,1]$. The offline optimum may choose
any two of the four houses. The pair consisting of the prefix house of
value $d$ and the fourth house of value $d'$ has absolute difference
$\varepsilon$. Every other two-house support has absolute difference at
least $\frac12-\varepsilon>\varepsilon$. Therefore, $\mathrm{OPT}(\I)=\varepsilon>0$.

At round $4$, even under the permissive update rule, the final support chosen by $A$ must be contained in $\mathrm{supp}(\A^3)\cup\{h_4\}$.
In particular, $A$ cannot use the discarded prefix house of value $d$.
Since $A$ is valid, its final allocation is complete, so the two common
values in its final support form a two-element subset of $\left(\left\{0,\frac12,1\right\}\setminus\{d\}\right)\cup\{d'\}$.
Every such two-element subset has absolute difference at least
$\frac12-\varepsilon$. Hence, letting $A(\I)$ be the final allocation produced by $A$ on instance $\I$, we have $\mathrm{sub}(A(\I))\ge \frac12-\varepsilon$,
and so
\[
\frac{\mathrm{sub}(A(\I))}{\mathrm{OPT}(\I)}
\ge
\frac{\frac12-\varepsilon}{\varepsilon}.
\]
As $\varepsilon\rightarrow 0$, this lower bound diverges. Thus, no
deterministic valid online algorithm has a bounded competitive ratio
against an adaptive adversary.

\paragraph{Randomized Non-Adaptive Lower Bound.} We next prove the randomized non-adaptive lower bound. Fix a randomized
valid online algorithm $A$, and again fix $\varepsilon\in(0,1/4)$.
Consider the common three-house prefix with values $0,\frac12,1$. Running $A$ on this instance induces a
distribution over the house discarded after the third round.
For each $z\in\{0,\frac12,1\}$, let $p_z$ be the probability, over
the internal randomness of $A$, that the prefix house of common value
$z$ is discarded, i.e., it is not in $\mathrm{supp}(\A^3)$ after processing this instance. By
randomized validity on the length-three instance, the allocation after
this prefix is complete with probability one. Therefore, exactly one of
the three prefix houses is discarded with probability one, and $\sum_{z\in\{0,\frac12,1\}} p_z = 1$.
Hence, there exists $d\in\{0,\frac12,1\}$ such that $p_d\ge 1/3$.

The probabilities $p_z$ are determined solely by the
algorithm and the common three-house prefix. In particular, they do
not depend on the value of any future fourth house. Thus, the choice of
$d$ does not depend on the realized random choices of $A$. A non-adaptive adversary can therefore inspect
this distribution and choose the value $d$ above before the
algorithm's random choices are realized. Consequently, a
non-adaptive adversary can fix in advance a four-house
instance obtained by appending a fourth house $h_4$ whose common value
$d'$ is defined by the same display as in our proof for the deterministic adaptive lower bound (see \eqref{eq:fourth house}).

For this fixed instance, the same gap calculation gives us $\mathrm{OPT}(\I)=\varepsilon>0$.
With probability at least $1/3$, the prefix house of common value $d$
has already been discarded before $h_4$ arrives. Conditional on this
event, even the permissive update rule forbids $A$ from using that
house at round $4$. Hence, every available complete final support has
absolute difference at least $\frac12-\varepsilon$. Since subsidies are
non-negative,
\[
\mathbb{E}[\mathrm{sub}(A(\I))]
\ge
p_d\left(\frac12-\varepsilon\right)
\ge
\frac13\left(\frac12-\varepsilon\right).
\]
Accordingly,
\[
\frac{\mathbb{E}[\mathrm{sub}(A(\I))]}{\mathrm{OPT}(\I)}
\ge
\frac{\frac13\left(\frac12-\varepsilon\right)}{\varepsilon},
\]
which diverges as $\varepsilon\rightarrow 0$. Hence, no randomized valid
online algorithm has a bounded competitive ratio against a non-adaptive
adversary.

The argument gave the algorithm the permissive update rule throughout.
Therefore, the lower bounds also hold under any more restrictive update
rule, including $2$-flexibility.

\end{proof}

\section{Omitted Proofs and Additional Results from Section~\ref{sec:positive-subsidy}}

\subsection{Proof of Theorem~\ref{thm:one-extra-house}}

\oneextrahouse*

\begin{proof}
We construct the following algorithm. First, fix once and for all a deterministic total order on feasible allocations of every finite prefix $H^t$ (e.g., a lexicographic order using the arrival indices of houses). The algorithm uses this order only to break ties among allocations that are feasible with respect to the houses currently known to the algorithm.
For the first $n$ arrivals, the algorithm runs Algorithm~\ref{alg:active-max-weight-chain} with this
fixed deterministic rule used to break any remaining ties. If $h_{n+1}$
arrives, it performs the update described below, again breaking any
remaining ties by the same fixed deterministic order; after setting $\A^{n+1}$, it sets $D^{n+1}\coloneq H^{n+1}\setminus \mathrm{supp}(\A^{n+1})$.
For every later arrival $h_t$ with $t>n+1$, the algorithm sets
$\A^t=\A^{t-1}$ and $D^t\coloneq H^t\setminus \mathrm{supp}(\A^t)$.
Thus, all later arrivals are rejected, the algorithm is deterministic,
respects the discarded-house convention from Section~\ref{sec:recourse}, and is defined
on every input sequence.

We begin by recording the properties of the first $n$
rounds inherited from Algorithm~\ref{alg:active-max-weight-chain}.
These properties will be used both to prove subsidy
optimality and to verify the recourse bound. By Theorem~\ref{thm:active-chain}, applied to the length-$n$ prefix, each of the
first $n$ updates is a chain update of length at most $n$, and the
round-$n$ allocation $\A^n$ maximizes $\sum_{i\in N} u_i(b_i)$ over all complete allocations $\mathbf{b}$ with support $H^n$. Equivalently,
$E(\A^n)$ is a maximum-weight perfect matching between $N$ and $H^n$, where edge $(i,h)$ has weight $u_i(h)$.

Moreover, during the first $n$ rounds, Algorithm~\ref{alg:active-max-weight-chain} assigns exactly $t$ houses from the $t$-element set $H^t$ at each round $t \le n$. Hence
\[
\mathrm{supp}(\A^t)=H^t
\quad\text{and}\quad
D^t= \varnothing
\quad\text{for every } t \le n .
\]
In particular, no house has been discarded before round $n+1$.

We next establish subsidy optimality for the cases
$m=n$ and $m=n+1$. We then show that the transition at
round $n+1$ is either a rejection or a chain update of
length at most $n$, and finally verify that the algorithm
is valid.

If the sequence stops at round $n$, then every feasible complete allocation has support $H^n$. Hence, the minimum possible subsidy among feasible complete allocations is $\mu(H^n)$.
Since $E(\A^n)$ is a maximum-weight perfect matching on this support,
Lemma~\ref{lem:fixed-support} gives us $\mathrm{sub}(\A^n)=\mu(H^n)=\OPT(\I)$.

It remains to consider the case $m=n+1$. Indeed, suppose that $h_{n+1}$ arrives. The algorithm
evaluates every possible $n$-element support obtainable by
discarding one house from $H^{n+1}$, and chooses one with
minimum subsidy value. For each $g\in H^{n+1}$, let $S_g \coloneq  H^{n+1}\setminus\{g\}$, and define $\mu^* \coloneq  \min_{g\in H^{n+1}} \mu(S_g)$.
Each set $S_g$ is a candidate support for a complete final
allocation when $m=n+1$, and the quantity $\mu(S_g)$ is the
minimum subsidy achievable on that support.
Moreover, $\mu(S_g)$ depends only on the utilities of houses in $S_g$, all of
which are known at round $n+1$, and by Lemma~\ref{lem:fixed-support} each $S_g$ admits a maximum-weight perfect matching attaining $\mu(S_g)$.

Among all feasible complete allocations $\mathbf{b}$ such that $\mathrm{supp}(\mathbf{b}) = S_g$
for some $g \in H^{n+1}$ with $\mu(S_g) = \mu^*$, and such that $E(\mathbf{b})$ is a maximum-weight perfect matching on its support, the algorithm chooses one minimizing $|E(\mathbf{b})\triangle E(\A^n)|$, breaking any remaining ties by the fixed deterministic order. 
It then sets
$\A^{n+1}\coloneq \mathbf{b}$. This choice uses only utilities of houses in $H^{n+1}$,
which are known at round $n+1$.

For an instance with $m = n+1$, every feasible complete allocation omits exactly one house from $H^{n+1}$, and hence has support $S_g$ for some $g \in H^{n+1}$. For every such allocation $\mathbf{c}$, $\mathrm{sub}(\mathbf{c}) \ge \mu(\mathrm{supp}(\mathbf{c}))$.

Conversely, for each $g \in H^{n+1}$, Lemma~\ref{lem:fixed-support} gives a feasible complete allocation $\mathbf{c}_g$ with support $S_g$ and $\mathrm{sub}(\mathbf{c}_g)=\mu(S_g)$.
Therefore, $\mathrm{OPT}(\I) = \min_{g\in H^{n+1}} \mu(S_g) = \mu^*$.
By the choice of $\A^{n+1}$, its support is some $S_g$ with $\mu(S_g)=\mu^*$,
and $E(\A^{n+1})$ is a maximum-weight perfect matching on this support. Thus, Lemma~\ref{lem:fixed-support} gives us $\mathrm{sub}(\A^{n+1})=\mu(S_g)=\mu^*=\OPT(\I)$.

It remains to show that the transition from $\A^n$ to $\A^{n+1}$ is either
a rejection of $h_{n+1}$ or a chain update of length at most $n$. Let $M \coloneq  E(\A^n)$ and $M' \coloneq  E(\A^{n+1})$.

First, suppose that $h_{n+1}\notin \mathrm{supp}(\A^{n+1})$. Then the
selected support is $S_{h_{n+1}}=H^n$, so $\mu(H^n)=\mu^*$. Since
$E(\A^n)$ is a maximum-weight perfect matching on $H^n$, the allocation $\A^n$ is one of the allocations considered by the round-$n+1$ selection. Its symmetric-difference distance from itself is $0$, i.e., $|E(\A^n)\triangle E(\A^n)| = 0$. Hence, the tie-breaking rule selects $\A^{n+1}=\A^n$, so the update is a rejection of $h_{n+1}$.

Now suppose that $h_{n+1}\in \mathrm{supp}(\A^{n+1})$. Then there is
exactly one old house $g\in H^n$ such that $g\notin \mathrm{supp}(\A^{n+1})$.
Since $h_{n+1}$ is incident to an edge of $M'$ and to no edge of $M$,
while $g$ is incident to an edge of $M$ and to no edge of $M'$,
the symmetric difference $M\triangle M'$ is nonempty.

Since $M\triangle M'$ is the symmetric difference of two matchings, every
connected component of $M\triangle M'$ is an alternating path or an alternating
cycle. Moreover, both $M$ and $M'$ match every agent exactly once, while
their house supports differ exactly by replacing the old house $g$ with
$h_{n+1}$. Thus, every agent has degree zero or two in $M\triangle M'$, and
the only house vertices of degree one are $h_{n+1}$ and $g$.
Hence, $h_{n+1}$ and $g$ are the
endpoints of the same alternating path, and every connected component of
$M\triangle M'$ that does not contain $h_{n+1}$ is an alternating cycle.

We claim that no such cycle exists. Suppose, for contradiction, that $C$
is an alternating cycle in $M\triangle M'$ that avoids $h_{n+1}$. Since
$C$ also avoids $g$, replacing the $M$-edges of $C$ by the
$M'$-edges of $C$ preserves the support $H^n$, and replacing the
$M'$-edges of $C$ by the $M$-edges of $C$ preserves the support
$\mathrm{supp}(\A^{n+1})$. We now compare the welfare contribution of the two matchings on the component $C$. Specifically, let
\[
\Delta(C)
\coloneq 
\sum_{(i,h)\in C\cap M'} u_i(h)
-
\sum_{(i,h)\in C\cap M} u_i(h).
\]
Suppose, for contradiction, that $\Delta(C)\ne 0$. We distinguish between two cases:
\begin{description}
    \item[Case 1:] If $\Delta(C)>0$, then replacing the $M$-edges of $C$ by the
    $M'$-edges of $C$ gives a higher-weight perfect matching on $H^n$,
    contradicting the choice of $\A^n$.

    \item[Case 2:]  If $\Delta(C)<0$, then replacing the
    $M'$-edges of $C$ by the $M$-edges of $C$ gives a higher-weight
    perfect matching on $\mathrm{supp}(\A^{n+1})$, contradicting the
    choice of $\A^{n+1}$ as a maximum-weight perfect matching on its support.
\end{description}

Therefore, $\Delta(C)=0$. We now show that the existence of such a component $C$ contradicts the tie-breaking rule.
Let $\widehat M \coloneq  (M' \setminus (C \cap M')) \cup (C \cap M)$.
Then $\widehat M$ is a perfect matching with support
$\mathrm{supp}(\A^{n+1})$ and with the same weight as $M'$. Hence, it is
still a maximum-weight perfect matching on the same support as $\A^{n+1}$, whose $\mu$-value is
$\mu^*$. Let $\widehat{\A}$ be the complete allocation with
$E(\widehat{\A})=\widehat M$. Then $\widehat{\A}$ is one of the allocations
considered by the round-$(n+1)$ selection. Moreover, since $C$ is a nonempty
connected component of $M\triangle M'$,
\begin{align*}
    |E(\widehat{\A})\triangle E(\A^n)|&=|\widehat M\triangle M|
   < |M'\triangle M| = |E(\A^{n+1})\triangle E(\A^n)|,
\end{align*}
contradicting the tie-breaking rule.
Therefore, no alternating cycle of
$M\triangle M'$ avoids $h_{n+1}$.

It follows that the only nonempty component of $M\triangle M'$ is the alternating path whose endpoints are $h_{n+1}$ and $g$. Starting from $h_{n+1}$, the first edge of this path belongs to $M'$, and the edges then alternate between $M$ and $M'$. Let $i_1,\dots,i_\ell$ be the agents encountered in this order. These agents are distinct and $\ell\le n$. The path gives us
\[
a^{n+1}_{i_1}=h_{n+1},\quad
a^{n+1}_{i_r}=a^n_{i_{r-1}}\quad\text{for every } r=2,\dots,\ell,
\]
and
\[
a^{n+1}_j=a^n_j
\quad\text{for every } j\notin\{i_1,\dots,i_\ell\}.
\]
Moreover, since $\A^n$ is complete, $a^n_{i_r}\neq \varnothing$ for every $r<\ell$. The final old house on the path is $a^n_{i_\ell}=g$, and since $g\notin \mathrm{supp}(\A^{n+1})$, it belongs to $D^{n+1}=H^{n+1}\setminus \mathrm{supp}(\A^{n+1})$.
Thus, the update is a chain update of length $\ell\le n$.

In summary, our constructed algorithm satisfies the following properties:
\begin{enumerate}
    \item \textbf{$n$-Flexibility:} The first $n$ updates are $n$-flexible by Theorem~\ref{thm:active-chain}, the
    update at round $n+1$ is either a rejection or a chain update
    of length at most $n$, and all later updates are rejections. Hence, the algorithm is $n$-flexible.

    \item \textbf{Validity:} Moreover, for every $t<n$ the first phase assigns exactly $t$
    houses, and from round $n$ onward the allocation assigns exactly
    $n$ houses. Hence $|\mathrm{supp}(\A^t)|=\min\{t,n\}$ for every round $t$, and in particular the final allocation is complete
    for every input sequence with $m\ge n$. Thus, the algorithm is valid.

    \item \textbf{Subsidy-Optimality:} The subsidy-optimality for $m\in\{n,n+1\}$ was shown above.
\end{enumerate}
\end{proof}

\subsection{Additive Guarantees for Identical Utilities}
\label{app:identical}

Throughout this section we assume $u(h)\in[0,1]$. This
normalization is essential for additive guarantees, since scaling all
utilities by $U>0$ scales both $\mathrm{sub}(\A)$ and
$\OPT(\I)$ by $U$. Consequently, every additive bound proved below
scales linearly with $U$.

For $\beta\ge0$, an online algorithm has additive loss at most
$\beta$ if $\mathrm{sub}(\A(\I))
\le
\mathrm{OPT}(\I)+\beta$ for every instance $I$ in the class.

\begin{theorem}
\label{thm:identical-two}
Assume $n=2$ and identical utilities in $[0,1]$. Against an
adaptive adversary, the best worst-case additive loss obtainable by a
deterministic valid online algorithm is $1/2$. Moreover, this loss
is attained by a deterministic valid $1$-flexible algorithm.
\end{theorem}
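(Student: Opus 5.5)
The plan is to prove the two halves of Theorem~\ref{thm:identical-two} separately: an adaptive lower bound of $1/2-O(\varepsilon)$ for every $\varepsilon>0$, and a matching upper bound of $1/2$ from a simple greedy $1$-flexible algorithm. Throughout I use the observation from the proof of Theorem~\ref{thm:no-online-competitive-ratio}: for $n=2$ and identical utilities, a complete allocation with support $\{h,g\}$ has $\mathrm{sub}=|u(h)-u(g)|$. Hence $\OPT(\I)$ is the minimum gap between two values in $H$. Also, with identical utilities every permutation of a support has the same welfare, so every complete allocation is envy-freeable by Proposition~\ref{prop:hs-characterization}.

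\textbf{Lower bound.} Fix a deterministic valid algorithm and $\varepsilon\in(0,1/8)$.
\begin{enumerate}
\item The adversary releases $h_1,h_2$ with values $0$ and $1$. Validity forces the algorithm to hold both.
\item It then releases $h_3$ of value $1/2$, and the algorithm must discard exactly one of the three houses.
\item If it discards $h_3$ (keeping $\{0,1\}$), the adversary stops. Then $\mathrm{sub}=1$ and $\OPT=1/2$, so the loss is $1/2$.
\item If it discards the value-$1$ house, the adversary releases $h_4$ of value $1-\varepsilon$. Then $\OPT=\varepsilon$ via the pair $\{1,1-\varepsilon\}$. Every pair from $\{0,1/2,1-\varepsilon\}$ has gap at least $1/2-\varepsilon$, so the loss is at least $1/2-2\varepsilon$.
\item If it discards the value-$0$ house, the adversary releases $h_4$ of value $\varepsilon$, and the situation is symmetric.
\end{enumerate}
Since $\varepsilon$ is arbitrary, no deterministic valid algorithm has additive loss strictly less than $1/2$. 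This mirrors the deterministic argument in Theorem~\ref{thm:no-online-competitive-ratio}, with the outer values $0,1$ released first so that keeping the extremes is also penalized.

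\textbf{Upper bound.} The algorithm is greedy.
\begin{enumerate}
\item In rounds $1,2$ it assigns each house to an empty agent.
\item From round $3$ on, when $h_t$ arrives with current pair $\{x,y\}$, it chooses among $\{x,y\}$, $\{x,u(h_t)\}$ and $\{y,u(h_t)\}$ a pair of minimum gap, preferring rejection on ties.
\item If it replaces a house, it gives $h_t$ to the agent holding the discarded house. This is a chain update of length $1$, so the algorithm is deterministic, valid and $1$-flexible.
\end{enumerate}
For the analysis:
\begin{enumerate}
\item If $m=2$, the held pair is $H$ itself, so $\mathrm{sub}=\OPT$.
\item Rejection is always an option, so the held gap is nonincreasing from round $2$ on.
\item After round $3$, the held pair is the closest pair among three values in $[0,1]$. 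For any three such values, the two consecutive gaps sum to at most $1$, so the smallest gap is at most $1/2$.
\item Hence for $m\ge3$ the final subsidy is at most $1/2\le\OPT(\I)+1/2$.
\end{enumerate}

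The only step requiring care is the lower-bound case analysis, specifically checking that every surviving pair in each branch has gap at least $1/2-\varepsilon$. The upper bound is essentially immediate from monotonicity of the held gap together with the three-point pigeonhole bound.
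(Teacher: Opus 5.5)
Your proposal is correct and takes essentially the same approach as the paper. Both use the prefix values $0,1/2,1$ with adaptive branching on which house is discarded at round~$3$, and the same greedy closest-pair $1$-flexible algorithm, analyzed via monotonicity of the held gap and the three-point pigeonhole bound. The only difference is minor: when the middle house is discarded you stop immediately, giving loss exactly $1/2$, whereas the paper releases a fourth house of value $1/2+\varepsilon$. Both versions work.
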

\begin{proof}
For two agents with common utility function $u$, any
complete allocation $\A$ with $\mathrm{supp}(\A)=\{h,g\}$ satisfies
\begin{equation} \label{eqn:additive_1}
    \mathrm{sub}(\A)=|u(h)-u(g)|.
\end{equation}
Indeed, assume without loss of generality that $u(h)\le u(g)$.
Let $i$ be the agent with $a_i=h$, and let $j$ be the agent
with $a_j=g$. For any envy-eliminating subsidy vector $s$,
Definition~\ref{def:envy-freeness-with-subsidy} applied to agent $i$'s envy toward agent $j$ gives $u(h)+s_i \ge u(g)+s_j$,  and hence $s_i-s_j \ge u(g)-u(h)$.
Since $s_j\ge 0$, this implies $s_i+s_j \ge u(g)-u(h)$.
Thus, every envy-eliminating subsidy vector has total subsidy at
least $u(g)-u(h)$. Conversely, the subsidy vector with
$s_i=u(g)-u(h)$ and $s_j=0$ is envy-eliminating. Hence, the
minimum total subsidy is $u(g)-u(h)=|u(h)-u(g)|$.

\paragraph{Lower Bound.} We first prove the lower bound. Fix a deterministic valid online
algorithm $A$ and $\varepsilon\in(0,1/4)$. The adversary first releases $h_1,h_2,h_3$ with common values $u(h_1)=0$, $u(h_2)=\frac12$, and $u(h_3)=1$.

Since the adversary could terminate after round $3$, $A$'s
validity implies that the round-$3$ allocation must satisfy $|\mathrm{supp}(\A^3)|=\min\{3,2\}=2$.
Hence, exactly one of $h_1,h_2,h_3$ belongs to
$D^3=H^3\setminus \mathrm{supp}(\A^3)$. Let
$d\in\{0,\frac12,1\}$ be its common value.

The adaptive adversary then releases a fourth house $h_4$ with common value
\[
u(h_4)=d' \coloneq 
\begin{cases}
\varepsilon, & d = 0,\\
\frac{1}{2}+\varepsilon, & d = \frac{1}{2},\\
1-\varepsilon, & d = 1,
\end{cases}
\]
and terminates the sequence. Let $\I$ be the resulting four-house instance.
Since $\varepsilon < 1/4$, all values lie in $[0,1]$.

The offline optimum can select the two houses with common values
$d$ and $d'$, whose gap is $\varepsilon$. Moreover, among the
two prefix values different from $d$, the value closest to $d'$
is at distance $1/2-\varepsilon$: this distance is
$\frac12-\varepsilon$ when $d=0$, $1-(\frac12+\varepsilon)$
when $d=\frac12$, and $(1-\varepsilon)-\frac12$ when $d=1$.
All other gaps are larger. Since $\varepsilon<1/4$, we have
$1/2-\varepsilon>\varepsilon$. Therefore, by
\eqref{eqn:additive_1}, $\OPT(\I)=\varepsilon$.

By Section~\ref{sec:recourse}, $D^3=H^3\setminus \mathrm{supp}(\A^3)$ and
discarded houses cannot be recovered, i.e., $D^3\subseteq D^4$.
Therefore, the prefix house with common value $d$ cannot belong to
$\mathrm{supp}(\A^4)$. Since the only new house after round 3 is
$h_4$, we have $\mathrm{supp}(\A^4) \subseteq \mathrm{supp}(\A^3) \cup \{h_4\}$.
Since $A$ is valid and $n=2$, $\A^4$ is complete, so the common
values of the houses in $\mathrm{supp}(\A^4)$ form a
two-element subset of $(\{0,\frac12,1\}\setminus\{d\})\cup\{d'\}$.
Every such two-element subset has gap at least $1/2-\varepsilon$. 
Thus, by \eqref{eqn:additive_1}, $\mathrm{sub}(A(\I))=\mathrm{sub}(\A^4)\ge \frac12-\varepsilon$.
Given any $\beta\in[0,1/2)$, choose $0<\varepsilon<
\min \{\frac14,\frac{1/2-\beta}{2} \}$.
For the constructed instance,
\[
\mathrm{sub}(A(\I))-\OPT(\I)
\ge \left(\frac{1}{2}-\varepsilon\right)-\varepsilon
= \frac{1}{2}-2\varepsilon
> \beta.
\]
Thus, no deterministic valid online algorithm can guarantee additive
loss strictly below $1/2$ against an adaptive adversary.

\paragraph{Upper Bound.} For the upper bound, consider the following deterministic algorithm. Fix a
deterministic tie-breaking rule over two-element subsets of arrived houses.
At round $1$, assign $h_1$ to agent $1$. At round $2$, assign $h_2$ to
agent $2$ and leave agent $1$ assigned to $h_1$. For every round $t \ge 3$,
choose a two-element set $S_t \subseteq \mathrm{supp}(\A^{t-1}) \cup \{h_t\}$
minimizing $|u(h)-u(g)|$ over all two-element subsets $\{h,g\} \subseteq \mathrm{supp}(\A^{t-1})\cup\{h_t\}$, breaking ties by the fixed rule. If $h_t \notin S_t$, reject $h_t$. If $h_t\in S_t$, then, since $h_t\notin \mathrm{supp}(\A^{t-1})$
and both $S_t$ and $\mathrm{supp}(\A^{t-1})$ have size $2$,
the set $\mathrm{supp}(\A^{t-1})\setminus S_t$ consists of a
unique house; call it $g$. Assign $h_t$ to the agent who held
$g$ under $\A^{t-1}$, and leave the other agent's assignment
unchanged.

The algorithm is valid and respects the discarded-house convention
from Section~\ref{sec:recourse}. After round $1$ it assigns one house, and after
every round $t\ge 2$ it assigns two houses. Hence, $|\mathrm{supp}(\A^t)|=\min\{t,2\}$ for every round $t$. Moreover, for every $t\ge 3$, the new
allocation uses only houses in $\mathrm{supp}(\A^{t-1})\cup\{h_t\}$.
Therefore, with $D^t=H^t\setminus \mathrm{supp}(\A^t)$ as in
Section~\ref{sec:recourse}, we have $D^{t-1}\subseteq D^t$ for every $t\ge 3$.
Thus, no discarded house is ever recovered. Since $m\ge n=2$, the
final allocation $\A^m$ is complete.

The algorithm is also $1$-flexible. The updates in rounds $1$ and
$2$ are chain updates of length $1$, each terminating at a
previously unassigned agent. For $t\ge 3$, if $h_t\notin S_t$, the
algorithm rejects $h_t$. If $h_t\in S_t$, exactly one agent receives
$h_t$, the other agent keeps her previous house, and the replaced old
house is discarded. This is a chain update of length $1$ in the sense
of Definition~\ref{def:chain-update}.

It remains to bound the subsidy. If $m=2$, every complete allocation
has support $H^2$, so by \eqref{eqn:additive_1} the algorithm is subsidy-optimal. Now suppose $m\ge 3$. Among the three values
$u(h_1),u(h_2),u(h_3)\in[0,1]$, two have distance at most $1/2$.
Since at round $3$ the algorithm chooses a minimum-gap pair from
$\{h_1,h_2,h_3\}$, the two houses in $\mathrm{supp}(\A^3)$
have gap at most $1/2$.

For every later round $t\ge 4$, the previous support
$\mathrm{supp}(\A^{t-1})$ is one of the candidate two-element
subsets of $\mathrm{supp}(\A^{t-1})\cup\{h_t\}$. Therefore, the
gap of the selected pair cannot increase. Hence, the final selected
pair has gap at most $1/2$, and by \eqref{eqn:additive_1}, $\mathrm{sub}(\A^m)\le \frac12$.

By \eqref{eqn:additive_1}, every complete allocation in this identical-utilities setting
has finite non-negative subsidy, and therefore $\OPT(\I)\ge 0$.
Hence, $\mathrm{sub}(\A^m) \le \OPT(\I) + \frac{1}{2}$.
This proves an additive-loss upper bound of $1/2$. Together with
the lower bound above, the optimal worst-case additive loss is
exactly $1/2$.
\end{proof}
The construction in Theorem~\ref{thm:no-online-competitive-ratio} also gives an
expected additive-loss lower bound of $1/6-o(1)$ for randomized
algorithms against a non-adaptive adversary. For $n\ge3$, the best
deterministic additive loss remains open.

\subsection{A $1$-Flexible Algorithm for Every Number of Agents}

\begin{theorem}
\label{thm:identical-general}
Assume that $u_i(h)=u(h)\in[0,1]$ for every agent $i$ and every
house $h$. There is a deterministic valid $1$-flexible online
algorithm whose final allocation $\A^m$ satisfies $\mathrm{sub}(\A^m)=\mathrm{OPT}(\I)$ if $m=n$, and $\mathrm{sub}(\A^m)\le \frac{(n-1)^2}{n}$ if $m \ge n+1$.
Consequently, its additive loss is at most $(n-1)^2/n$.
\end{theorem}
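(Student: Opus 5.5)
The plan is to exploit that, with identical utilities, every complete allocation is envy-freeable: all permutations of a fixed support give the same welfare, so Proposition~\ref{prop:hs-characterization} applies. Its subsidy then depends only on the support. By \eqref{eqn:mu_1}, a support with values $y_1\le\dots\le y_n$ has $\mu(S)=\sum_r (y_n-y_r)=n\max_{h\in S}u(h)-\sum_{h\in S}u(h)$, and any allocation with support $S$ attains it. We can therefore ignore which agent holds which house, and the only decision per round is which house, if any, to drop. Replacing one house by $h_t$ and giving $h_t$ to the agent who held the dropped house is a chain update of length $1$.

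\textbf{Algorithm.} In rounds $t\le n$, give $h_t$ to an unassigned agent; this is a chain of length $1$ terminating at an unassigned agent. In every round $t\ge n+1$, the algorithm considers the $n+1$ candidate supports $(\mathrm{supp}(\A^{t-1})\cup\{h_t\})\setminus\{g\}$ and chooses one minimizing $\mu$, breaking ties by a fixed deterministic rule that prefers $g=h_t$ among minimizers.
\begin{itemize}
\item If $g=h_t$, it rejects $h_t$.
\item Otherwise, $h_t$ goes to the holder of $g$ and $g$ is discarded.
\end{itemize}

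\textbf{Routine properties.} Validity, determinism, $1$-flexibility and the discarded-house convention are immediate from this description, as in the proof of Theorem~\ref{thm:identical-two}. If $m=n$, the support is forced to be $H^n$, so $\mathrm{sub}(\A^n)=\mu(H^n)=\mathrm{OPT}(\I)$. Because the current support is always a candidate, $\mu(\mathrm{supp}(\A^t))$ is nonincreasing for $t\ge n+1$. It therefore suffices to show that $\mu(\mathrm{supp}(\A^{n+1}))\le (n-1)^2/n$, which is a one-step selection bound.

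\textbf{Key step: the one-step selection bound.} We must show that any $n+1$ values $z_0\le z_1\le\dots\le z_n$ in $[0,1]$ contain an $n$-subset with $\mu\le (n-1)^2/n$. Write $g_k=z_k-z_{k-1}$ for $k\in[n]$, so $g_k\ge0$ and $\sum_k g_k=z_n-z_0\le 1$. We compare only two candidates.
\begin{itemize}
\item Dropping the maximum gives $\mu_A=\sum_{r=0}^{n-1}(z_{n-1}-z_r)=\sum_{k=1}^{n-1}k\,g_k$.
\item Dropping the minimum gives $\mu_B=\sum_{r=1}^{n}(z_n-z_r)=\sum_{k=2}^{n}(k-1)g_k$.
\end{itemize}
The minimum is at most the convex combination $\lambda\mu_A+(1-\lambda)\mu_B$. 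In this combination, $g_k$ has coefficient $k-1+\lambda$ for $1\le k\le n-1$ and $(1-\lambda)(n-1)$ for $k=n$. The largest of these coefficients is $\max\{n-2+\lambda,\,(1-\lambda)(n-1)\}$, and the two terms are equal at $\lambda=1/n$, where both equal $(n-1)^2/n$. Since the gaps are nonnegative and sum to at most $1$, we get $\min\{\mu_A,\mu_B\}\le (n-1)^2/n$.

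\textbf{Conclusion and main obstacle.} The algorithm's choice at round $n+1$ is at least as good as $\min\{\mu_A,\mu_B\}$, so by monotonicity $\mathrm{sub}(\A^m)\le (n-1)^2/n$ whenever $m\ge n+1$. Since $\mathrm{OPT}(\I)\ge 0$, the additive loss is at most $(n-1)^2/n$. The only nontrivial step is the averaging argument: choosing the right pair of candidate supports and the weight $\lambda=1/n$. The rest is bookkeeping identical to the two-agent case.
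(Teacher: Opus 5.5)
Your proposal is correct and essentially matches the paper's proof. It uses the same support-only reduction, the same $1$-flexible replace-one-house algorithm with $\mu$ of the support nonincreasing after round $n+1$, and the same selection lemma, which compares the drop-the-maximum and drop-the-minimum supports with weights $1/n$ and $(n-1)/n$. Your tie-breaking preference for rejecting $h_t$ and your omission of a separate $n=1$ case are harmless differences.
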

\begin{proof}
For an $n$-element set $S\subseteq H$, let the common values of the houses in $S$ be $y_1 \le y_2 \le \cdots \le y_n$.
By \eqref{eqn:mu_1}, for every complete allocation $\mathbf{b}$ with $\mathrm{supp}(\mathbf{b})=S$,
\[
\mathrm{sub}(\mathbf{b})=\mu(S)=\sum_{r=1}^n (y_n-y_r).
\]
Thus, under identical utilities, every complete allocation whose support is
$S$ has the same minimum subsidy, namely $\mu(S)$. Equivalently, the
minimum subsidy depends only on the set of assigned houses and not on
which agent receives which house. In particular, whenever two complete
allocations have the same support, they have the same subsidy.

If $n=1$, use the algorithm that assigns $h_1$ to the unique agent
and rejects every later arrival. This algorithm is valid and $1$-flexible,
and every complete allocation has subsidy $0$. Hence, $\mathrm{sub}(\A^{m})=\mathrm{OPT}(\I)=0$, so both claimed bounds hold. Thus, assume $n\ge 2$.

The algorithm is as follows. For each $t\le n$, assign $h_t$ to
the lowest-index agent who is unassigned in $\A^{t-1}$. Therefore, after
round $t\le n$, the support is $H^t$, and no house has been discarded.

For each $t>n$, consider the $n+1$ houses in $\mathrm{supp}(\A^{t-1})\cup\{h_t\}$.
Using \eqref{eqn:mu_1}, choose an $n$-element subset $S_t$ of this set minimizing $\mu(S_t)$, breaking ties by a fixed deterministic order over subsets of arrived houses. This choice uses only the common values of houses that have already arrived.

If $h_t\notin S_t$, then necessarily $S_t=\mathrm{supp}(\A^{t-1})$. Set $\A^t=\A^{t-1}$ and discard $h_t$. If $h_t\in S_t$, then, since both $S_t$ and
$\mathrm{supp}(\A^{t-1})$ have size $n$ and
$S_t\subseteq \mathrm{supp}(\A^{t-1})\cup\{h_t\}$,
there is a unique old house $g\in \mathrm{supp}(\A^{t-1})\setminus S_t$.
Let $i$ be the agent with $a_i^{t-1}=g$. Set $a_i^t=h_t$ and $a_j^t=a_j^{t-1}$ for every $j \neq i$, and discard $g$. Since only the house assigned to agent $i$ changes, feasibility is preserved.
In either case, $\mathrm{supp}(\A^t)=S_t$.

The fixed tie-breaking rule makes the algorithm deterministic. Feasibility is preserved because the first $n$ houses are assigned to distinct previously unassigned agents, and after round $n$ a non-rejection update only replaces one assigned house by the newly arrived house $h_t$. Moreover, $|\mathrm{supp}(\A^t)|=\min\{t,n\}$
for every $t$. No house is discarded before round $n+1$. For $t>n$, every future candidate set is formed only from the current support $\mathrm{supp}(\A^{t-1})$ and the new arrival $h_t$, so any rejected or replaced house is never used again. Thus, the algorithm is consistent with the discarded-house rule, and since $m\ge n$, the final allocation $\A^m$ is complete. Hence, the algorithm is valid.

The algorithm is $1$-flexible. For $t\le n$, the update is a chain update of length $1$ ending at a previously unassigned agent. For $t>n$, if $h_t\notin S_t$, the algorithm rejects $h_t$. If $h_t\in S_t$, then with $i_1=i$ we have
\[
a_{i_1}^t=h_t
\quad\text{and}\quad
a_j^t=a_j^{t-1}\text{ for every }j\ne i_1,
\]
while the old house $a_{i_1}^{t-1}=g\ne\varnothing$ is discarded. This is a chain update of length $1$.

It remains to prove the subsidy bound. If $m=n$, then every complete
allocation has support $H$. Since, under identical utilities, the
subsidy depends only on the support, the algorithm's final allocation
has subsidy $\mu(H)=\mathrm{OPT}(\I)$. Hence, the additive loss is
$0$. We may therefore assume that $m\ge n+1$.

We first prove the following selection lemma.
\begin{lemma}\label{lem:selection-bound}
    For any $n+1$ values $0\le x_1\le x_2\le \cdots \le x_{n+1}\le 1$, there is a choice of $n$ of these values whose identical-utilities
    subsidy is at most $\frac{(n-1)^2}{n}$.
\end{lemma}
\begin{proof}
Let $d_r\coloneq x_{r+1}-x_r$ for $r\in[n]$.
Consider the two choices consisting of the first $n$ values and the
last $n$ values. Their subsidies are
\[
L\coloneq\sum_{r=1}^{n-1} r d_r
\quad\text{and}\quad
R\coloneq\sum_{r=2}^{n} (r-1)d_r,
\]
respectively. Hence, the minimum subsidy over all choices of $n$
values is at most $\min\{L,R\}$. Since $\frac1nL+\frac{n-1}{n}R$
is a convex combination of $L$ and $R$, it is at least
$\min\{L,R\}$. Therefore,
\[
\min\{L,R\}\le \frac{1}{n}L+\frac{n-1}{n}R,
\]
and it is sufficient to bound the weighted average on the right-hand side. The coefficient of $d_1$ is $1/n$, which is at most $(n-1)^2/n$ because $n\ge 2$. For $2\le r\le n-1$, the coefficient of $d_r$ is
\[
\frac{r}{n}+\frac{(n-1)(r-1)}{n}
= r-1+\frac{1}{n}
\le n-2+\frac{1}{n}
= \frac{(n-1)^2}{n}.
\]
The coefficient of $d_n$ is $\frac{(n-1)^2}{n}$.
Therefore, every coefficient is at most $(n-1)^2/n$, and so
\begin{equation*}
    \min\{L,R\}
\le
\frac{(n-1)^2}{n}\sum_{r=1}^n d_r =
\frac{(n-1)^2}{n}(x_{n+1}-x_1)  \le
\frac{(n-1)^2}{n}.
\end{equation*}
This proves the lemma.
\end{proof}

We now apply Lemma~\ref{lem:selection-bound} to the first round at which
the algorithm has a choice of which house to discard, namely round $n+1$. At round $n+1$, we have
$\mathrm{supp}(\A^n)=H^n$ and no house has been discarded.
Hence, the candidate supports considered by the algorithm are precisely
the $n$-element subsets of $H^{n+1}$. By construction, the algorithm
selects a subset minimizing $\mu$ among these candidates. Applying
Lemma~\ref{lem:selection-bound} to the common values of the $n+1$
houses in $H^{n+1}$ gives us
\[
\mu(\mathrm{supp}(\A^{n+1}))
\le
\frac{(n-1)^2}{n}.
\]
For every later round $t>n+1$, the previous support
$\mathrm{supp}(\A^{t-1})$ is one admissible $n$-element candidate
support among the subsets of $\mathrm{supp}(\A^{t-1})\cup\{h_t\}$.
Since the algorithm chooses an $n$-element subset with minimum $\mu$ among
these candidates, we have $\mu(\mathrm{supp}(\A^{t}))
\le
\mu(\mathrm{supp}(\A^{t-1}))$.
Thus, the support subsidy never increases after round $n+1$, and
therefore, whenever $m\ge n+1$,
\[
\mathrm{sub}(\A^{m})
=
\mu(\mathrm{supp}(\A^{m}))
\le
\frac{(n-1)^2}{n}.
\]
By the support-only equality at the beginning of the proof, we get $\mathrm{sub}(\A^m)=\mu(\mathrm{supp}(\A^m))$.
Thus, whenever $m\ge n+1$,
\[
\mathrm{sub}(\A^m)\le \frac{(n-1)^2}{n}.
\]
Moreover, in identical-utilities instances every complete allocation has finite subsidy by \eqref{eqn:mu_1}, so $\OPT(\I)\ge 0$. Hence
\[
\mathrm{sub}(\A^m)\le \OPT(\I)+\frac{(n-1)^2}{n}.
\]
Together with the exact equality for $m=n$, the theorem follows.
\end{proof}

\subsection{An $n+1$-House Example}
\label{app:tight-one-step}
\begin{proposition}
\label{prop:selection-bound-lower}
For every $n\ge 2$, there are $n+1$ houses with common utility in
$[0,1]$ such that every $n$-element support $S$ satisfies $\mu(S)\ge \frac{(n-1)^2}{n}$.
Therefore, the constant in Lemma~\ref{lem:selection-bound} cannot be reduced.
\end{proposition}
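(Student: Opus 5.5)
We give an explicit construction and verify it with formula \eqref{eqn:mu_1}. Under identical utilities, $\mu(S)$ depends only on the multiset of common values in $S$. Any $n$-element support of an $(n+1)$-house instance omits exactly one house. So it suffices to choose $n+1$ values in $[0,1]$ and check the $\mu$-value obtained by omitting each distinct value.

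The equality analysis of Lemma~\ref{lem:selection-bound} suggests the shape of the instance. The bound is tight only when the mass $x_{n+1}-x_1$ is essentially concentrated on the last gap, which carries the largest coefficient $(n-1)^2/n$, and when the two extreme choices (drop $x_1$ or drop $x_{n+1}$) give equal subsidies. This leads us to take $n-1$ houses of value $0$, one house of value $c\coloneq \frac{n-1}{n}$, and one house of value $1$. All values lie in $[0,1]$, and $c\in[0,1)$ since $n\ge2$.

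There are three types of $n$-element support, and we evaluate each with \eqref{eqn:mu_1}.
\begin{itemize}
\item Omitting the value-$1$ house leaves the values $0,\dots,0,c$ with maximum $c$, so $\mu(S)=(n-1)c=\frac{(n-1)^2}{n}$.
\item Omitting the value-$c$ house leaves the values $0,\dots,0,1$, so $\mu(S)=n-1\ge\frac{(n-1)^2}{n}$.
\item Omitting one value-$0$ house leaves $n-2$ zeros together with $c$ and $1$, so $\mu(S)=(n-2)+(1-c)=n-1-\frac{n-1}{n}=\frac{(n-1)^2}{n}$.
\end{itemize}
Hence every $n$-element support satisfies $\mu(S)\ge\frac{(n-1)^2}{n}$. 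Since Lemma~\ref{lem:selection-bound} gives the matching upper bound for the best choice, the constant cannot be reduced.

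The main obstacle is only in finding the right instance. The natural first guess, values $0$, several equal middle values, and $1$, yields merely about $(n-1)/n$. One has to notice instead that the low values should be stacked at $0$ and the single middle value tuned to balance the ``drop top'' and ``drop a bottom'' options. Solving $(n-1)c=n-1-c$ gives $c=\frac{n-1}{n}$. Once the instance is fixed, the verification is the routine case check above.
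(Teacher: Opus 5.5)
Your proposal is correct and is the paper's proof: the same instance ($n-1$ houses of value $0$, one of value $\frac{n-1}{n}$, one of value $1$) and the same three-case evaluation of $\mu$ via \eqref{eqn:mu_1}, with matching values in each case. One small point concerns your motivation only, not the verification. The instance puts most of its spread on the gap $d_{n-1}=\frac{n-1}{n}$, which in Lemma~\ref{lem:selection-bound}'s weighted average also carries the maximal coefficient $\frac{(n-1)^2}{n}$, so the spread is not concentrated on the last gap as you suggest.
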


\begin{proof}
Let the common values be
\begin{align*}
    & u(h_1)=\cdots=u(h_{n-1})=0,\\
    & u(h_n)=\frac{n-1}{n},\quad
u(h_{n+1})=1 .
\end{align*}
Fix any $n$-element support $S$. Since there are $n+1$ houses in total,
it omits exactly one house, so we distinguish the following cases.

If $S$ omits $h_{n+1}$, then its values are
$0,\dots,0,(n-1)/n$, with $n-1$ zeros. Hence
\[
\mu(S)=(n-1)\frac{n-1}{n}=\frac{(n-1)^2}{n}.
\]

If $S$ omits one of $h_1,\dots,h_{n-1}$, then its values are
$0,\dots,0,(n-1)/n,1$, with $n-2$ zeros. Hence
\[
\mu(S)=(n-2)+\frac1n=\frac{(n-1)^2}{n}.
\]

Finally, if $S$ omits $h_n$, then its values are
$0,\dots,0,1$, with $n-1$ zeros. Hence
\[
\mu(S)=n-1\ge \frac{(n-1)^2}{n}.
\]
Thus, every $n$-element support has subsidy at least
$\frac{(n-1)^2}{n}$, and this bound is attained when $S$ omits
$h_{n+1}$ or omits any one of $h_1,\dots,h_{n-1}$.
\end{proof}

\section{Omitted Proofs and Additional Results from Section~\ref{sec:learning-augmented}}

\subsection{Proof of Theorem~\ref{thm:score-prediction}}

\scoreprediction*

\begin{proof}
The algorithm chooses $\A^t$ as follows. Among all feasible partial
allocations $\mathbf{b}$ with $\mathrm{supp}(\mathbf{b})=B_t$, it chooses one maximizing utilitarian social welfare $\sum_{i\in N} u_i(b_i)$, while breaking ties by minimizing $|E(\mathbf{b})\triangle E(\A^{t-1})|$,
and then by the fixed deterministic order over feasible allocations.
Afterwards, it defines the set of houses discarded after round $t$ by: $D^t \coloneq H^t\setminus B_t = H^t\setminus \mathrm{supp}(\A^t)$.
The choice is well defined. Since $|B_t|=\min\{t,n\}\le n$, the houses
in $B_t$ can be assigned injectively to agents, so there exists at least
one feasible partial allocation whose support is exactly
$B_t$.

\paragraph{Welfare Maximization.} All utilities used in the maximization are known by
round $t$, and the fixed tie-breaking rule makes the selected allocation
unique. Thus, after
every round $t$, $\A^t$ maximizes utilitarian welfare among all feasible
allocations whose assigned houses are exactly $B_t$.

\paragraph{Validity.} Since $|B_t|=\min\{t,n\}$, we have $|\mathrm{supp}(\A^t)|=\min\{t,n\}$ for every $t$. Hence, the final allocation is complete whenever $m\ge n$, so the algorithm is valid.

\paragraph{Discarded-House Condition.} We next check the discarded-house condition from
Section~\ref{sec:recourse}. That is, fixing a round $t$, we next prove that $D^t\subseteq D^{t'}$ for every $t'>t$. If $t\le n$, then
$|B_t|=\min\{t,n\}=t$, and hence $B_t=H^t$. Therefore $D^t=H^t\setminus B_t=\varnothing$, so the condition holds trivially.

Now suppose that $t>n$, and let $h\in D^t=H^t\setminus B_t$. In particular, $|B_t|=n$.
Since $h\notin B_t$, the set $B_t$ consists of $n$ houses that are ranked
higher than $h$ in the fixed score order. The relative order of already
arrived houses never changes, so these same $n$ houses remain ranked
higher than $h$ at every later round $t'>t$. Therefore, $h\notin B_{t'}$
for every $t'>t$, and hence $D^t\subseteq D^{t'}$ for every $t'>t$.

\paragraph{$n$-Flexibility.} We next prove that the algorithm is $n$-flexible. Fix a round $t$, and let $M\coloneq E(\A^{t-1})$ and $M'\coloneq E(\A^t)$.
View $M$ and $M'$ as matchings in the bipartite graph with agent side $N$
and house side $H^t$. Since $\A^t$ avoids $D^{t-1}$, every house incident to
an edge of $M'$ is either $h_t$ or belongs to $\mathrm{supp}(\A^{t-1})$. Since $M$
and $M'$ are matchings, every connected component of $M\triangle M'$ is an
alternating path or an alternating cycle. Moreover, at most one connected
component of $M\triangle M'$ contains $h_t$.

Now, if $h_t\notin B_t$, then $B_t=B_{t-1}$. Since $\A^{t-1}$ is already
welfare-maximizing among allocations whose assigned-house set is $B_t$,
and its distance from itself is zero, the tie-breaking rule gives
$\A^t=\A^{t-1}$. Thus, the update rejects $h_t$.

Assume now that $h_t\in B_t$. Then $B_t\subseteq B_{t-1}\cup\{h_t\}$,
and if $t>n$, exactly one house in $B_{t-1}$ is not in $B_t$.

We claim that every nonempty connected component of $M\triangle M'$ contains
$h_t$. Suppose not, and let $C$ be a nonempty component not containing $h_t$.
We first show that $C$ is balanced, i.e., $|C\cap M|=|C\cap M'|$.
This property will allow us to switch along $C$ without changing the number
of assigned houses.

We distinguish between two cases.
\begin{description}
    \item[Case 1:] If $t\le n$, then $M$ matches every house in $H^{t-1}$, while $M'$ matches every house
    in $H^t$. Since $C$ avoids $h_t$, every house vertex in $C$ has degree two in
    $M\triangle M'$. Thus, $C$ is either an alternating cycle or an alternating
    path whose two endpoints are agents. In either case, $|C\cap M|=|C\cap M'|$.

    \item[Case 2:] If $t>n$, then both $M$ and $M'$ match every agent. Furthermore, because
    $h_t\in B_t$, then $M'$ uses $h_t$ and has the same
    cardinality as $M$. Since every other house used by $M'$ belongs to
    $\mathrm{supp}(\A^{t-1})$, exactly one house $g\in \mathrm{supp}(\A^{t-1})\setminus \mathrm{supp}(\A^t)$     is not used by $M'$. The
    only house vertices of degree one in $M\triangle M'$ are $h_t$ and
    that unique house $g$, while every agent has degree zero or two. Therefore, $h_t$ and $g$ are the endpoints of the unique alternating path
    component, and every component avoiding $h_t$ is an alternating cycle. As before, we have
    $|C\cap M|=|C\cap M'|.$
\end{description}
In either case, $|C\cap M|=|C\cap M'|$.

Thus, every house vertex in $C$ is matched by both $M$ and
$M'$. Consequently, replacing the $M$-edges of $C$ by the $M'$-edges of
$C$, or conversely replacing the $M'$-edges of $C$ by the $M$-edges
of $C$, preserves the set of assigned houses.

We now compare the welfare contribution of the two matchings on the component $C$. Specifically, let
\[
\Delta(C)\coloneq
\sum_{(i,h)\in C\cap M'} u_i(h)
-
\sum_{(i,h)\in C\cap M} u_i(h).
\]
Since $C$ is a connected component of $M\triangle M'$,
switching along $C$ preserves the matching property: vertices
outside $C$ are unchanged, and the alternating structure of $C$
ensures that no vertex of $C$ is incident to more than one edge
after the switch. Since $C$ is balanced, this switch also preserves
the number of assigned houses.

Suppose, for contradiction, that $\Delta(C)\neq 0$. We distinguish
between two cases.

\begin{description}
    \item[Case 1:]
    Suppose $\Delta(C)>0$. Replacing the $M$-edges of $C$ by the
    $M'$-edges of $C$ in $M$ gives a feasible allocation. Since every
    house vertex in $C$ is matched by both $M$ and $M'$, this switch
    preserves the assigned-house set, which therefore remains exactly
    $B_{t-1}$. Moreover, the switched allocation has welfare larger than
    that of $\A^{t-1}$ by $\Delta(C)>0$. This contradicts the fact that
    $\A^{t-1}$ is a maximum-welfare allocation among all feasible
    allocations whose assigned-house set is $B_{t-1}$.
    
    \item[Case 2:]
    Suppose $\Delta(C)<0$. Replacing the $M'$-edges of $C$ by the
    $M$-edges of $C$ in $M'$ gives a feasible allocation. Again, because
    every house vertex in $C$ is matched by both $M$ and $M'$, the
    assigned-house set remains exactly $B_t$. The switched allocation has
    welfare larger than that of $\A^t$ by $-\Delta(C)>0$. This
    contradicts the fact that $\A^t$ is a maximum-welfare allocation
    among all feasible allocations whose assigned-house set is $B_t$.
\end{description}

Therefore, $\Delta(C)=0$. We now show that the existence of such a component $C$ contradicts the tie-breaking rule. Let $\widehat M \coloneq
(M'\setminus (C\cap M'))\cup (C\cap M)$.
By the same switching argument, $\widehat M$ gives a feasible allocation $\widehat a$ whose assigned-house set is $B_t$. Since
$\Delta(C)=0$, this allocation has the same welfare as $\A^t$, so it is also
a maximum-welfare allocation among all allocations whose support is $B_t$. Moreover,
\[
\widehat M\triangle M
=
(M'\triangle M)\setminus ((C\cap M')\cup (C\cap M)),
\]
because the switch replaces the $M'$-edges of $C$ by the corresponding
$M$-edges. Since $C$ is nonempty, this removes at least one edge from the
symmetric difference, and therefore
\[
|\widehat M\triangle M|<|M'\triangle M|.
\]
This contradicts the tie-breaking rule used to choose $\A^t$. Hence, every
nonempty component of $M\triangle M'$ contains $h_t$.

Thus, $M\triangle M'$ has at most one nonempty component. $M\triangle M'=\varnothing$, then $E(\A^t)=E(\A^{t-1})$, and thus $\A^t=\A^{t-1}$. Since the newly arrived house $h_t$ is not in
$\mathrm{supp}(\A^{t-1})$ in this case, the algorithm rejects $h_t$. 

Otherwise, there is a unique nonempty component, and it contains $h_t$.
Since $h_t$ is incident to no edge of $M$, this component is an alternating
path with $h_t$ as an endpoint. The transition from $\A^{t-1}$ to $\A^t$ is
completely determined by this path: every assignment change occurs along the
path, while all agents outside the path keep their previous houses. Read this
path starting from $h_t$, and let $i_1,\dots,i_\ell$ be the agents
encountered along it. The agents $i_1,\dots,i_\ell$ are distinct, and
$\ell\le n$. The first edge of the path is $(i_1,h_t)\in M'$, so $a^t_{i_1}=h_t$.
For every $r=2,\dots,\ell$, the path enters $i_r$ through the house
previously held by $i_{r-1}$, so $a^t_{i_r}=a^{t-1}_{i_{r-1}}$.
Every agent outside $\{i_1,\dots,i_\ell\}$ is incident to no edge of
$M\triangle M'$, and therefore
\[
a^t_j=a^{t-1}_j
\quad\text{for every }j\notin\{i_1,\dots,i_\ell\}.
\]
Furthermore, for every $r<\ell$, $a^{t-1}_{i_r}\ne \varnothing$,
since the path continues through the house previously held by $i_r$.
If the path ends at the agent $i_\ell$, then $i_\ell$ is not incident to an
$M$-edge, so $a^{t-1}_{i_\ell}=\varnothing$. Otherwise, the path ends at the
house $a^{t-1}_{i_\ell}$; this house is not incident to any edge of $M'$;
otherwise, the path would continue. Since $\mathrm{supp}(\A^t)=B_t$, we have
$a^{t-1}_{i_\ell}\notin B_t$. Hence, $a^{t-1}_{i_\ell}\in D^t=H^t\setminus B_t$,
so it is discarded. By Definition~\ref{def:chain-update}, this is a chain
update of length $\ell\le n$. Hence the algorithm is $n$-flexible.

\paragraph{Completing the proof of Theorem~\ref{thm:score-prediction}.} At the final round, $|B_m|=n$. By construction, $\A^m$ is complete and
$E(\A^m)$ is a maximum-weight perfect matching between $N$ and $B_m$,
where edge $(i,h)$ has weight $u_i(h)$. Lemma~\ref{lem:fixed-support} therefore gives $\mathrm{sub}(\A^m)=\mu(B_m)$.
Therefore, by the definition of $\eta_{\mathrm{sc}}(p;\I)$,
\begin{equation}
    \label{eq:sub scores}
    \mathrm{sub}(\A^m)
    =
    \mu(B_m)
    =
    \mathrm{OPT}(\I)+\eta_{\mathrm{sc}}(p;\I).
\end{equation}

Finally, Lemma~\ref{lem:fixed-support} implies that the final allocation $\A^m$ is
envy-freeable. Moreover, $\A^m$ is complete because $|B_m|=n$. If, in addition, $u_i(h)\in[0,1]$ for every $i\in N$
and $h\in H$, then Lemma~\ref{lem:absolute-subsidy-bound} gives $\mathrm{sub}(\A^m)\le n-1$.
Combining this bound with the equality in \eqref{eq:sub scores} gives us
\[
\mathrm{sub}(\A^m)
\le
\min\{\mathrm{OPT}(\I)+\eta_{\mathrm{sc}}(p;\I),\,n-1\}. \qedhere
\]
\end{proof}

\subsection{Small Changes to Score Predictions}\label{app:score-perturbations}

Let $p^*$ be a correct score prediction, and let
$B_m^*$ be its final top $n$ houses. Define $\Gamma(p^*)
:=
\min_{h\in B_m^*, g\in H\setminus B_m^*}
(p^*(h)-p^*(g))$,
with the convention that
$\Gamma(p^*)=+\infty$ when $m=n$.
If $\Gamma(p^*)>0$ and $\max_{h\in H}
\lvert p(h)-p^*(h)\rvert
<
\frac{\Gamma(p^*)}{2}$,
then $p$ and $p^*$ have the same final top $n$ houses.
Consequently, the score-keeping algorithm returns subsidy
$\mathrm{OPT}(\I)$. When $\Gamma(p^*)=0$, this argument gives
no positive error bound.

\subsection{Cutoff Predictions for Common-Quality Utilities}
\label{app:cutoff}
Score predictions are general, but they require a full ranking of the houses.
For common-quality utilities, the offline support-selection problem is
one-dimensional: after sorting houses by quality, an optimal support can be
chosen as a consecutive block. This suggests replacing the full ranking by a
single cutoff prediction that identifies where this block should end.

Formally, an instance has \textit{common-quality utilities} if each house $h$ has a quality $q_h\ge 0$ and each agent $i$ has parameters $\alpha_i,\beta_i\ge 0$ such that $u_i(h)=\beta_i+\alpha_i q_h$.
Relabel the agents so that $\alpha_1\le \alpha_2\le \cdots \le \alpha_n$.
This class includes identical utilities and also allows agents to have different sensitivities to the same quality scale. In this subsection, the quality $q_h$ of an arriving house is observed when $h$ arrives, in addition to the utilities $u_i(h)$ for all $i\in N$.

The next lemma formalizes the one-dimensional structure: among common-quality
utilities, the minimum-subsidy support can always be chosen as a consecutive
block in the quality order.

\begin{restatable}{lemma}{commonqualitysupport}
\label{lem:common-quality-support}
Consider a common-quality instance and an $n$-house set $S$ with qualities $q_1\le q_2\le \cdots \le q_n$.
Then, $\mu(S)=\sum_{r=1}^{n-1} r\alpha_r(q_{r+1}-q_r)$.
Moreover, after sorting all houses in $H$ by nondecreasing quality,
breaking ties in any fixed way, there is an $n$-element set
minimizing $\mu(S)$ over all $n$-element subsets $S\subseteq H$
that consists of $n$ consecutive houses in this order.
\end{restatable}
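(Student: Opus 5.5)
We first compute $\mu(S)$, then use the resulting formula to show that a consecutive block is optimal.

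\emph{Step 1: the optimal matching on $S$.} Fix $S$ with qualities $q_1\le\cdots\le q_n$, and let $h_r$ be the house of quality $q_r$. The $\beta_i$ terms are the same in every perfect matching, so welfare is $\sum_i \alpha_i q_{\pi(i)}$ up to a constant. By the rearrangement inequality, the assignment $a_r=h_r$ is welfare-maximizing. This is the same exchange argument as in the proof of Proposition~\ref{prop:stronger-odd-lower-bound}, with weak inequalities since ties are allowed. By Lemma~\ref{lem:fixed-support}, this matching attains $\mu(S)$.

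\emph{Step 2: the minimum subsidy.} The envy weights are $w(i,j)=\alpha_i(q_j-q_i)$. The minimum subsidy is obtained from the longest-path construction used in Lemma~\ref{lem:absolute-subsidy-bound}: set $s_i=L_i$, the maximum weight of a path starting at $i$. This vector is minimal, because every envy-eliminating vector satisfies $s_i\ge w(P)+s_{\text{end}}\ge w(P)$ along any path $P$, and $L$ itself is envy-eliminating. It therefore suffices to show that
\[
L_r=\sum_{k=r}^{n-1}\alpha_k(q_{k+1}-q_k),
\]
which is the weight of the path $r\to r+1\to\cdots\to n$.

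\emph{Step 3: showing the step path is longest (main obstacle).} Take any path from $r$, and first remove downward moves. A downward move $i\to j$ with $j<i$ has weight $\alpha_i(q_j-q_i)\le 0$, and skipping it can only help, so it suffices to argue via the triangle-type inequality
\[
\alpha_i(q_j-q_i)+\alpha_j(q_k-q_j)\le \alpha_i(q_k-q_i)
\]
when convenient. Where that is awkward, we fall back on a potential argument. Define $\phi_i=\sum_{k\ge i}\alpha_k(q_{k+1}-q_k)$; it suffices to check $\phi_i\ge w(i,j)+\phi_j$ for every edge $i\to j$.
\begin{itemize}
\item For $j>i$, the inequality reads $\sum_{k=i}^{j-1}\alpha_k(q_{k+1}-q_k)\ge \alpha_i(q_j-q_i)$, which holds because $\alpha_k\ge\alpha_i$ for $k\ge i$.
\item For $j<i$, it reads $\alpha_i(q_i-q_j)\ge\sum_{k=j}^{i-1}\alpha_k(q_{k+1}-q_k)$, which holds because $\alpha_k\le\alpha_i$ for $k<i$.
\end{itemize}
Hence $\phi$ is feasible, so $\phi_i\ge L_i$, and the step path shows $L_i\ge\phi_i$. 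Summing, and noting $\phi_n=0$,
\[
\mu(S)=\sum_r\phi_r=\sum_{k=1}^{n-1}k\,\alpha_k(q_{k+1}-q_k).
\]

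\emph{Step 4: consecutive blocks are optimal.} Let $S$ be optimal and suppose it is not consecutive in the fixed sorted order of $H$. Then some house $g\notin S$ lies strictly between $\min S$ and $\max S$. Let $x$ be the house of $S$ with the largest quality, and let $S'=S\setminus\{x\}\cup\{g\}$. In $S'$ the sorted sequence changes only above the position where $g$ is inserted: each later quality $q_{r}$ is replaced by $q_{r-1}$ or by $q_g$, so every gap $q_{r+1}-q_r$ is replaced by a sub-gap, and the gaps at indices $r$ shift up by one. We will bound $\mu(S')\le\mu(S)$ termwise. The difficulty is that the coefficients $r\alpha_r$ grow with $r$, so shifting a gap up one index raises its coefficient. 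Because of this, we plan instead to remove the lowest house $y$ and add $g$: this merges the bottom gaps and shifts the indices of the lower gaps down. Concretely, all qualities $\le q_g$ move up one position, and each gap below $g$ acquires an index one smaller (with the first gap dropped), while the gap straddling $g$ is split. The new weight on each gap is then at most the old one, since $(r-1)\alpha_{r-1}\le r\alpha_r$. This gives $\mu(S')\le\mu(S)$. Iterating the exchange, which terminates because $\sum$ of positions changes monotonically, yields an optimal consecutive block.
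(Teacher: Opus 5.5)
Your proof is correct, and Steps 1--3 are the paper's argument in different language. Your potential $\phi_i=\sum_{k\ge i}\alpha_k(q_{k+1}-q_k)$ is exactly the paper's explicit subsidy vector, and your two inequalities for $j>i$ and $j<i$ are the ones the paper checks to show it is envy-eliminating. The step path $r\to r+1\to\cdots\to n$ giving $L_r\ge\phi_r$ is the paper's telescoping of the adjacent constraints $s'_r-s'_{r+1}\ge\alpha_r(q_{r+1}-q_r)$ together with $s'_n\ge 0$. So the detour through the longest-path vector $L$ is unnecessary, and the remarks about deleting downward moves can be dropped. That triangle inequality has defect $(\alpha_i-\alpha_j)(q_j-q_k)$ and can fail when $q_k<q_j$, although you never rely on it.

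Step 4 takes a genuinely different route. The paper rewrites $\mu$ by summation by parts as $c_{n-1}x_n+\sum_{r=2}^{n-1}(c_{r-1}-c_r)x_r-c_1x_1$, with $c_r=r\alpha_r$ nondecreasing. Hence, for a fixed top quality, $\mu$ is nonincreasing in the other sorted qualities. It then compares any set with largest index $j$ directly to the consecutive block ending at $j$, via componentwise domination. This is one-shot, needs no termination argument, and supplies the identity reused in Theorem~\ref{thm:predicted-cutoff}.

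You instead iterate a local exchange: drop the lowest house and add a skipped house. You verify gap by gap, using $c_{r-1}\le c_r$, that $\mu$ does not increase, and you obtain termination from the strictly increasing sum of positions. Both arguments rest on the same monotonicity of $r\alpha_r$; each of your exchanges is a single componentwise-dominating step with the maximum fixed. Because your exchange never moves the top house, it also recovers the paper's sharper by-product: the block ending at a given top house is optimal among all sets with that top house. Your initial idea of dropping the maximum was rightly abandoned, since that exchange can increase $\mu$.
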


\begin{proof}
If $n=1$, then every complete allocation gives the only
agent one house. Hence, $\mu(S)=0$, which is the empty sum, and the final statement is
immediate. Thus, assume $n\ge 2$.

Denote $S=\{g_1,\dots,g_n\}$ so that $q_{g_1}=q_1\le q_{g_2}=q_2\le \cdots \le q_{g_n}=q_n$.

By Lemma~\ref{lem:fixed-support}, $\mu(S)$ is obtained by any maximum-welfare
complete allocation using exactly the houses in $S$, together
with a minimum envy-eliminating subsidy vector for that allocation.
For any complete allocation using exactly the houses in $S$, the
terms $\beta_i$ contribute $\sum_{i\in N}\beta_i$ to the total
welfare. Therefore, welfare maximization is equivalent to maximizing $\sum_{i\in N}\alpha_i q_{a_i}$.

Among all complete allocations using exactly the houses in $S$, choose an allocation $\A$
maximizing this sum. If $i < j$ and $q_{a_i} > q_{a_j}$, then swapping
the two assigned houses changes the sum by
\[
\alpha_i q_{a_j}+\alpha_j q_{a_i}-\alpha_i q_{a_i}-\alpha_j q_{a_j}
= (\alpha_j-\alpha_i)(q_{a_i}-q_{a_j}) \ge 0.
\]
Therefore, by repeatedly swapping such pairs, we obtain another
maximum-welfare allocation in which the assigned qualities are
nondecreasing with the agent index. If equal-quality houses are not in
the order $g_1,\dots,g_n$, swap only those equal-quality houses; this
does not change any agent's utility. Hence, we may choose a
maximum-welfare allocation $\A$ such that $a_r = g_r$ for every $r\in N$.
It remains to compute the minimum total subsidy for this allocation. Let $\Delta^r\coloneq q_{r+1}-q_r$ for every $r\in[n-1]$.

Define $s_n\coloneq0$, and $s_i\coloneq\sum_{r=i}^{n-1}\alpha_r\Delta^r$ for every $i < n$.
These subsidies are non-negative. We next show that they eliminate envy.
For $i=j$, there is nothing to prove. If $i<j$, then
\begin{equation*}
    s_i-s_j =
\sum_{r=i}^{j-1}\alpha_r\Delta^r
\ge \alpha_i\sum_{r=i}^{j-1}\Delta^r =
\alpha_i(q_j-q_i)
=
u_i(g_j)-u_i(g_i).
\end{equation*}
If $i>j$, then
\begin{equation*}
    s_i-s_j=
-\sum_{r=j}^{i-1}\alpha_r\Delta^r
\ge
-\alpha_i\sum_{r=j}^{i-1}\Delta^r =
\alpha_i(q_j-q_i)
=
u_i(g_j)-u_i(g_i),
\end{equation*}
where the inequality uses $\alpha_r\le \alpha_i$ for all $r<i$.
Hence, $s_i-s_j\ge u_i(a_j)-u_i(a_i)$ for all $i,j\in N$, so
$(\A,\mathbf{s})$ is envy-free by Definition~\ref{def:envy-freeness-with-subsidy}.

Now, let $s'$ be any envy-eliminating subsidy vector for the same
allocation $\A$. For every $r\in[n-1]$, agent $r$ must not envy
agent $r+1$, so
\[
s'_r-s'_{r+1}
\ge
u_r(g_{r+1})-u_r(g_r)
=
\alpha_r(q_{r+1}-q_r)
=
\alpha_r\Delta^r .
\]
Since $s'_n\ge 0$, for every $i<n$,
\[
s'_i
=
s'_n+\sum_{r=i}^{n-1}(s'_r-s'_{r+1})
\ge
\sum_{r=i}^{n-1}\alpha_r\Delta^r .
\]
Therefore
\[
\sum_{i=1}^n s'_i
\ge
\sum_{i=1}^{n-1}\sum_{r=i}^{n-1}\alpha_r\Delta^r
=
\sum_{r=1}^{n-1} r\alpha_r\Delta^r .
\]
For the subsidy vector constructed above,
\[
\sum_{i=1}^n s_i
= \sum_{i=1}^{n-1}\sum_{r=i}^{n-1}\alpha_r\Delta^r
= \sum_{r=1}^{n-1} r\alpha_r\Delta^r .
\]
Thus, this is the minimum total subsidy for the maximum-welfare
allocation $\A$. Since $\A$ uses exactly the houses in $S$, Lemma~\ref{lem:fixed-support} gives us $\mu(S) =
\sum_{r=1}^{n-1} r\alpha_r(q_{r+1}-q_r)$.

It remains to prove the consecutive-house statement. Sort all houses in $H$ by nondecreasing quality and let $\bar h_1,\dots,\bar h_m$
be this order, where $Q_\ell \coloneq q_{\bar h_\ell}$ denotes the quality of $\bar h_\ell$. That is, their qualities satisfy $Q_1\le Q_2\le \cdots \le Q_m$,
breaking ties in any fixed way. For $r\in[n-1]$, let $c_r\coloneq r\alpha_r$.

Since $\alpha_1\le\cdots\le \alpha_n$ and all $\alpha_i\ge0$, the
sequence $c_r=r\alpha_r$ is nondecreasing: indeed,
\[
c_{r+1}-c_r=(r+1)\alpha_{r+1}-r\alpha_r
\ge (r+1)\alpha_r-r\alpha_r=\alpha_r\ge0.
\]
Hence, $c_1\le c_2\le\cdots\le c_{n-1}$.

For any chosen qualities $x_1\le\cdots\le x_n$, the formula already proved
gives
\begin{equation}
    \label{eq:expression}
    \sum_{r=1}^{n-1}c_r(x_{r+1}-x_r) =
    c_{n-1}x_n+\sum_{r=2}^{n-1}(c_{r-1}-c_r)x_r-c_1x_1 .
\end{equation}
The coefficient of $x_n$ is $c_{n-1}\ge0$, while the coefficient of $x_1$ is
$-c_1\le0$ and, for each $r=2,\dots,n-1$, the coefficient of $x_r$ is
$c_{r-1}-c_r\le0$. Namely, the coefficients of
$x_1,\dots,x_{n-1}$ are nonpositive. Therefore, with $x_n$ fixed, the expression in \eqref{eq:expression} is
nonincreasing in each of $x_1,\dots,x_{n-1}$.

Fix $j\in\{n,\dots,m\}$, and consider any $n$-house set whose
largest index in the above order is $j$. Let its indices be $k_1<k_2<\cdots<k_n=j$.
Then $k_r\le j-n+r$ for every $r\in[n]$, and hence
\begin{equation}
    \label{eq:q monotone}
    Q_{k_r}\le Q_{j-n+r}
\quad \text{for every } r\in[n].
\end{equation}
Since the expression in \eqref{eq:expression} is nonincreasing in $x_1,\dots,x_{n-1}$ with $x_n$
fixed and due to \eqref{eq:q monotone}, replacing $Q_{k_1},\dots,Q_{k_{n-1}}$ by $Q_{j-n+1},\dots,Q_{j-1}$
cannot increase the value of $\mu$. Thus, among all $n$-house sets whose
largest index is $j$, the consecutive set $\{\bar h_{j-n+1},\dots,\bar h_j\}$
has minimum $\mu$.

Now, take any $n$-house set minimizing $\mu$, and let $j$ be the
largest index of a house in that set. The consecutive set $\{\bar h_{j-n+1},\dots,\bar h_j\}$ has $\mu$-value no larger, so it also minimizes $\mu$. This proves
the final statement.
\end{proof}

Lemma~\ref{lem:common-quality-support} gives an offline description
of an optimal $n$-house set in this class: sort the houses by quality and
check the consecutive $n$-house sets. In particular, a consecutive optimal support is
determined by its largest quality. This motivates a single cutoff prediction,
which aims to identify the largest quality appearing in an optimal
minimum-subsidy support.

To formalize cutoff predictions, we first fix a deterministic tie-breaking rule $\prec$, independent of the round. We use $\prec$ to break ties between two houses that have either the same quality or the
same score. For a predicted cutoff $\tau \ge 0$, let $C(\tau)$ be the support predicted by $\tau$, i.e., the set of the $n$ highest-quality houses among $\{h\in H:q_h\le \tau\}$, whenever this set
contains at least $n$ houses, with ties broken according to $\prec$. If this
set contains fewer than $n$ houses, then $C(\tau)$ is not defined.

A cutoff $\tau^*$ is correct if $C(\tau^*)$ exists,
$\mu(C(\tau^*))=\mathrm{OPT}(\I)$, and the largest quality among the houses in
$C(\tau^*)$ is $\tau^*$. Given a predicted cutoff $\tau$, define the induced score of each house $h$ as
\[
p_\tau(h)=
\begin{cases}
q_h, & q_h\le \tau,\\
-1, & q_h>\tau.
\end{cases}
\]
The dummy score $-1$ ensures that every house above the cutoff ranks below
every house at or below the cutoff. Thus, the top $n$ houses under
the score order induced by $p_\tau$ are exactly the houses in $C(\tau)$. 

Now, consider the cutoff algorithm, which is the score-keeping algorithm from the score-prediction subsection run with these
scores. The next theorem shows that, if the cutoff correctly identifies the
largest quality appearing in an optimal support, then the cutoff
algorithm recovers the offline optimum. Moreover, overestimating a
correct cutoff increases the subsidy by at most a quantity proportional
to the prediction error.

\begin{theorem}
\label{thm:predicted-cutoff}
Assume $n\ge2$, common-quality utilities, and
$u_i(h)\in[0,1]$ for every agent $i$ and house $h$.
For every cutoff $\tau$, the cutoff algorithm is deterministic,
valid, and $n$-flexible, and its final allocation $\A^m$ satisfies $\mathrm{sub}(\A^m)\le n-1$.
If $\tau$ is correct, then $\mathrm{sub}(\A^m)=\mathrm{OPT}(\I)$.
Moreover, if $\tau^*$ is correct and
$\tau\ge\tau^*$, then
\[
\mathrm{sub}(\A^m)
\le
\min\left\{
\mathrm{OPT}(\I)
 +(n-1)\alpha_{n-1}(\tau-\tau^*),
\, n-1
\right\}.
\]
\end{theorem}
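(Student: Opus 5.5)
The plan is to reduce almost everything to Theorem~\ref{thm:score-prediction}, applied with the induced scores $p_\tau$, and then compare $\mu(C(\tau))$ with $\mu(C(\tau^*))$ using the closed form of Lemma~\ref{lem:common-quality-support}.

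\textbf{Reduction to score predictions.} The cutoff algorithm is, by definition, the score-keeping algorithm run with $p_\tau$. Theorem~\ref{thm:score-prediction} therefore gives determinism, validity, $n$-flexibility, $\mathrm{sub}(\A^m)=\mu(B_m)$, and $\mathrm{sub}(\A^m)\le n-1$ under normalized utilities. I first identify $B_m$.

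If $C(\tau)$ exists, the dummy score $-1$ puts every house above the cutoff below every house at or below it. Hence $B_m=C(\tau)$, once I check that the tie-breaking $\prec$ used in $C(\tau)$ agrees with the score order's tie-breaking. Scores $-1$ and $q_h\ge 0$ never tie across the cutoff, so this check is routine.

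If $C(\tau)$ does not exist, $B_m$ is still some $n$-set. The bound $n-1$ still holds, and the other claims do not apply: a correct $\tau$, or any $\tau\ge\tau^*$, ensures that $C(\tau)$ exists, since $\{h:q_h\le\tau\}\supseteq C(\tau^*)$.

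\textbf{Correct cutoffs.} If $\tau$ is correct, then $\mu(B_m)=\mu(C(\tau))=\mathrm{OPT}(\I)$ by definition.

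\textbf{Overestimates.} Let $\tau\ge\tau^*$ with $\tau^*$ correct. Write $C(\tau)$ with sorted qualities $x_1\le\dots\le x_n$ and $C(\tau^*)$ with sorted qualities $y_1\le\dots\le y_n$, where $y_n=\tau^*$. By Lemma~\ref{lem:common-quality-support} and the telescoped form \eqref{eq:expression},
\[
\mu(S)=c_{n-1}x_n+\sum_{r=2}^{n-1}(c_{r-1}-c_r)x_r-c_1x_1,
\]
whose coefficients on $x_1,\dots,x_{n-1}$ are nonpositive.

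The key monotonicity claim is $x_r\ge y_r$ for every $r$. This holds because $C(\tau)$ consists of the top $n$ houses in a superset of $\{h:q_h\le\tau^*\}$. Consequently its $r$-th largest quality is at least the $r$-th largest quality of $C(\tau^*)$, which is the top $n$ of the smaller set. I must also handle ties under $\prec$: I compare by quality values, not house identities.

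Using this claim,
\[
\mu(C(\tau))-\mu(C(\tau^*))\le c_{n-1}(x_n-y_n)+\sum_{r<n}(\text{nonpositive coefficient})(x_r-y_r)\le c_{n-1}(x_n-\tau^*).
\]
Since $x_n\le\tau$ and $c_{n-1}=(n-1)\alpha_{n-1}$, this gives the additive term $(n-1)\alpha_{n-1}(\tau-\tau^*)$. Taking the minimum with $n-1$ finishes the proof.

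\textbf{Main obstacle.} The main obstacle is the coordinatewise domination $x_r\ge y_r$ together with the tie-breaking bookkeeping. I would prove it by noting that for any threshold value $v$, the number of houses in $C(\tau)$ with quality $\ge v$ is at least that number in $C(\tau^*)$. This follows from a top-$n$ selection over nested candidate sets.
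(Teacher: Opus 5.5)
Your proposal is correct and follows essentially the same route as the paper: reduce to Theorem~\ref{thm:score-prediction} with scores $p_\tau$, identify $B_m=C(\tau)$ via the dummy score $-1$, and use correctness of $\tau$ to get $\mathrm{OPT}(\I)$. For overestimates, you combine the coordinatewise domination $q_r\ge q_r^*$ (proved by the same top-$n$ counting argument) with the nonpositive coefficients in the telescoped formula \eqref{eq:expression} to bound the loss by $c_{n-1}(\tau-\tau^*)=(n-1)\alpha_{n-1}(\tau-\tau^*)$.
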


\begin{proof}
The cutoff algorithm is the score-keeping algorithm from Theorem~\ref{thm:score-prediction}
run with the scores
\[
p_\tau(h)=
\begin{cases}
q_h, & q_h\le \tau,\\
-1, & q_h>\tau.
\end{cases}
\]
Therefore, by Theorem~\ref{thm:score-prediction}, it is deterministic, valid, and
$n$-flexible. Since $u_i(h) \in [0,1]$ for every $i \in N$ and $h \in H$, Theorem~\ref{thm:score-prediction} also gives $\mathrm{sub}(\A^m) \le n-1$.

Suppose first that $\tau$ is correct. Then $C(\tau)$ exists, $\mu(C(\tau))=\OPT(\I)$,
and the largest quality among the houses in $C(\tau)$ is $\tau$.
Since $q_h\ge 0$ for every house $h$, every house with $q_h\le \tau$
has score $p_\tau(h)=q_h\ge 0$, while every house with $q_h>\tau$
has score $-1$. Among the houses with $q_h \le \tau$, the score order is exactly the quality order, with ties broken by the same rule $\prec$ used in the definition of $C(\tau)$. Since $C(\tau)$ exists, at least $n$ houses have quality at most $\tau$, and all such houses have score at least $0$, while every house with quality greater than $\tau$ has score $-1$. Therefore, the top $n$ houses under the score order induced by
$p_\tau$ are exactly the houses in $C(\tau)$. Hence, $B_m=C(\tau)$.
By Theorem~\ref{thm:score-prediction},
\[
\mathrm{sub}(\A^m)=\mu(B_m)=\mu(C(\tau))=\OPT(\I).
\]

Now suppose that $\tau^*$ is correct and $\tau\ge \tau^*$. Let $C^*\coloneq C(\tau^*)$ and $B\coloneq B_m$.
Since the largest quality in $C^*$ is $\tau^*$, we have $C^*\subseteq \{h\in H:q_h\le \tau\}$.
Hence, $C(\tau)$ exists. By the same score-order argument as above,
$B=C(\tau)$.

Write the qualities of the houses in $C^*$ and $B$ as $q_1^*\le \cdots \le q_n^*$ and $q_1\le \cdots \le q_n$,
respectively. Since $B=C(\tau)$ consists of the $n$ highest-quality houses among
$\{h \in H : q_h \le \tau\}$, and $C^*$ is an $n$-element subset of this set, we have $q_r \ge q^*_r$ for every $r \in [n]$.
Indeed, if $q_r < q^*_r$ for some $r$, then the set
$\{h \in H : q_h \le \tau\}$ contains at least $n-r+1$ houses with quality at least $q^*_r$, namely the houses of $C^*$ with qualities
$q^*_r,\dots,q^*_n$. But $B$ contains at most $n-r$ houses with quality at least $q^*_r$, since $q_1,\dots,q_r < q^*_r$. This contradicts the choice of $B=C(\tau)$ as the $n$ highest-quality houses in that set. Also, $q^*_n=\tau^*$ and $q_n \le \tau$.

For $r\in [n-1]$, let $c_r\coloneq r\alpha_r$. Since
$\alpha_1\le \cdots \le \alpha_n$ and the $\alpha_i$'s are
non-negative, we have $c_1\le c_2\le \cdots \le c_{n-1}$.
By Lemma~\ref{lem:common-quality-support}, for any sorted qualities $z_1\le \cdots \le z_n$,
\[
\sum_{r=1}^{n-1} c_r(z_{r+1}-z_r)
=
c_{n-1}z_n+\sum_{r=2}^{n-1}(c_{r-1}-c_r)z_r-c_1z_1 .
\]
Therefore,
\begin{align*}
\mu(B)-\mu(C^*)
&=
c_{n-1}(q_n-q_n^*)
+\sum_{r=2}^{n-1}(c_{r-1}-c_r)(q_r-q_r^*)
-c_1(q_1-q_1^*)\\
&\le c_{n-1}(q_n-q_n^*) \le (n-1)\alpha_{n-1}(\tau-\tau^*).
\end{align*}
The first inequality uses $q_r\ge q_r^*$ for every $r\in[n]$, together with $-c_1\le0$ and $c_{r-1}-c_r\le0$ for all $r=2,\dots,n-1$,
from which we infer that all coefficients of $q_1-q_1^*,\,q_2-q_2^*,\,\dots,\,q_{n-1}-q_{n-1}^*$
are nonpositive.

Since $\tau^*$ is correct, $\mu(C^*)=\OPT(\I)$. By
Theorem~\ref{thm:score-prediction},
\[
\mathrm{sub}(\A^m)=\mu(B)
\le \mathrm{OPT}(\I)+(n-1)\alpha_{n-1}(\tau-\tau^*).
\]
Combining this with the normalized bound
$\mathrm{sub}(\A^m)\le n-1$ gives
\[
\mathrm{sub}(\A^m)
\le
\min\{\mathrm{OPT}(\I)+(n-1)\alpha_{n-1}(\tau-\tau^*),\,n-1\}.
\]
\end{proof}

If an upper bound on the cutoff error is available, the one-sided condition $\tau\ge \tau^*$ can be enforced by shifting the prediction upward. Given a predicted cutoff $\widehat{\tau}$ and a number $\varepsilon\ge 0$, run the cutoff algorithm with $\tau\coloneq\widehat{\tau}+\varepsilon$.
If $\tau^*$ is correct and $|\widehat{\tau}-\tau^*|\le \varepsilon$, then $\tau^*\le \tau\le \tau^*+2\varepsilon$.
Thus, Theorem~\ref{thm:predicted-cutoff} gives
\begin{equation}    
    \label{eq:two-sided}
    \mathrm{sub}(\A^m)
    \le
    \min\{\mathrm{OPT}(\I)+2(n-1)\alpha_{n-1}\varepsilon,\, n-1\}.
\end{equation}
In particular, when $\varepsilon=0$, the shifted rule is exact. We next give a matching example showing that the
linear dependence on the cutoff error, $(n-1)\alpha_{n-1}\varepsilon$, is tight.

\begin{proposition}
\label{prop:cutoff-error-tight}
Fix $n\ge 2$, parameters $0\le \alpha_1\le\cdots\le \alpha_n$,
and $\varepsilon>0$ with $\alpha_n\varepsilon\le 1$. For the cutoff
algorithm in Theorem~\ref{thm:predicted-cutoff}, there is a
normalized common-quality instance with correct cutoff $\tau^*=0$
such that, when the prediction is $\tau=\varepsilon$, $\mathrm{sub}(\A^m)-\OPT(\I)=(n-1)\alpha_{n-1}\varepsilon$.
\end{proposition}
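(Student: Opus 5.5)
We construct an explicit instance with $m=n+1$ houses and read off both subsidies from the formula of Lemma~\ref{lem:common-quality-support}. Let there be $n$ houses of quality $0$ and one house $h^\star$ of quality $\varepsilon$. Set $\beta_i=0$, keeping the given $\alpha_i$. Then every utility lies in $[0,\alpha_n\varepsilon]\subseteq[0,1]$, so the instance is normalized. The arrival order is irrelevant to the final support, so we fix any order.

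First we check that $\tau^*=0$ is a correct cutoff. The set $\{h:q_h\le 0\}$ consists of the $n$ zero-quality houses, so $C(0)$ is exactly this set, and its largest quality is $0=\tau^*$. By Lemma~\ref{lem:common-quality-support}, all consecutive differences vanish, so $\mu(C(0))=0$. Since subsidies are nonnegative, $\OPT(\I)=0$. Hence $\tau^*=0$ is correct.

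Next we identify the support chosen under the prediction $\tau=\varepsilon$. Every house has quality at most $\varepsilon$, so every house receives its own quality as score. The top $n$ houses under $p_\tau$ are then $h^\star$ together with $n-1$ of the zero-quality houses, with ties broken by $\prec$. By Theorem~\ref{thm:score-prediction}, $\mathrm{sub}(\A^m)=\mu(B_m)$. Here $B_m$ has sorted qualities $0,\dots,0,\varepsilon$, with $n-1$ zeros. In the formula of Lemma~\ref{lem:common-quality-support}, only the term $r=n-1$ is nonzero, which gives
\[
\mu(B_m)=(n-1)\alpha_{n-1}\varepsilon .
\]
Subtracting $\OPT(\I)=0$ yields exactly the claimed gap.

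The only delicate point is confirming that the score-keeping algorithm's final support is indeed $B_m$ regardless of arrival order and tie-breaking. This holds because of the discarded-house argument in the proof of Theorem~\ref{thm:score-prediction}, and because ties among the zero-quality houses do not affect $\mu$. We also note that the bound of Theorem~\ref{thm:predicted-cutoff} with $\tau-\tau^*=\varepsilon$ is attained, so the example is tight.
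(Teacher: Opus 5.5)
Your proposal is correct and follows the paper's proof essentially verbatim: the same instance ($\beta_i=0$, $n$ zero-quality houses and one house of quality $\varepsilon$), the same check that $\tau^*=0$ is correct with $\mathrm{OPT}(\I)=0$, and the same evaluation of $\mu(B_m)=(n-1)\alpha_{n-1}\varepsilon$ via Lemma~\ref{lem:common-quality-support}. The ``delicate point'' you flag needs no extra argument, because the score-keeping algorithm has $\mathrm{supp}(\A^m)=B_m$ by construction.
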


\begin{proof}
Set $\beta_i=0$ for every $i$. Let there be $n$ houses of quality
$0$ and one house of quality $\varepsilon$. The instance is normalized
because $\alpha_i q_h \le \alpha_n\varepsilon \le 1$ for every $i$
and $h$.

The cutoff $\tau^*=0$ is correct: selecting the $n$ zero-quality
houses gives subsidy $0$, so $\mathrm{OPT}(\I)=0$. Run the cutoff
algorithm with prediction $\tau=\varepsilon$. Then all $n+1$ houses have quality at most $\tau$, so the cutoff algorithm
selects the $n$ highest-quality houses among them, namely the
$\varepsilon$-quality house and $n-1$ zero-quality houses. By
Lemma~\ref{lem:common-quality-support}, the resulting subsidy is $(n-1)\alpha_{n-1}\varepsilon$.
Thus, the additive loss is exactly
$(n-1)\alpha_{n-1}\varepsilon$.
\end{proof}

The upward shift is necessary for a two-sided error guarantee. Appendix \ref{sec:underestimate-costly} shows that, without it, even a very small underestimate of a correct cutoff can lead to additive loss arbitrarily close to $n-1$.

\subsection{Proof of Theorem~\ref{thm:predicted-support-recourse-tight}}

\predictedsupportrecoursetight*

\begin{proof}
Fix $n \ge 2$ and $k<n$. Suppose, for contradiction, that there is a
deterministic valid $k$-flexible online algorithm $A$, which is given the
score $p(h_t)$ when $h_t$ arrives, such that $\mathrm{sub}(\A^m)=\mu(B_m)$ for every input sequence with $m=n+1$ and every score prediction.

Choose $\delta>0$ such that $\delta \sum_{r=1}^{n-1} r^2 < 1$.
Define $x_1\coloneq0$ and $x_t\coloneq1+(t-2)\delta$ for $t=2,\dots,n+1$.
The adversary first releases houses $h_1,\dots,h_n$. For every $i\in N$
and every $t\in[n]$, the adversary sets $u_i(h_t)=i x_t$.
Equivalently, this is a common-quality instance with
$q_{h_t}=x_t$, $\alpha_i=i$, and $\beta_i=0$. The agents are
already ordered so that $\alpha_1\le \cdots \le \alpha_n$, as required
in Lemma~\ref{lem:common-quality-support}. Give $h_1$ score $0$, and give each of
$h_2,\dots,h_n$ score $1$.

Since $A$ is valid and the algorithm cannot know whether the sequence
stops after round $n$, Section~\ref{sec:recourse} gives
\[
\left|\mathrm{supp}(\A^n)\right|= \left|\{a_i^n : i\in N,\ a_i^n\neq \varnothing\}\right|=n.
\]
Only the houses $h_1,\dots,h_n$ have arrived, and the allocation is
feasible, so all of them are assigned after round $n$. For the $n$-element set $H^n$, the unique maximum-weight perfect
matching between $N$ and $H^n$ assigns $h_i$ to agent $i$ for
every $i\in N$. To see this, suppose $i<j$, agent $i$ receives
$h_b$, agent $j$ receives $h_a$, and $a<b$. Since $x_a<x_b$,
swapping these two houses changes the welfare by
\[
(i x_a+jx_b)-(i x_b+jx_a)=(j-i)(x_b-x_a)>0.
\]
Hence, any maximum-weight perfect matching has no such pair. Since
$x_1,\dots,x_n$ are strictly increasing, the unique maximum-weight
perfect matching is $\{(i,h_i):i\in N\}$.

We first show that, after round $n$, algorithm $A$ must have $a_i^n=h_i$ for every $i \in N$.
Suppose not. After observing the round-$n$ allocation,
the adversary releases one more house $h_{n+1}$ with score
$-1$. Choose $M$ such that $M>x_n+\mu(H^n)$, and set $u_i(h_{n+1})=iM$ for every $i \in N$.
For this continuation, the final top $n$ houses are exactly $B_{n+1}=H^n$.

If $h_{n+1}$ is assigned at the end, let $S$ be the set of houses assigned
in $\A^{n+1}$. Since $\A^{n+1}$ is complete, $|S|=n$, and
$h_{n+1}\in S$. Write the qualities of the houses in $S$ as $q_1\le \cdots \le q_{n-1}\le q_n$.
Then $q_n=M$ and $q_{n-1}\le x_n$. By Lemma~\ref{lem:common-quality-support}, applied with
$\alpha_r=r$, we have
\begin{equation*}
\mu(S) =\sum_{r=1}^{n-1} r^2(q_{r+1}-q_r) \ge (n-1)^2(M-q_{n-1}) \ge M-x_n > \mu(H^n)=\mu(B_{n+1}).
\end{equation*}
Since $\A^{n+1}$ is complete and $\mathrm{supp}(\A^{n+1})=S$, the definition
of $\mu$ gives us
\[
\mathrm{sub}(\A^{n+1})\ge \mu(S)>\mu(H^n)=\mu(B_{n+1}),
\]
contradicting the assumed guarantee.

If $h_{n+1}\notin \mathrm{supp}(\A^{n+1})$, then $\A^{n+1}=\A^n$.
For $k=0$, this follows from Definition~\ref{def:k-flexibility}: after round $n$ no
agent is unassigned, so a non-rejection $0$-flexible update is
impossible. For $k\ge 1$, every non-rejection chain update assigns
$h_{n+1}$ to the first agent in the chain by Definition~\ref{def:chain-update}. Thus, if
$h_{n+1}$ is not assigned, the update must be a rejection.

By the supposition that $a_i^n\ne h_i$ for some $i$, we have $E(\A^n)\neq \{(i,h_i):i\in N\}$.
Since $\{(i,h_i):i\in N\}$ is the unique maximum-weight perfect
matching between $N$ and $H^n$, $E(\A^n)$ is not a
maximum-weight perfect matching between $N$ and $H^n$. By
Lemma~\ref{lem:fixed-support}, $\A^n$ is not envy-freeable. Hence $\mathrm{sub}(\A^{n+1})=\mathrm{sub}(\A^n)=\infty$,
contradicting $\mathrm{sub}(\A^{n+1})=\mu(B_{n+1})$.

Therefore, after the first $n$ rounds, $a_i^n=h_i$ for every $i\in N$.

Alternatively, after the same first $n$ rounds, the adaptive
adversary may release a different final house $h_{n+1}$ with score $2$, and set $u_i(h_{n+1})=i x_{n+1}$ for every $i \in N$.

The utilities and scores observed during the first $n$ rounds are
the same as before. Since $A$ is deterministic, its round-$n$
allocation is still $a_i^n=h_i$ for every $i\in N$.
For this continuation, the final top $n$ houses are $B_{n+1}=\{h_2,h_3,\dots,h_{n+1}\}$.
Their qualities are $1,\,1+\delta,\,\dots,\,1+(n-1)\delta$.
By Lemma~\ref{lem:common-quality-support},
\[
\mu(B_{n+1})
=
\delta \sum_{r=1}^{n-1} r^2
<1.
\]

Now let $S\subseteq H^{n+1}$ be any $n$-element set with
$S\neq B_{n+1}$. Since $B_{n+1}=H^{n+1}\setminus\{h_1\}$, the set
$S$ contains $h_1$. If $q_1\le \cdots \le q_n$
are the qualities of the houses in $S$, then $q_1=0$ and $q_2\ge 1$.
By Lemma~\ref{lem:common-quality-support}, applied with $\alpha_r=r$, we have
\[
\mu(S)=\sum_{r=1}^{n-1} r^2(q_{r+1}-q_r)
      \ge q_2-q_1
      \ge 1.
\]
Hence, if a complete allocation $a$ satisfies
$\mathrm{supp}(a)\ne B_{n+1}$, then, by the definition of $\mu$,
\[
\mathrm{sub}(a)\ge \mu(\mathrm{supp}(a))\ge 1>\mu(B_{n+1}).
\]
Therefore, any complete allocation $a$ with $\mathrm{sub}(a)=\mu(B_{n+1})$
must satisfy $\mathrm{supp}(a)=B_{n+1}$. Since this value is finite, $a$ is
envy-freeable. By Lemma~\ref{lem:fixed-support}, $E(a)$ must be a maximum-weight
perfect matching between $N$ and $B_{n+1}$. Applying the same
exchange argument to the strictly increasing qualities
$x_2<\cdots<x_{n+1}$, this matching is unique: $E(\A)=\{(i,h_{i+1}):i\in N\}$.
Equivalently, $a_i=h_{i+1}$ for every $i\in N$.

Starting from $a_i^n=h_i$ for every $i\in N$, the allocation
$a_i^{n+1}=h_{i+1}$ for every $i\in N$ cannot be reached by
rejecting $h_{n+1}$. If $k=0$, Definition~\ref{def:k-flexibility} does not allow a
non-rejection update, because all agents are already assigned after
round $n$. Suppose $k\ge 1$. In any chain update reaching this
allocation, Definition~\ref{def:chain-update} forces agent $n$ to be first, because only
the first agent in the chain can receive $h_{n+1}$. It then forces
agent $n-1$ to be second, because $h_n$ was held by agent $n$
before the update. Continuing in the same way, the chain must be $n,n-1,\dots,1$,
and $h_1$ is discarded at the end. Therefore, the required
non-rejection update has length $n$.

Since $k<n$, algorithm $A$ cannot perform this update at
round $n+1$. Therefore, its final allocation cannot satisfy $\mathrm{sub}(\A^{n+1})=\mu(B_{n+1})$,
contradicting the assumed guarantee. Thus, no deterministic valid
$k$-flexible online algorithm with $k<n$ can ensure
$\mathrm{sub}(\A^m)=\mu(B_m)$ for every score prediction,
even when $m=n+1$.
\end{proof}

\subsection{Proof of Theorem~\ref{thm:wrong-prediction-loss}}

\wrongpredictionloss*

\begin{proof}
We first prove the cutoff-prediction case. Use identical utilities
$u_i(h)=q_h$ for every $i\in N$ and every house $h$; equivalently,
in the common-quality notation, $\alpha_i=1$ and $\beta_i=0$ for
every $i$. Thus, qualities and common values coincide. The
score-prediction case is handled at the end. The proof constructs a common prefix after which the adaptive adversary
chooses between two continuations: one in which the prediction is
correct and forces a specific support, and one in which the same
prediction becomes incorrect.

Fix $\gamma>0$. Choose $\eta>0$ such that
\[
\eta <
\min\left\{
\frac{1}{n-1},
\frac{2}{(n-1)(n+2)},
\frac{2\gamma}{(n-1)(3n-2)}
\right\}.
\]
Then
\begin{equation*}
(n - 1)(1 - (n - 1)\eta) -
\frac{\eta n(n - 1)}{2} = n-1-\frac{(n-1)(3n-2)}{2}\eta > n-1-\gamma, \quad 1-(n-1)\eta>0, \quad \text{and}
\end{equation*}
\[
\frac{\eta n(n-1)}{2}<1-(n-1)\eta.
\] 
Then choose $0<\varepsilon<
\min \{
\frac{\eta n}{2},
\frac{1-(n-1)\eta}{n}
\}$,
Thus
\[
(n - 1)\varepsilon <
\frac{\eta n(n - 1)}{2}
\quad\text{and}\quad
n\varepsilon < 1 - (n - 1)\eta.
\]
Give the algorithm the cutoff prediction $\tau=\varepsilon$.

The adaptive adversary first releases a common prefix consisting of
$n-1$ houses of quality $0$, followed by $n$ houses with qualities $1,\,1-\eta,\,1-2\eta,\,\dots,\,1-(n-1)\eta$.

Let $t_0\coloneq2n-1$. By validity and the convention in Section~\ref{sec:recourse}, $|\mathrm{supp}(\A^{t_0})|=n$.

After observing the allocation produced by $A$ on the common prefix,
the adaptive adversary may continue by releasing one additional house
of quality $\varepsilon$. Let $\I^+$ denote the resulting instance. Let $S_L$ be the $n$-house
set consisting of the $n-1$ zero-quality houses and the
$\varepsilon$-quality house. Let $S_H$ be the $n$-house set consisting
of the $n$ high-quality houses. By the identical-utilities formula $\mu(S)=\sum_{r=1}^n (y_n-y_r)$,
where $y_1\le \cdots \le y_n$ are the common values in $S$, we have $\mu(S_L)=(n-1)\varepsilon$
and $\mu(S_H)=\eta\sum_{r=0}^{n-1}r
=\frac{\eta n(n-1)}{2}$.
By the choice of $\varepsilon$, we have $\mu(S_L) < \mu(S_H)$.

Now, consider any $n$-house set $S$ different from both $S_L$
and $S_H$. Since $S\neq S_L$ and $S$ has size $n$, it cannot
consist only of the $n$ houses of quality at most $\varepsilon$. Hence
it contains at least one high-quality house. Similarly, since
$S\neq S_H$, it contains at least one house of quality at most
$\varepsilon$. Therefore, $S$ contains one house of quality at most
$\varepsilon$ and one house of quality at least $1-(n-1)\eta$. If the common values in $S$ are
$y_1 \le \cdots \le y_n$, then by \eqref{eqn:mu_1},
\[
\mu(S)=\sum_{r=1}^n (y_n-y_r) \ge y_n-y_1
        \ge 1-(n-1)\eta-\varepsilon .
\]
The inequality $n\varepsilon<1-(n-1)\eta$ gives us
\[
1-(n-1)\eta-\varepsilon > (n-1)\varepsilon=\mu(S_L).
\]
Hence, every $n$-house set other than $S_L$ has larger
$\mu$-value than $S_L$, so $S_L$ is the unique
$n$-house set minimizing $\mu$ in the continued instance.

Moreover, $C(\tau)=S_L$, and the largest quality in $C(\tau)$ is
exactly $\tau=\varepsilon$. Thus, $\tau$ is correct for the continued
instance. By exactness, if $\A^+$ is the final allocation returned by
$A$ on the continued instance $\I^+$, then $\mathrm{sub}(\A^+) = \mathrm{OPT}(\I^+)$.
Since $A$ is valid, $\A^+$ is complete. Hence
$\mathrm{supp}(\A^+)$ is an $n$-element set of houses.
By the definition of $\mu$, $\mu(\mathrm{supp}(\A^+)) \le \mathrm{sub}(\A^+)$.
Also, since $\mathrm{OPT}(\I^+)$ is the minimum subsidy
over all complete allocations, $\mathrm{OPT}(\I^+) \le \mu(\mathrm{supp}(\A^+))$.
Therefore
\[
\mathrm{OPT}(\I^+) \le
\mu(\mathrm{supp}(\A^+)) \le
\mathrm{sub}(\A^+) =
\mathrm{OPT}(\I^+),
\]
so $\mu(\mathrm{supp}(\A^+))=\mathrm{OPT}(\I^+)$.
Since $S_L$ is the unique $n$-house set minimizing $\mu$,
we get $\mathrm{supp}(\A^+) = S_L$.

Thus, all $n-1$ zero-quality houses are assigned at the end
of the continued instance. By Section~\ref{sec:recourse}, $D^t = H^t \setminus \mathrm{supp}(\A^t)$ and  $D^t \subseteq D^{t'}$ for all $t' \ge t$.
Therefore, in the run on $\I^+$, if one of the zero-quality houses
were not in $\mathrm{supp}(\A^{t_0})$ after the common prefix,
then it would belong to $D^{t_0}$ and could not be assigned later.
Since all zero-quality houses belong to $\mathrm{supp}(\A^+)$,
all $n-1$ zero-quality houses must already belong to
$\mathrm{supp}(\A^{t_0})$.

Alternatively, the adaptive adversary may terminate the sequence after
the common prefix. Let $\I$ denote this stopped instance, with
the same cutoff prediction $\tau=\varepsilon$. The algorithm has seen
the same prefix and the same prediction, so by determinism it has the
same allocation after round $t_0$. Therefore, all $n-1$ zero-quality
houses belong to its final assigned set. Since the allocation is complete, exactly one high-quality
house is also assigned, and that house has quality at least
$1-(n-1)\eta$. The assigned common values therefore consist of $n-1$ zeros
and one value $y\ge 1-(n-1)\eta$. Let $S=\mathrm{supp}(A(\I,\tau))$.
By the definition of $\mu$, $\mathrm{sub}(A(\I,\tau))\ge \mu(S)$.
Using~\eqref{eqn:mu_1}, we get that
\[
\mu(S)=(n-1)y\ge (n-1)(1-(n-1)\eta).
\]
Consequently,
\[
\mathrm{sub}(A(\I,\tau))\ge (n-1)(1-(n-1)\eta).
\]

For the stopped instance $\I$, the $n$ high-quality houses
form the set $S_H$, and $\mu(S_H)=\frac{\eta n(n-1)}{2}$
Any other $n$-house set contains at least one zero-quality
house. Since there are only $n-1$ zero-quality houses, it also
contains at least one high-quality house. Hence, by~\eqref{eqn:mu_1}, its
$\mu$-value is at least $1-(n-1)\eta$, which is larger than $\eta n(n-1)/2$ by the choice of $\eta$.
Thus, $S_H$ is the unique $n$-house set minimizing $\mu$ in
the stopped instance, and
\[
\mathrm{OPT}(\I)=\mu(S_H)=\frac{\eta n(n-1)}{2}.
\]
Also, since $n\varepsilon < 1-(n-1)\eta$, we have $\varepsilon < 1-(n-1)\eta$.
Thus, every high-quality house has quality greater than $\tau=\varepsilon$.
The stopped instance therefore contains only the $n-1$ zero-quality
houses with quality at most $\tau$. Thus
$C(\tau)$ is not defined for the stopped instance, so the cutoff
prediction $\tau$ is incorrect. Combining the bounds gives
\begin{equation*}
    \mathrm{sub}(A(\I,\tau))-\mathrm{OPT}(\I) \ge
(n-1)(1-(n-1)\eta)-\frac{\eta n(n-1)}{2}\ge n-1-\gamma.
\end{equation*}

For score predictions, use the same arrival sequence. Assign
score $1$ to each zero-quality house and score $0$ to each
high-quality house. In the continued instance, assign score $1$
to the final house of quality $\varepsilon$. Then the final top
$n$ houses in the continued instance are exactly $S_L$:
there are exactly $n$ houses with score $1$. Since $S_L$ is
the unique $n$-house set minimizing $\mu$, the score prediction
is correct on the continued instance. By exactness and the same
argument used above, $\mathrm{supp}(\A^+) = S_L$.
The discarded-house rule then implies that all $n-1$
zero-quality houses belong to $\mathrm{supp}(\A^{t_0})$
after the common prefix.

Now stop the instance after the common prefix, with the same scores
for all houses that have arrived. Since score predictions are observed
when houses arrive, the algorithm receives exactly the same information
through round $t_0$ in the stopped and continued runs. By determinism,
it therefore has the same allocation after round $t_0$. Hence, its
final assigned houses contain all $n-1$ zero-quality houses and
one high-quality house. In the stopped instance, the final top
$n$ houses under the score prediction consist of the $n-1$
zero-quality houses and one high-quality house, whereas the
unique $n$-house set minimizing $\mu$ is $S_H$. Thus, the final top $n$ houses under the score prediction do not form
the unique $n$-house set minimizing $\mu$, so the score prediction is
incorrect. The stopped score-prediction instance has the same house values as
the stopped cutoff-prediction instance. The algorithm's final assigned
set again contains all $n-1$ zero-quality houses and one high-quality
house, while
\[
\OPT(\I)=\mu(S_H)=\frac{\eta n(n-1)}{2}.
\]
Thus, the same lower bound gives
\begin{equation*}
    \mathrm{sub}(A(\I,p))-\OPT(\I) \ge
(n-1)(1-(n-1)\eta)-\frac{\eta n(n-1)}{2} \ge n-1-\gamma.
\end{equation*}
This proves the theorem.

\end{proof}

\subsection{A Small Underestimate Can Be Costly}
\label{sec:underestimate-costly}

\begin{proposition}\label{prop:cutoff-underestimate}
Fix $n\ge 2$. For every $\bar{\varepsilon}>0$ and every $\gamma>0$, there is a normalized identical-utilities instance with a correct cutoff $\tau^*$ and a prediction $\tau<\tau^*$ such that $0<\tau^*-\tau\le \bar{\varepsilon}$ and the cutoff algorithm returns an allocation $\A^m$ with $\mathrm{sub}(\A^m)-\mathrm{OPT}(\I)\ge n-1-\gamma$.
\end{proposition}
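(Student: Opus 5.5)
We construct an identical-utilities instance with $n+1$ houses. Among them is a block of $n$ tightly clustered high-value houses whose maximum quality equals the correct cutoff $\tau^*$. A prediction $\tau$ slightly below $\tau^*$ then excludes exactly the top house of this block. As a result, $C(\tau)$ contains $n-1$ houses of the block together with a low house, and this mixed support costs nearly $n-1$. Fix $\gamma,\bar\varepsilon>0$ and choose a small $\eta>0$ with $\eta\le\bar\varepsilon$ and $\eta n(n-1)/2+(n-1)\eta<\gamma$. Take utilities $u_i(h)=q_h$, so that $\alpha_i=1$ and $\beta_i=0$. The houses are one house of quality $0$, together with $n$ houses of qualities $1-(n-1)\eta,\dots,1-\eta,1$. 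Any arrival order will do, since the cutoff algorithm's final support $B_m=C(\tau)$ does not depend on the order.

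First, we verify that $\tau^*=1$ is correct. The high block $S_H$ has $\mu(S_H)=\eta n(n-1)/2$ by \eqref{eqn:mu_1}. Any other $n$-house set contains the zero house and at least one high house, so by \eqref{eqn:mu_1} its $\mu$-value is at least $1-(n-1)\eta$. For $\eta$ small this exceeds $\mu(S_H)$, so $\OPT(\I)=\mu(S_H)$. Moreover, $C(1)$ consists of the $n$ highest-quality houses with $q_h\le 1$, which is exactly $S_H$, and its largest quality is $1=\tau^*$. Hence $\tau^*$ is correct.

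Next, we set the prediction $\tau\coloneq 1-\eta/2$, so that $0<\tau^*-\tau=\eta/2\le\bar\varepsilon$. The houses with quality at most $\tau$ are the zero house and the $n-1$ high houses other than the top one. This is exactly $n$ houses, so $C(\tau)$ is defined and consists of all of them. By the score-order argument in the proof of Theorem~\ref{thm:predicted-cutoff}, $B_m=C(\tau)$. Theorem~\ref{thm:score-prediction} then gives $\mathrm{sub}(\A^m)=\mu(C(\tau))$.

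It remains to bound $\mu(C(\tau))$ from below. The sorted values of $C(\tau)$ are $0$ followed by $n-1$ values that are all at least $1-(n-1)\eta$, with the largest equal to $1-\eta$. By \eqref{eqn:mu_1},
\[
\mu(C(\tau))\ge 1-\eta-0+(n-2)\cdot 0\ \ \text{(gap to the zero house)},
\]
but this crude bound is too weak. The correct computation is
\[
\mu(C(\tau))=\sum_{r}(y_n-y_r)\ge (1-\eta).
\]
This is only $1$, not $n-1$, so this design fails for $n\ge3$. The fix is to use $n-1$ zero houses and a single high house. Concretely, the instance has $n-1$ houses of quality $0$, one house of quality $1-\eta$, and one house of quality $1$.

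We then re-check correctness in the revised instance. Every $n$-set is now mixed, so we must identify which support is optimal. Omitting the house of quality $1$ gives cost $(n-1)(1-\eta)$. Omitting the house of quality $1-\eta$ gives cost $n-1$. Omitting a zero house gives $(n-2)\cdot1+\eta$. The last is smallest, so $\OPT(\I)=n-2+\eta$. This leaves a gap of only about $1$, which is again too weak.

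The main obstacle is therefore the design itself: we need a gap of $n-1$, with the optimal support costing nearly $0$ and the predicted support costing nearly $n-1$. The intended structure mirrors Theorem~\ref{thm:wrong-prediction-loss}. The optimum should be a tight high block whose top house is just above $\tau$, while the set of houses of quality at most $\tau$ should consist of $n-1$ near-zero houses plus one high house. This would be achieved by a tight high block with $n-1$ near-zero houses and exactly one high house below $\tau$. However, such a high block necessarily has $n$ houses, and all but its top lie below $\tau$, which is the tension we keep running into.

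The remedy I would try is to place $\tau$ inside a gap created by equal qualities and to use the tie-breaking rule $\prec$. Concretely, all $n$ high houses get quality exactly $1$, and we set $\tau^*=1$. Then $\tau<1$ excludes all of them, which makes $C(\tau)$ undefined unless at least $n$ low houses exist. So we also add $n$ low houses: $n-1$ of quality $0$ and one of quality $1-\bar\varepsilon/2$, with $\tau=1-\bar\varepsilon/2$. In this instance $\OPT=0$ is attained by the block of equal high houses. Meanwhile $C(\tau)$ is the set of $n-1$ zeros together with the house of quality $1-\bar\varepsilon/2$, which costs $(n-1)(1-\bar\varepsilon/2)$. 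Choosing $\bar\varepsilon$ small relative to $\gamma$ then gives a loss of at least $n-1-\gamma$. The checks that remain are the routine ones: confirm that $\tau^*=1$ is correct (its $C$ is the high block, with maximum quality $1$) and apply Theorem~\ref{thm:score-prediction}.
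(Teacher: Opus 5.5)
Your final construction is correct and is essentially the paper's: $n-1$ zero houses, one house placed exactly at the predicted cutoff just below $\tau^*=1$, and an $n$-house top block. The paper spreads that block as $1-(n-1)\delta,\dots,1-\delta,1$, which requires a short case analysis to show the block is the unique minimizer. Your all-equal block gives $\OPT(\I)=0$ and makes correctness of $\tau^*=1$ immediate; ties are harmless because exactly $n$ houses have quality $1$.

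One repair is needed. $\bar\varepsilon$ is given, so you cannot ``choose it small.'' Instead, place the middle house and $\tau$ at $1-\varepsilon/2$ with $\varepsilon=\min\{\bar\varepsilon,1,2\gamma/(n-1)\}$, as the paper does with its own $\varepsilon<\bar\varepsilon$. This also prevents the literal value $1-\bar\varepsilon/2$ from being nonpositive when $\bar\varepsilon\ge 2$. Drop the two abandoned attempts from the write-up.
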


\begin{proof}
Choose $0<\varepsilon<
\min\{
\bar\varepsilon,\,
\frac12,\,
\frac{\gamma}{2(n-1)}
\}$.
Then choose $\delta>0$ such that
\[
n\delta<\varepsilon
\quad\text{and}\quad
\frac{\delta n(n-1)}{2}
<
\min\left\{
\varepsilon,\,
1-\varepsilon,\,
\frac{\gamma}{2}
\right\}.
\]
Consider identical utilities with the following house qualities:
\[
\underbrace{0,\dots,0}_{n-1\text{ houses}},\quad
1-\varepsilon,\quad
1-(n-1)\delta,\;1-(n-2)\delta,\;\dots,\;1-\delta,\;1.
\]
Let $\tau^*=1$ and $\tau=1-\varepsilon$.

Let $S_H$ denote the $n$ highest-quality houses: $1-(n-1)\delta,\;1-(n-2)\delta,\;\dots,\;1-\delta,\;1$.
By Lemma~\ref{lem:common-quality-support},
\[
\mu(S_H)=\frac{\delta n(n-1)}{2}.
\]

Now let $S$ be any other $n$-house set. If $S$ contains a zero-quality house, then it also contains at least one
positive-quality house, whose quality is at least $1-\varepsilon$. Hence,
by \eqref{eqn:mu_1}, its subsidy is at least $1-\varepsilon>\mu(S_H)$.

Otherwise, $S$ contains no zero-quality house. Since
$S\neq S_H$, the set $S$ contains the house of quality
$1-\varepsilon$ and omits one house from $S_H$.

First suppose that the omitted house has quality $q<1$.
Then the quality-$1$ house remains in $S$, so the maximum
quality is unchanged. Using $\mu(T)=n\max_{h\in T}u(h)-\sum_{h\in T}u(h)$
for identical utilities, we obtain
\[
\mu(S)-\mu(S_H)
=q-(1-\varepsilon)
\ge
\varepsilon-(n-1)\delta
>0.
\]

Now suppose that the omitted house has quality $1$.
Then the maximum quality in $S$ is $1-\delta$. A direct
calculation gives
\[
\mu(S)-\mu(S_H)
=
\varepsilon-n\delta
>0.
\]

Therefore, every $n$-house set $S\neq S_H$ has
$\mu(S)>\mu(S_H)$. Hence $S_H$ is the unique
minimum-subsidy $n$-house set. In particular,
$\tau^*=1$ is correct and $\OPT(\I)=\frac{\delta n(n-1)}{2}$.

With prediction $\tau=1-\varepsilon$, the cutoff algorithm keeps the house of quality $1-\varepsilon$ and the $n-1$ zero-quality houses. Its subsidy is $(n-1)(1-\varepsilon)$.
Therefore
\begin{equation*}
\mathrm{sub}(\A^m)-\mathrm{OPT}(\I) =
(n-1)(1-\varepsilon)
-\frac{\delta n(n-1)}{2} >
n-1-\frac{\gamma}{2}-\frac{\gamma}{2}
=
n-1-\gamma. \qedhere
\end{equation*}
\end{proof}

\end{document}